%
\documentclass[runningheads,orivec,colorlinks=true]{llncs}
%
\usepackage[T1]{fontenc}
%
\newcommand{\macrospath}{./macros}

\usepackage{ifthen}
 \usepackage{xspace}
 

\newboolean{talk}
\setboolean{talk}{false}
\newboolean{paper}
\setboolean{paper}{false}


\newboolean{IEEEstyle}
\setboolean{IEEEstyle}{false}
\newboolean{lipicsstyle}
\setboolean{lipicsstyle}{false}
\newboolean{eptcsstyle}
\setboolean{eptcsstyle}{false}
\newboolean{entcsstyle}
\setboolean{entcsstyle}{false}
\newboolean{sigplanstyle}
\setboolean{sigplanstyle}{false}
\newboolean{easychairstyle}
\setboolean{easychairstyle}{false}
\newboolean{scrartclstyle}
\setboolean{scrartclstyle}{false}
\newboolean{lncsstyle}
\setboolean{lncsstyle}{false}
\newboolean{acmartstyle}
\setboolean{acmartstyle}{false}

\newboolean{needstheorems}
\setboolean{needstheorems}{false}
\newboolean{withimages}
\setboolean{withimages}{false}
\newboolean{withproofs}
\setboolean{withproofs}{true}
\newboolean{complexity}
\setboolean{complexity}{false}

\newboolean{techr}
\setboolean{techr}{false}


\newboolean{french}
\setboolean{french}{false}

\setboolean{lncsstyle}{true}
\setboolean{withproofs}{true}
\ifthenelse{\boolean{talk}}{}{\usepackage[utf8]{inputenc}}
\ifthenelse{\boolean{french}}{\usepackage[french]{babel}}{
  \ifthenelse{\boolean{lipicsstyle}}{}{\usepackage[english]{babel}}}
\ifthenelse{\boolean{acmartstyle}}{}{
\usepackage{amssymb}	
}
\usepackage{amsmath}
\usepackage{amsfonts}
\usepackage{graphicx}
\usepackage{bussproofs}
\usepackage{cmll}
\usepackage{stmaryrd}
 \usepackage{multirow}
\usepackage[dvipsnames]{xcolor}
 \ifthenelse{\boolean{IEEEstyle}}{
 \usepackage[bookmarks={false}]{hyperref}}{
 \usepackage{hyperref}}
 \usepackage{fancybox}
 \usepackage{hhline}
\usepackage{nameref}
\usepackage[basic]{complexity}

\usepackage[normalem]{ulem}
\usepackage{proof}
\usepackage{mathtools}
\ifthenelse{\boolean{lipicsstyle}}{}{\usepackage{tabls}}
\usepackage{cleveref}

\newboolean{appendix}
\setboolean{appendix}{false}
\usepackage{appendix}
\capturecounter{theorem}
\capturecounter{proposition}
\capturecounter{lemma}
\usepackage{marginnote}

\newcommand{\NoteProof}[1]{
	\ifthenelse{\boolean{withproofs}}{\ifthenelse{\boolean{appendix}}{
	\marginnote{Originally at p. \pageref{#1}}
	}{
	\marginnote{{Proof at p.\,{\pageref{app:#1}}}}
	}
	}{}
}

\newcommand{\applabel}[1]{$\phantomsection\label{app:#1}$}

\ifthenelse{\boolean{IEEEstyle}}{
\setboolean{needstheorems}{true}}{}

\ifthenelse{\boolean{eptcsstyle}}{
\setboolean{needstheorems}{true}}{}

\ifthenelse{\boolean{scrartclstyle}}{
\setboolean{needstheorems}{true}}{}

\ifthenelse{\boolean{sigplanstyle}}{
\setboolean{needstheorems}{true}}{}

\ifthenelse{\boolean{easychairstyle}}{
\setboolean{needstheorems}{true}}{}

\ifthenelse{\boolean{lipicsstyle}}{
    }{}


\ifthenelse{\boolean{needstheorems}}{
\input{\macrospath/ben-macros-thm-environments}}{}

\newcommand{\myproof}[1]{
\ifthenelse{\boolean{withproofs}}{#1}{}
}



\newcommand{\la}[1]{\lambda #1.}

\newcommand{\tm}{t}
\newcommand{\tmtwo}{s}
\newcommand{\tmthree}{u}
\newcommand{\tmfour}{r}
\newcommand{\tmfive}{p}
\newcommand{\tmsix}{q}

\newcommand{\var}{x}
\newcommand{\vartwo}{y}
\newcommand{\varthree}{z}
\newcommand{\varfour}{w}

\newcommand{\dom}[1]{\symfont{dom}(#1)}


\newcommand{\rootRew}[1]{\mapsto_{#1}}
\newcommand{\Rew}[1]{\rightarrow_{#1}}

\renewcommand{\to}{\Rew{}}
\newcommand{\tob}{\Rew{\beta}}
\newcommand{\rtob}{\rootRew{\beta}}










\newcommand{\sn}[1]{\mathrm{SN}_{#1}}


\newcommand{\symfont}[1]{\mathtt{#1}}
\newcommand{\ssym}{\symfont{s}}
\newcommand{\vssym}{\symfont{vs}}

\newcommand{\db}{\symfont{dB}}


\newcommand{\cbv}{{CbV}\xspace}

\newcommand{\val}{v}
\newcommand{\valtwo}{\val'}



\newcommand{\ctxholep}[1]{\langle #1\rangle}
\newcommand{\ctxhole}{\ctxholep{\cdot}}

\makeatletter
\newsavebox{\@brx}
\newcommand{\llangle}[1][]{\savebox{\@brx}{\(\m@th{#1\langle}\)}%
  \mathopen{\copy\@brx\kern-0.7\wd\@brx\usebox{\@brx}}}
\newcommand{\rrangle}[1][]{\savebox{\@brx}{\(\m@th{#1\rangle}\)}%
  \mathclose{\copy\@brx\kern-0.7\wd\@brx\usebox{\@brx}}}
\makeatother

\newcommand{\ctxholefp}[1]{\llangle #1\rrangle}
\newcommand{\ctxholef}{\ctxholefp{\cdot}}

\newcommand{\ctx}{C}
\newcommand{\ctxtwo}{\ctx'}

\newcommand{\ctxp}[1]{\ctx\ctxholep{#1}}


\newcommand{\evctx}{E}

\newcommand{\evctxp}[1]{\evctx\ctxholep{#1}}



\newcommand{\nbvctxtwo}[1]{\nbvctxtwo{#1}}

\newcommand{\sctx}{L}

\newcommand{\sctxp}[1]{\sctx\ctxholep{#1}}


\newcommand{\defeq}{:=}

\newcommand{\grameq}{::=}

\newcommand{\esub}[2]{[#1{\shortleftarrow}#2]}
\newcommand{\isub}[2]{\{#1{\shortleftarrow}#2\}}
\newcommand{\subi}[2]{\{#1{\shortrightarrow}#2\}}

\ifthenelse{\boolean{complexity}}{ }{
  
  }

\newcommand{\letexp}[3]{\letexpsym\ #1=#2\ \symfont{in}\ #3}


\newcommand{\rtodb}{\rootRew{\db}}
\newcommand{\rtos}{\rootRew{\ssym}}
\newcommand{\tos}{\Rew{\ssym}}
\newcommand{\todbp}[1]{\Rew{\db#1}}
\newcommand{\todb}{\todbp{}}





\newcommand{\llbrace}{\{ \kern -0.27em \vert}
\newcommand{\rrbrace}{\vert \kern -0.27em \}}



\renewcommand{\l}{\lambda}
\newcommand{\ie}{i.e.\xspace}
\newcommand{\eg}{e.g.\xspace}
\newcommand{\ih}{{\textit{i.h.}}\xspace}
\newcommand{\fv}[1]{\symfont{fv}(#1)}

\ifthenelse{\boolean{entcsstyle}}{}{
	}



\newcommand{\custom}[1]{{\color{Emerald} {#1}}}
\newcommand{\red}[1]{{\color{red} {#1}}}

\definecolor{dgreen}{rgb}{0.0, 0.5, 0.0}

\newcommand{\ignore}[1]{}

\newcommand{\colspace}{@{\hspace{.5cm}}}
\newcommand{\hcolspace}{@{\hspace{.25cm}}}
\newcommand{\myinput}[1]{\ifthenelse{\boolean{withimages}}{\input{#1}}{}}
\newcommand{\inputproof}[1]{\ifthenelse{\boolean{withproofs}}{\input{#1}}{}}


\newcommand{\reflemma}[1]{Lemma~\ref{l:#1}}
\newcommand{\reflemmap}[2]{Lemma~\ref{l:#1}.\ref{p:#1-#2}}


\newcommand{\reflemmaeq}[1]{{L.\ref{l:#1}}}

\newcommand{\reftm}[1]{Theorem~\ref{tm:#1}}

\newcommand{\refprop}[1]{Proposition~\ref{prop:#1}}
\newcommand{\refsect}[1]{Sect.~\ref{sect:#1}}

\ifthenelse{\boolean{easychairstyle}}{}{
  \ifthenelse{\boolean{lncsstyle}}{
  	\newcommand{\refequa}[1]{(\ref{eq:#1})}
	}{	
	
  }
}
\newcommand{\reffig}[1]{Fig.~\ref{fig:#1}}

\newcommand{\refcorop}[2]{Corollary~\ref{coro:#1}.\ref{p:#1-#2}}

\newcommand{\refrem}[1]{Remark~\ref{rem:#1}}







\newcommand{\form}{A}
\newcommand{\formtwo}{B}
\newcommand{\formthree}{C}
\newcommand{\formfour}{D}

\newcommand{\aform}{X}

\newcommand{\multiForm}{\Gamma}
\newcommand{\multiFormtwo}{\Delta}
\newcommand{\multiFormthree}{\Pi}
\newcommand{\multiform}{\multiForm}
\newcommand{\multiformtwo}{\multiFormtwo}
\newcommand{\multiformthree}{\multiFormthree}




\newcommand{\exch}{\symfont{exch}}

\newcommand{\ax}{\symfont{ax}}
\newcommand{\cut}{\symfont{cut}}


\newcommand{\lolli}{\multimap}








\newcommand{\rightRuleSym}{r}
\newcommand{\leftRuleSym}{l}

\newcommand{\implyRightRule}{\imply_\rightRuleSym}
\newcommand{\implyLeftRule}{\imply_\leftRuleSym}


\newcommand{\set}[1]{\{#1\}}

\newcommand{\nat}{\mathbb{N}}

\newcommand{\settone}{\mathcal{S}}








\newcommand{\tosc}{\Rew{\symfont{SC}}}




\newcommand{\betav}{\beta_\val}

\newcommand{\tobv}{\Rew{\betav}}

\newcommand{\towbv}{\Rew{w\betav}}

\newcommand{\rtobv}{\rootRew{\betav}}



\newcommand{\cbn}{CbN\xspace}







\newcommand{\snsubst}{\sn{\to}}

\newcommand{\redc}[1]{\llbracket#1\rrbracket}

\newcommand{\size}[1]{|#1|}











\ifthenelse{\boolean{acmartstyle}}{
	
	}{
	
}





















\newcommand{\withproofs}[1]{\ifthenelse{\boolean{withproofs}}{#1}{}}

\newcommand{\withoutproofs}[1]{\ifthenelse{\boolean{withproofs}}{}{#1}}

\ifthenelse{\boolean{entcsstyle}}{}{
  \ifthenelse{\boolean{lncsstyle}}{}{	
    
  }
 }



\newcounter{numberone}

\newcommand{\med}[2]{
($(#1)!.5!(#2)$)
}






\newcommand{\gregoire}{Gr{\'{e}}goire\xspace}








%
%

\ifthenelse{\boolean{acmartstyle}}{}{
	
}


%
%

\newcommand{\typctx}{\Gamma}
\newcommand{\typctxtwo}{\Delta}

\newcommand{\tderiv}{\pi}
\newcommand{\tderivtwo}{\rho}










\newcommand{\hastype}{\,{:}\,}

\makeatletter
\newcommand{\exder}{%
  \def\exderW[##1]{\triangleright_{##1}\ }%
  \def\exderWO{\triangleright\ }%
  \@ifnextchar[\exderW\exderWO%
  }
\makeatother

\renewcommand{\exder}{\,\triangleright}

\newcommand\Deribbase[5]{{#3}\ {\pmb\vdash}_{#2}^{#1} {#4}\  {:}\  {#5}}

\makeatletter

\newcommand{\Deribase}[1]{%
  \def\DeribW[##1]{\Deribbase{##1}{#1}}%
  \def\DeribWO{\Deribbase{}{#1}}%
  \@ifnextchar[\DeribW\DeribWO%
  }

  \newcommand{\Deri}{%
  \def\DeriW_##1{\Deribase{##1}}%
  \def\DeriWO{\Deribase{}}%
  \@ifnextchar_\DeriW\DeriWO%
  }
\makeatother

\newcommand{\app}{\symfont{app}}

\makeatletter
\newcommand{\appresult}{%
  \def\appresultW<##1>{\app_\result^{##1}}%
  \def\appresultWO{\app_\result}%
  \@ifnextchar<\appresultW\appresultWO%
  }
\makeatother

\newcommand\mydots{\hbox to .6em{.\hss.}}

\newcommand{\pof}{\;\triangleright}

\newcommand{\rtovs}{\rootRew{\vssym}}
\newcommand{\tovs}{\Rew{\vssym}}

\newcommand{\tovsc}{\Rew{\symfont{VSC}}}

















\newcommand{\llterms}{\symfont{Terms}}
\newcommand{\lltermsp}[1]{\llterms_{#1}}

\newcommand{\elctxs}{\symfont{ElCtxs}}
\newcommand{\lltermsform}{\lltermsp{\vdash\form}}

\newcommand{\elctxsform}{\elctxs_{\form}}
\newcommand{\varssym}{\mathcal{V}}
\newcommand{\ndvars}{\varssym_{\ndsym}}
\newcommand{\scvars}{\varssym_{\scsym}}
\newcommand{\elvars}{\symfont{CtxVars}}

\newcommand{\elctx}{\scfont{E}}

\newcommand{\elctxfp}[1]{\elctx\ctxholefp{#1}}
\newcommand{\elctxtwo}{\elctx'}

\newcommand{\elctxtwofp}[1]{\elctxtwo\ctxholefp{#1}}

\newcommand{\gctx}{C}

\newcommand{\rcuteq}{\sim}
\newcommand{\cuteq}{\equiv}

\renewcommand{\letexp}{\mathsf{let}}
\newcommand{\letin}[3]{{\sf let}\ #1=#2\ {\sf in}\ #3}




\newcommand{\cuta}[2]{\redbbrackets{#1{\red\shortrightarrow}#2}}
\newcommand{\cutsub}[2]{\lBrace#1{\shortrightarrow}#2\rBrace}

\newcommand{\sube}[2]{\redbrackets{#1{\red\shortrightarrow}#2}}

\newcommand{\redbrackets}[1]{\red[#1\red]}

\newcommand{\redbbrackets}[1]{\red\llbracket#1\red\rrbracket}

\newcommand{\customexbrackets}[1]{\custom\lbrbrak#1\custom\rbrbrak}

\newcommand{\suba}[3]{\cyanbrackets{#1{\substr}#2,#3}}
\newcommand{\nsuba}[3]{\customexbrackets{#1\,{\custom\rhd}\,#2\custom |#3}}

\newcommand{\lctx}{L}
\newcommand{\lctxtwo}{\lctx'}

\newcommand{\lctxp}[1]{\lctx\ctxholep{#1}}

\newcommand{\lctxtwop}[1]{\lctxtwo\ctxholep{#1}}

\newcommand{\cameratech}[2]{\ifthenelse{\boolean{techr}}{#2}{#1}}

\newcommand{\trndtosc}[1]{\overline{#1}^{\scsym}}

\newcommand{\trsctond}[1]{\underline{#1}_{\ndsym}}

\newcommand{\scfont}[1]{\mathtt{#1}}

\newcommand{\svar}{\scfont\var}
\newcommand{\svartwo}{\scfont\vartwo}
\newcommand{\svarthree}{\scfont\varthree}
\newcommand{\svarfour}{\scfont\varfour}
\newcommand{\stm}{\scfont\tm}
\newcommand{\stmtwo}{\scfont\tmtwo}
\newcommand{\stmthree}{\scfont\tmthree}
\newcommand{\stmfour}{\scfont\tmfour}
\newcommand{\stmfive}{\scfont\tmfive}
\newcommand{\stmsix}{\scfont\tmsix}

\newcommand{\gsctx}{\scfont{\gctx}}

\newcommand{\gsctxp}[1]{\gsctx\ctxholep{#1}}

\newcommand{\lsctx}{\scfont{\lctx}}
\newcommand{\lsctxtwo}{\lsctx'}
\newcommand{\lsctxthree}{\lsctx''}
\newcommand{\lsctxfour}{\lsctx'''}
\newcommand{\lsctxp}[1]{\lsctx\ctxholep{#1}}

\newcommand{\lsctxtwop}[1]{\lsctxtwo\ctxholep{#1}}
\newcommand{\lsctxthreep}[1]{\lsctxthree\ctxholep{#1}}
\newcommand{\lsctxfourp}[1]{\lsctxfour\ctxholep{#1}}

\newcommand{\sval}{\scfont\val}
\newcommand{\svaltwo}{\scfont\valtwo}

\newcommand{\srootRew}[1]{\rightarrowtail_{#1}}
\newcommand{\sRew}[1]{\leadsto_{#1}}

\newcommand{\srtocut}{\srootRew{\cutsym}}




\definecolor{LightGray}{gray}{.80}
\definecolor{DarkGray}{gray}{.60}

\newcommand{\cutsym}{\symfont{cut}}
\newcommand{\stocut}{\sRew{\cutsym}}

\newcommand{\storencut}{\sRew{\symfont{ren}\mbox{-}\cutsym}}

\newcommand{\ndsym}{\symfont{N}}
\newcommand{\scsym}{\symfont{S}}

\newcommand{\nproves}{\vdash_{\ndsym}}

\newcommand{\sproves}{\vdash_{\scsym}}

\newcommand{\imply}{\Rightarrow}

\newcommand{\asym}{a}

\newcommand{\ndterms}{\Lambda_{\symfont{N}}}
\newcommand{\ndesterms}{\Lambda_{\symfont{N}}^{\symfont{ES}}}

\newcommand{\scterms}{\Lambda_{\symfont{S}}}

\newcommand{\tctxplus}{\cup}

\newcommand{\vars}{\scvars}

\newcommand{\wsctx}{\scfont{W}}
\newcommand{\wsctxtwo}{\wsctx'}

\newcommand{\wsctxp}[1]{\wsctx\ctxholep{#1}}
\newcommand{\wsctxtwop}[1]{\wsctxtwo\ctxholep{#1}}

\newcommand{\crcuteq}{\approx}

\usepackage{tikz}
\usetikzlibrary{calc}
\usetikzlibrary{matrix,arrows}
\usetikzlibrary{decorations.shapes}
\usetikzlibrary{decorations.text}
\usetikzlibrary{decorations.pathmorphing}
\usetikzlibrary{decorations.markings}
\usetikzlibrary{positioning}

\tikzset{
node distance=1.3cm, auto,
every node/.style={font=\scriptsize },
ocenter/.style={baseline={([yshift=-.5ex, xshift=-.5ex]current bounding box)}},  
labelBeginAbove/.style={postaction={decorate,decoration={markings,mark=at position 0 with {\node[inner sep= 0.6pt, above=1pt]{\tiny #1};}} } },
labelBeginBelow/.style={postaction={decorate,decoration={markings,mark=at position 0 with {\node[inner sep= 0.6pt, below=1pt]{\tiny #1};}}}},
labelEndAbove/.style={postaction={decorate,decoration={markings,mark=at position 1 with {\node[inner sep= 0.6pt, above=1pt]{\tiny #1};}}}},
labelEndBelow/.style={postaction={decorate,decoration={markings,mark=at position 1 with {\node[inner sep= 0.6pt, below=1pt]{\tiny #1};}}}},
labelEndRight/.style={postaction={decorate,decoration={markings,mark=at position 1 with {\node[inner sep= 0.6pt, right=1pt]{\tiny #1};}}}},
labelEndLeft/.style={postaction={decorate,decoration={markings,mark=at position 1 with {\node[inner sep= 0.6pt, left=1pt]{\tiny #1};}}}}
}



\usepackage{stix}
\usepackage{tipa}

\renewcommand{\suba}[3]{\nsuba{#1}{#2}{#3}}

\numberwithin{lemma}{section}
\numberwithin{proposition}{section}

%
%
\begin{document}
\title{The Vanilla Sequent Calculus is Call-by-Value}
\subtitle{(Fresh Perspective)}
%
%
\author{Beniamino Accattoli\orcidID{0000-0003-4944-9944}}
\authorrunning{B. Accattoli}
\institute{Inria \& LIX, Ecole Polytechnique, UMR 7161, Palaiseau, France
\email{\href{mailto:beniamino.accattoli@inria.fr}{beniamino.accattoli@inria.fr}}
}
\maketitle              
\begin{abstract}
Existing Curry-Howard interpretations of call-by-value evaluation for the $\lambda$-calculus are either based on ad-hoc modifications of intuitionistic proof systems or involve additional logical concepts such as classical logic or linear logic, despite the fact that call-by-value was introduced in an intuitionistic setting without linear features.

This paper shows that the most basic sequent calculus for minimal intuitionistic logic---dubbed here \emph{vanilla}---can naturally be seen as a logical interpretation of call-by-value evaluation. This is obtained by establishing mutual simulations with a well-known formalism for call-by-value evaluation.
%
\keywords{$\l$-calculus, proof theory, Curry-Howard.}
\end{abstract}
%
%
%
\section{Introduction}
\label{sect:intro}
The connection between functional languages and proof theory stems from Howard's insight that the system of simple types for the $\l$-calculus is exactly Gentzen's natural deduction for minimal intuitionistic logic \cite{Howard1980-HOWTFN-2}. Additionally, \emph{$\beta$-reduction} exactly matches the logical process of \emph{detour elimination}, also called \emph{normalization}. This correspondence concerns the unrestricted, or \emph{call-by-name}, notion of $\beta$-reduction. 

In practice, the evaluation mechanism at work in functional languages never follows that of call-by-name of the ordinary $\l$-calculus. Plotkin's call-by-value $\l$-calculus \cite{DBLP:journals/tcs/Plotkin75} restricts $\beta$-reduction to fire when the argument is a \emph{value}, that is, a variable or an abstraction, and it is often seen as a better fit for applications. 

\paragraph{Call-by-Value and Natural Deduction.} Natural deduction, unfortunately, does not provide a solid logical foundation for call-by-value (shortened to \cbv). On the positive side, values are proofs ending in a logical introduction rule. The restriction to values in the proof normalization process, however, has to be \emph{enforced}; it does not arise naturally from the structure of natural deduction. Moreover, when one considers open terms and/or strong evaluation (that is, under abstraction), the normal forms of Plotkin's \cbv calculus  have a complex inductive structure not corresponding to any natural concept in natural deduction. 
 The study of strong evaluation is essential for proof assistants based on dependent types. Notably, Leroy and \gregoire use strong \cbv for Coq \cite{DBLP:conf/icfp/GregoireL02}. Strong evaluation is also advocated by Scherer and Rémy \cite{DBLP:conf/esop/SchererR15} as relevant for a solid theory of functional languages. 
Lastly, in proof theory one speaks of \emph{cut/detour elimination}, and they can be eliminated only if evaluation goes under abstractions.

The issue with normal forms is delicate. It is well-known that the rewriting rules of Plotkin's calculus are not suited for open terms and strong evaluation, causing \emph{premature normal forms}, as pointed out by Paolini and Ronchi della Rocca \cite{DBLP:journals/ita/PaoliniR99,DBLP:conf/ictcs/Paolini01,DBLP:series/txtcs/RoccaP04}. The literature contains many alternative \cbv $\l$-calculi fixing this defect, as surveyed by Accattoli and Guerrieri \cite{DBLP:conf/aplas/AccattoliG16}, often extending the syntax of the $\l$-calculus with $\letexp$-expressions or explicit substitutions, as for instance is the case in Moggi's calculus \cite{Moggi88tech,DBLP:conf/lics/Moggi89}. 

Logically, both $\letexp$-expressions and explicit substitutions are decorations for the (intuitionistic) cut rule. Even when one considers these alternative \cbv $\l$-calculi not suffering of premature normal forms, natural deduction with cut is not really a solid logical foundation, because \cbv \emph{normal} terms correspond to proofs \emph{with cuts}. That is, \emph{not all cuts are eliminable}.

\paragraph{Call-by-Value and Proof Theory.} Beyond natural deduction for minimal intuitionistic logic (shortened to MIL), there are two main ways of looking at \cbv via proof theory. One of them is as a certain intuitionistic fragment of linear logic, as first done by Girard \cite{DBLP:journals/tcs/Girard87}, which can also be seen as a certain embedding into the modal logic S4, as pointed out recently by Esp{\'{\i}}rito Santo et al. \cite{DBLP:journals/jlap/SantoPU22}. Another way is via the duality between \cbv and call-by-name (\cbn) in classical logic, usually traced back to Filinski \cite{DBLP:conf/ctcs/Filinski89}. This duality was then analyzed through linear logic lenses by Danos et al. \cite{danos93wll}, and crystalized as the computational interpretation of a sequent calculus by Curien and Herbelin \cite{DBLP:conf/icfp/CurienH00}. 

What is not ideal, however, is that these are \emph{additional logical concepts}. In Plotkin's \cbv (and in many of the alternative \cbv $\l$-calculi), indeed,  there is no trace of linearity nor of classical principles. It is natural to wonder whether there is a neat logical way of modeling \cbv using \emph{only} MIL. There are works connecting \cbv and MIL, such as Dyckhoff and Lengrand's study of Herbelin's LJQ \cite{HerbelinPhD,DBLP:conf/cie/DyckhoffL06,DBLP:journals/logcom/DyckhoffL07} or Ohori \cite{DBLP:conf/tlca/Ohori99}, but they are based on modifications of the deductive system \emph{imposing} the \cbv reading.  This paper presents a fresh perspective on this question.
\begin{figure}[t!]
\centering
\begin{tabular}{c\colspace\colspace c}	
	\AxiomC{$  \multiForm \vdash \form$}
	\AxiomC{$  \multiForm, \form \vdash \formtwo$}
	\RightLabel{$\cut $}
	\BinaryInfC{$  \multiForm \vdash  \formtwo$}
	\DisplayProof
&
	\AxiomC{}
	\RightLabel{$\ax$}
	\UnaryInfC{$  \multiForm, \form \vdash  \form$}	
	\DisplayProof
\\\\
	\AxiomC{$  \multiForm \vdash \form$}
	\AxiomC{$  \multiForm, \formtwo \vdash \formthree$}
	\RightLabel{$ \implyLeftRule $}
	\BinaryInfC{$  \multiForm,  \form \imply \formtwo \vdash \formthree$}
	\DisplayProof
&
		\AxiomC{$  \multiForm, \form \vdash \formtwo$}
	\RightLabel{$ \implyRightRule $}
	\UnaryInfC{$  \multiForm \vdash \form \imply \formtwo$}
	\DisplayProof
\end{tabular}
\caption{Vanilla sequent calculus for minimal intuitionistic logic (MIL).}
\label{fig:SC-intro}
\end{figure}

\paragraph{The Vanilla Sequent Calculus.} Our result is that the most basic presentation of Gentzen's sequent calculus \cite{Gentzen1964} for MIL, here dubbed \emph{vanilla sequent calculus} and shown in \reffig{SC-intro} (the proper additive presentation of MIL requires some further details about contractions, see \refsect{vanilla}), is a natural Curry-Howard reading of \cbv evaluation. The system comes with a natural notion of \emph{values}, which are proofs ending with right deduction rules. Crucially, we exploit a \emph{splitting property} that is specific to sequent proofs of MIL: every proof of $\typctx\vdash \form$ can be uniquely split into a value sub-proof of $\typctxtwo\vdash \form$ followed by a sequence of \emph{left} rules turning $\typctxtwo$ into $\typctx$ without touching $\form$. Thanks to this feature, there is no need to modify the sequent calculus with a so-called \emph{stoup} (\ie a distinguished formula in a formula context) as in Curien and Herbelin \cite{DBLP:conf/icfp/CurienH00}, adding restrictions on the shape of proofs as in Herbelin's presentation of LJQ \cite{HerbelinPhD} or Ohori \cite{DBLP:conf/tlca/Ohori99}, or having a separate judgement for values as in Lengrand and Dyckhoff's presentation of LJQ \cite{DBLP:conf/cie/DyckhoffL06,DBLP:journals/logcom/DyckhoffL07}. Essentially, values are recovered on-the-fly, with no need to mark them out explicitly.

Beyond the splitting property, our fresh perspective stems from the observation that the ineliminable cuts  in the \cbv reading of natural deduction correspond exactly to occurrences of the left rule $(\implyLeftRule)$ for $\imply$ in the vanilla sequent calculus. In other words, the cut rule in \cbv natural deduction is \emph{overloaded}. When one looks at it through the vanilla sequent calculus, it represents \emph{both} the vanilla $(\cut)$ and $(\implyLeftRule)$ rules. Disentangling the two concepts, the vanilla sequent calculus provides a neater logical foundation where \emph{\cbv normal terms have no cuts}, as one would expect.

Concretely, we adopt proof terms for the vanilla sequent calculus defined without the application $\tm\tmtwo$ construct (which is specific to natural deduction) and with \emph{distinct} explicit substitutions constructs for rules $(\cut)$ and $(\implyLeftRule)$, what we dub here \emph{vanilla $\l$-terms}.

\paragraph{Cut Elimination at a Distance: the Vanilla $\l$-Calculus.} Once it is clarified that cut-free vanilla proofs are a good formalism for \cbv normal forms, one needs to pair them with a natural notion of cut elimination. Quoting Zucker \cite{ZUCKER19741}: "we know at the outset what a cut-free derivation is, and the problem is to define the conversions". 

For that, we give a very compact definition of cut elimination following the \emph{at a distance} style of rewriting rules promoted by Accattoli and Kesner  for natural deduction calculi \cite{DBLP:conf/csl/AccattoliK10,DBLP:conf/popl/AccattoliBKL14,DBLP:journals/pacmpl/Kesner22}, and recently shown to be a good fit for the intuitionistic linear logic sequent calculus (without stoup) by Accattoli \cite{DBLP:journals/lmcs/Accattoli23}. To avoid misunderstandings, note that, while rewriting at a distance is also an additional concept, it is fundamentally different from linear or classical logic. Linear and classical logic are \emph{logical} concepts; they change the logic. Rewriting at a distance is a \emph{rewriting} concept, not a logical one; the logic stays the same, what changes is only how proofs/terms are rewritten.

The key point of rewriting at a distance is that it avoids the many rewriting rules propagating explicit substitutions through the term structure, which are found in many extended $\l$-calculi, via generalizations of the rewriting rules exploiting contexts (that is, terms with a hole). This is particularly useful in the study of sequent calculi, where commuting cut elimination cases are a notorious burden. Essentially, rewriting at a distance allows one to have only \emph{principal} cut elimination cases, avoiding completely the commutative ones. This is similar to what happens with proof nets, except that it is simpler, because the graphical language is avoided altogether by means of---again---contexts.

Rules at a distance rest on \emph{on-the-fly} decompositions of a term into a context and a sub-term. In our setting, this decomposition becomes the splitting property mentioned above. Distance thus allows for a smooth treatment of values at the rewriting level.

Our proof terms plus cut elimination at a distance form what we dub as the \emph{vanilla $\l$-calculus}.

\paragraph{Results.} The central result of the paper is that two standard translations, from natural deduction to sequent calculus, and back, induce termination-preserving simulations between the vanilla $\l$-calculus and one of the formalisms extending Plotkin's \cbv. To make things as smooth as possible, our choice of formalism for \cbv is Accattoli and Paolini's \emph{value substitution calculus} \cite{DBLP:conf/flops/AccattoliP12}, a natural deduction $\l$-calculus with explicit substitutions at a distance, used in various recent studies about \cbv \cite{DBLP:conf/aplas/AccattoliG16,DBLP:conf/lics/AccattoliCC21,DBLP:journals/pacmpl/AccattoliG22,DBLP:conf/ictac/AccattoliGL23,DBLP:conf/fossacs/AccattoliL24}.

For the sake of keeping this \emph{fresh perspective} light and readable, we focus on explaining the inception of the calculus, rewriting rules at a distance, and the simulation results. As a check of good design for the vanilla $\l$-calculus, we also prove strong normalization for typed terms. This result is more challenging than the simulations, but it is expected and proved via an established technique, namely, the bi-orthogonal reducibility method.  Therefore, we only give the statements and refer to the \withproofs{Appendix}\withoutproofs{technical report \cite{accattoli2024vanillasequentcalculuscallbyvalue}} for the details of the proof, which is adapted from Accattoli \cite{DBLP:journals/lmcs/Accattoli23}.

\paragraph{How Our Approach Fits in the Literature.} Certainly, ours is not the first computational interpretation of a sequent calculus for MIL, whether \emph{vanilla} or not. Essentially, there are three kinds of interpretations in the literature. For the first two, we adopt Mint's local / global terminology \cite{DBLP:conf/tableaux/Mints97}:
\begin{enumerate}
\item \emph{The local approach}. It decorates sequent proofs using a language of terms \emph{without application} and with $\letexp$-expressions (or explicit substitutions) for both the cut rule and the left rule for $\imply$, as we do. In particular, this approach does \emph{not} use ordinary $\l$-terms to decorate proofs. These alternative terms are then endowed with (usually many) rewriting rules mimicking the \emph{propagation} of cuts (that is, they are not at a distance). This approach is followed \eg by Gallier \cite{DBLP:journals/tcs/Gallier93}, Ohori \cite{DBLP:conf/tlca/Ohori99}, Dyckhoff and Lengrand \cite{DBLP:conf/cie/DyckhoffL06,DBLP:journals/logcom/DyckhoffL07}, and Cerrito and Kesner \cite{DBLP:journals/tcs/CerritoK04}. Gallier and Cerrito and Kesner do not notice the connection with \cbv, while Ohori and Dyckhoff and Lengrand do embrace it, but use non-standard sequent calculi and complex sets of rewriting rules. Gallier   (1993) revisits standard proof theoretical concepts under the influence of linear logic. Cerrito and Kesner (2004) develop a Curry-Howard for pattern matching. Ohori's (1999) aim is to show a connection between proof theory and Sabry and Felleisen's \emph{A-normal forms} \cite{DBLP:conf/lfp/SabryF92,DBLP:conf/pldi/FlanaganSDF93}. This paper can be seen as a re-elaboration of Ohori's work, where the focus is a minimalistic logical foundation rather than A-normal forms. Our aim is also similar to Dyckhoff and Lengrand \cite{DBLP:journals/logcom/DyckhoffL07} (2007) but the outcome is considerably simpler: they have three notions of cut and 14 rewriting rules, while we have only one cut and one rewriting rule.

\item \emph{The global approach}. It amounts to decorate sequent proofs with ordinary $\l$-terms and endowing them with call-by-name rewriting. It is rooted in Prawitz's \emph{many-to-one} translation of sequent calculus to natural deduction \cite{prawitz65}, it possibly first appears with proof terms in Pottinger \cite{pottinger77}, and it is nicely presented by Barendregt and Ghilezan using the vanilla sequent calculus \cite{DBLP:journals/jfp/BarendregtG00}. The idea is to decorate rules $(\cut)$ and $(\implyLeftRule)$ using meta-level substitution, as follows:
\begin{center}
\begin{tabular}{c@{\hspace{.12cm}}|@{\hspace{.12cm}}  c}
	\AxiomC{$  \multiForm \vdash \tmtwo \hastype\form$}
	\AxiomC{$  \multiForm, \var\hastype\form \vdash \tm\hastype\formtwo$}
	\RightLabel{$ \cut $}
	\BinaryInfC{$  \multiForm \vdash \subi{\tmtwo}\var \tm \hastype\formtwo$}
	\DisplayProof
&
	\AxiomC{$  \multiForm \vdash \tmtwo \hastype\form$}
	\AxiomC{$  \multiForm, \var\hastype\formtwo \vdash \tm\hastype\formthree$}
	\RightLabel{$ \implyLeftRule $}
	\BinaryInfC{$  \multiForm, \vartwo\hastype \form \imply \formtwo \vdash \subi{\vartwo\tmtwo}\var \tm \hastype\formthree$}
	\DisplayProof
	\end{tabular}
\end{center}
The serious drawback of the global approach is the potential \emph{size mismatch} between a sequent calculus proof $\pi$ and the $\l$-term $\tmthree_\pi$ decorating $\pi$, stemming from the fact that in the sequent calculus the formula $\formtwo$ in $(\implyLeftRule)$ can  be treated non-linearly, that is, it can be weakened or contracted, leading to duplications/erasures of $\vartwo\tmtwo$. The non-linear use of $\formtwo$ is referred to as \emph{the root of all evil} by Danos et al. \cite{danos93wll}, in their study of representations of System F in linear sequent calculi.

\item \emph{The stoup approach.} A third approach departs from the vanilla sequent calculus, adopting a form of enriched sequent or adding further judgements to the deductive system. Typically, at the logical level it uses \emph{two} judgements $\multiForm;\vdash \form$ and $\multiForm;\formtwo\vdash \form$, where the space between the semicolon and $\vdash$ is called \emph{stoup}---terminology due to Girard \cite{DBLP:journals/mscs/Girard91}---and it is either empty or contains a single, distinguished formula. The non-decorated version of rule $(\implyLeftRule)$ then becomes:
\begin{center}
	\AxiomC{$  \multiForm ;\vdash \form$}
	\AxiomC{$  \multiForm ; \formtwo \vdash \formthree$}
	\RightLabel{$ \implyLeftRule $}
	\BinaryInfC{$  \multiForm; \form \imply \formtwo \vdash \formthree$}
	\DisplayProof
\end{center}
The key point is that the rules of the system (which can be defined in various ways) treat the stoup \emph{linearly}, that is, the stoup formula is never weakened nor contracted. This fact is what circumvents the size mismatch of the global approach, and thus allows for a good match between sequent proofs with stoup and ordinary $\l$-terms. It is usually associated with \cbn evaluation.

The stoup approach is first studied by Danos et al. \cite{danos93wll}, building on Girard \cite{DBLP:journals/mscs/Girard91}. It then became the basis for Herbelin \cite{DBLP:conf/csl/Herbelin94}, who studies the intuitionistic \cbn case,  and then Curien and Herbelin \cite{DBLP:conf/icfp/CurienH00}, who study the classical case in both \cbn and \cbv, with three judgements and stoup on both sides of the sequent. From \cite{DBLP:conf/icfp/CurienH00}, it is easy to extract a \cbv intuitionistic fragment (as done for instance by Accattoli and Guerrieri \cite{DBLP:conf/aplas/AccattoliG16}), but  one obtains a sequent calculus with stoup, not the \emph{vanilla} one. A different presentation of essentially the same system is the mentioned approach to LJQ by Dyckhoff and Lengrand \cite{DBLP:conf/cie/DyckhoffL06,DBLP:journals/logcom/DyckhoffL07}, who use two distinct judgements---with one dedicated to values and connected to focalization---rather than the stoup.
\end{enumerate}
Nowadays, a number of works have built over Curien and Herbelin's work, for instance \cite{DBLP:conf/icalp/Santo00,DBLP:conf/icfp/Wadler03,DBLP:conf/tlca/Santo07,DBLP:journals/tcs/DoughertyGL08,DBLP:conf/tlca/AriolaHS11,DBLP:conf/lics/Munch-Maccagnoni15,DBLP:conf/popl/CurienFM16,DBLP:conf/icfp/DownenMAJ16,DBLP:journals/jfp/DownenA18,DBLP:journals/toplas/Miquey19}, to the point that it became the standard computational interpretation of sequent calculi. 

Our work is orthogonal, and somewhat more basic, as it looks at the intuitionistic case without any form of stoup, focalization, or separate judgement.

\paragraph{Non-Canonicity \emph{vs} Sharing.} In the proof theoretical literature, the vanilla sequent calculus is often criticized as \emph{non-canonical} because (\cbn) normal $\l$-terms have more than one cut-free proof, when one embraces the global approach---this is another face of Danos et al.'s \emph{root of all evil} mentioned above. 

A key point of our fresh perspective is that this is a \emph{feature} rather than a drawback: it is simply the fact that \cbv normal terms such as $\letin\var{\vartwo\varthree}{(\varfour\var\var\var)}$ can be seen as \emph{shared} representations of \cbn normal forms such as $\varfour(\vartwo\varthree)(\vartwo\varthree)(\vartwo\varthree)$, and that a term can be shared in various ways---another one is $\letin\var{\vartwo\varthree}{(\varfour\var\var(\vartwo\varthree))}$. Therefore, the vanilla sequent calculus gives a first-class status to sub-term sharing for normal forms.

More precisely, $\letexp$-expressions annotate cuts and give a first-class status to sub-term sharing independently of the proof system. The concept of \emph{value} induces a dynamic / static refinement of sub-term sharing. Dynamically, only values are \emph{unshared} (that is, duplicated or erased). Non-values are never unshared, thus end up being \emph{statically shared} in normal forms,  as in $\letin\var{\vartwo\varthree}{(\varfour\var\var\var)}$, making sub-term sharing visible \emph{denotationally}. These two roles are entangled in natural deduction for MIL while they are handled separately in sequent calculus via its two \emph{left rules}, namely cut captures dynamic sharing and subtraction captures static sharing. Therefore, the left rules of the sequent calculus for MIL encapsulate the sharing features of values.

For an extensive informal discussion about the deep relationship between sharing and \cbv see Accattoli's dissemination paper \cite{DBLP:conf/onward/Accattoli23}.

\paragraph{Proofs} Omitted proofs are in the \withproofs{Appendix}\withoutproofs{associated technical report on arXiv \cite{accattoli2024vanillasequentcalculuscallbyvalue}}.
\section{Natural Deduction, Call-by-Name, and Rewriting at a Distance}
\label{sect:natural-name-distance}
In this section, we give our presentation of Howard's standard correspondence between natural deduction for minimal intuitionistic logic and the simply typed (call-by-name) $\l$-calculus \cite{Howard1980-HOWTFN-2}, here also referred to as the \emph{natural $\l$-calculus}, to distinguish it from the one that shall be associated to the sequent calculus. The correspondence is in \reffig{ND-name}. Curiously, in \cite{Howard1980-HOWTFN-2} Howard mentions that his work stems from Tait's "discovery of the close correspondence between cut elimination and reduction of $\l$-terms" \cite{TAIT1965176}.
\begin{figure}[t!]
\centering
 \begin{tabular}{c}
 \textsc{(Natural) $\l$-Calculus}
\\[4pt]

$\begin{array}{r\colspace rll\colspace|\colspace r\colspace rll}
\textsc{Terms} &\ndterms \ni \tm,\tmtwo & \grameq & \var\in\ndvars \mid \la\var\tm \mid \tm\tmtwo
&
\textsc{Root $\beta$}
&
(\la\var\tm)\tmtwo \rtob \subi\tmtwo\var\tm
\\[4pt]
\textsc{Contexts} &\ctx,\ctxtwo & \grameq & \ctxhole \mid \la\var\ctx \mid \ctx\tmtwo \mid \tm\ctx
&
\textsc{Ctx closure}
&
\AxiomC{$\tm \rtob \tm'$}
\UnaryInfC{$\ctxp{\tm} \tob \ctxp{\tm'}$}
\DisplayProof
\end{array}$
\\[3pt]\hline
\textsc{Decorated (Additive) Natural Deduction}
\\[4pt]
$\begin{array}{r\colspace rll}
\textsc{Formulas} & \form,\formtwo,\formthree & \grameq & \aform \mid \form \imply \formtwo
\end{array}$
\\[4pt]
\begin{tabular}{c\colspace|\colspace c}	
\begin{tabular}{c}	
	\AxiomC{}
	\RightLabel{$\ax$}
	\UnaryInfC{$  \multiForm, \var\hastype\form \nproves \var\hastype \form$}	
	\DisplayProof
\\[6pt]
		\AxiomC{$  \var\hastype\form,\multiForm \nproves \tm\hastype\formtwo$}
	\RightLabel{$ \implyRightRule $}
	\UnaryInfC{$  \multiForm \nproves \la\var\tm\hastype\form \imply \formtwo$}
	\DisplayProof
\\[6pt]
	\AxiomC{$  \multiForm \nproves \tm \hastype\form \imply \formtwo$}
	\AxiomC{$  \multiForm \nproves \tmtwo\hastype\form$}
	\RightLabel{$ @ $}
	\BinaryInfC{$  \multiForm \nproves  \tm\tmtwo \hastype\formtwo$}
	\DisplayProof
\end{tabular}
&
\begin{tabular}{c}
\textsc{Detours}
\\[3pt]
		\AxiomC{$  \var\hastype\form,\multiForm \nproves \tm\hastype\formtwo$}
	\RightLabel{$ \implyRightRule $}
	\UnaryInfC{$  \multiForm \nproves \la\var\tm\hastype\form \imply \formtwo$}

	\AxiomC{$  \multiForm \nproves \tmtwo\hastype\form$}
	\RightLabel{$ @ $}
	\BinaryInfC{$  \multiForm \nproves  (\la\var\tm)\tmtwo \hastype\formtwo$}
	\DisplayProof
\end{tabular}
\end{tabular}
%
%
%
%
\end{tabular}
\caption{$\l$-calculus and natural deduction $\nproves$.
}
\label{fig:ND-name}
\end{figure}
\paragraph{Routine Definitions: Terms and Contexts.} We assume given a countable set of variables $\ndvars$, where $\ndsym$ stresses that they are for the natural $\l$-calculus. The meta-level capture-avoiding substitution of $\tmtwo$ for $\var$ in $\tm$ is denoted with $\subi\tmtwo\var\tm$. 

 Contexts are terms with exactly one occurrence of the \emph{hole} $\ctxhole$, an additional constant, standing for a removed sub-term. We shall use various notions of contexts. For the $\l$-calculus, the most general ones are \emph{(general) contexts} $\ctx$, which simply allow the hole to be anywhere. The main operation about contexts is \emph{plugging} $\ctxp{\tm}$ where the hole $\ctxhole$ in context $\ctx$ is replaced by $\tm$. Plugging, as usual with contexts, can
capture variables---for instance $(\la\var\ctxhole \tm)\ctxholep\var = \la\var\var\tm$. 

\paragraph{Routine Definitions: Types and Derivations.} Types are built out of an unspecified atomic type $\aform$ and the implication $\cdot \imply \cdot$ connective. Type contexts $\multiForm$ are implicitly considered modulo exchange, that is, $\multiForm$ is a partial function from variables to formulas  
such that  $\dom{\typctx} \defeq \{\var \mid \typctx(\var) \mbox{ is defined}\}$ is finite, usually written as $\var_1 \hastype \form_1, \dots, \var_n \hastype \form_n$ (with $n \in 
\nat$) if $\dom{\typctx} \subseteq \{\var_1, \dots, \var_n\}$ and $\typctx(\var_i) = \form_{i}$ for $1 \leq i \leq 
n$. As it is standard, writing $\multiForm, \var \hastype \form$ implicitly assumes that $\var\notin\dom\multiForm$. 

We write $\tderiv \exder \multiForm \nproves \tm \hastype \form$ if $\tderiv$ is a (\emph{type}) \emph{derivation}, that is, a tree constructed using the rules for $\nproves$, having axioms as leaves, and of final judgment $\multiForm \nproves \tm \hastype \form$.


\paragraph{Left and Right Rules.} The typing rule $(\implyRightRule)$ for abstraction carries a right $r$ subscript, to later distinguish it from the left rule for implication of the sequent calculus. We shall constantly refer to left and right rules, and yet we shall dodge abstract definitions of these notions. Rules $\ax$, $\implyRightRule$, and $@$ are right rules.

\paragraph{Detours.} The logical analogues of $\beta$-redexes are \emph{detours}, which are simply given by a $\implyRightRule$ rule followed by a $@$ rule, that is exactly what is required to type a $\beta$-redex---see \reffig{ND-name}.
$\l$-terms in normal form are exactly those typed by proofs of $\nproves$ without detours.


\begin{figure}[t!]
\centering
 \begin{tabular}{c}
$\begin{array}{r\colspace rll}
\textsc{Terms} &\ndesterms \ni \tm,\tmtwo & \grameq & \var \mid \la\var\tm \mid \tm\tmtwo \mid \sube\tmtwo\var\tm
\\[4pt]
\textsc{Contexts} &\ctx,\ctxtwo & \grameq & \ctxhole \mid \la\var\ctx \mid \ctx\tmtwo \mid \tm\ctx \mid \sube\ctx\var\tm \mid \sube\tmtwo\var\ctx
\\
\textsc{Substitutions ctxs} &\lctx,\lctxtwo & \grameq & \ctxhole \mid  \sube\tmtwo\var\lctx
\end{array}$
\\[5pt]\hline
\begin{tabular}{c|c}
\begin{tabular}{c}
\textsc{Root rules}
\\[4pt]
$\begin{array}{r\colspace rlll}
\textsc{$\beta$ at a distance} &\lctxp{\la\var\tm}\tmtwo & \rtodb & \lctxp{\sube\tmtwo\var\tm}
\\[4pt]
\textsc{Substitution} &\sube\tmtwo\var\tm & \rtos & \subi\tmtwo\var\tm
\end{array}$
\end{tabular}
&
\begin{tabular}{c}
\textsc{Contextual closure}
\\[4pt]
\RightLabel{$\asym{\in}\{\db,\ssym\}$}
					\AxiomC{$\tm \rootRew{\asym} \tm'$}
					\UnaryInfC{$\ctxp\tm \Rew{\asym} \ctxp{\tm'}$}
					\DisplayProof
\\[4pt]
\textsc{Subst. calculus rewriting}
\\
$\tosc \grameq \todb \cup \tos$	
\end{tabular}
\end{tabular}
\\[5pt]\hline
\textsc{Cut rule decorated with an explicit substitution}
\\[4pt]
	\AxiomC{$  \multiForm \nproves \tmtwo\hastype\form$}
	\AxiomC{$  \multiForm, \var\hastype\form \nproves \tm \hastype\formtwo$}
	\RightLabel{$\cut $}
	\BinaryInfC{$  \multiForm \nproves  \sube\tmtwo\var\tm \hastype\formtwo$}
	\DisplayProof
\end{tabular}
\caption{Substitution Calculus (SC).}
\label{fig:ND-name-ES}
\end{figure}
\paragraph{Explicit Substitutions.} In \reffig{ND-name-ES}, the $\l$-calculus is extended with an explicit substitution (shortened to ES) constructor $\sube\tmtwo\var\tm$, which binds $\var$ in $\tm$. It can be thought as a more compact notation for $\letin\var\tmtwo\tm$, with the slight difference that the evaluation order is not fixed in $\sube\tmtwo\var\tm$. At the logical level, ESs are decorations for cuts, as shown in \reffig{ND-name-ES}. Note that in the literature about $\l$-calculi at a distance, ESs are usually rather written using the mirrored construct $\tm\esub\var\tmtwo$. We here prefer to use $\sube\tmtwo\var\tm$ because it more faithfully reflects the structure of type derivations.

A point of view underlying our study is that ESs are a form of \emph{sub-term sharing}, meaning that $\sube\tm\var(\var\var)$ is a version of $\tm\tm$ where $\tm$ is shared.

In this paper, cut shall be considered as a left rule (for its right premise), according to the view that left rules account for sharing outlined at the end of the introduction. 

\paragraph{Rewriting Rules.} We endow natural $\l$-terms with ESs with \emph{small-step} rewriting rules, that is, rules based on meta-level substitution, obtaining Accattoli and Kesner's \emph{substitution calculus} (SC) \cite{DBLP:conf/lpar/AccattoliK12}, of rewriting relation $\tosc$. It is composed of two rewriting rules, namely the standard substitution rule $\tos$, and the perhaps less standard \emph{$\beta$ at a distance} rule $\todb$. Rule $\todb$ generalizes $\beta$-redexes as to fire even when there are some ESs (\ie left rules) in between the abstraction and the argument, fact that is formalized in the definition of the rule via the (possibly empty) substitution context $\sctx$ (mnemonic: $\sctx$ stands for \emph{List} of substitutions). For instance, 
\begin{equation}
\begin{array}{c\colspace c\colspace c}
(\sube\tmthree\varthree\sube\tmtwo\vartwo(\la\var\tm)) \tmfour\tmfive 
& \todb &
(\sube\tmthree\varthree\sube\tmtwo\vartwo\sube\tmfour\var\tm) \tmfive.
\end{array}
\label{eq:distance}
\end{equation}
This kind of rule circumvents the need of having commuting rewriting rules such as $\sube\vartwo\tmtwo(\la\var\tm) \to \la\var\sube\vartwo\tmtwo\tm$ or $(\sube\vartwo\tmtwo\tm)\tmthree \to \sube\vartwo\tmtwo(\tm\tmthree)$ in order to expose the $\beta$-redex in \refequa{distance}. The rewriting theory of ESs at a distance has simpler proofs and stronger properties (\eg residuals) than for commuting-based ESs, see Accattoli and Kesner \cite{DBLP:conf/csl/AccattoliK10,DBLP:conf/lpar/AccattoliK12,DBLP:conf/popl/AccattoliBKL14}.

Note that ESs can always be reduced, thus there are no ESs in $\tosc$-normal forms. In other words, normal forms are sharing-free.

 Clearly, the substitution calculus simulates $\beta$, since $(\la\var\tm)\tmtwo \todb\sube\tmtwo\var\tm \tos \subi\tmtwo\var\tm$.

\section{The Natural $\l$-Calculus By Value}
\label{sect:vanilla}
In this section, we discuss the presentation of \cbv in a natural deduction calculus, first in Plotkin's style and then with cuts / ESs. In particular, we shall see the advantages and the limits of ESs for \cbv.


\begin{figure}[t!]
	\begin{center}
		\begin{tabular}{c}
			\begin{tabular}{c@{\hspace{0.5cm}}|@{\hspace{0.5cm}}c}
				{\textsc{Language}}
				
				&
				
				{\textsc{Root rule}}
				\\[4pt]
				$\arraycolsep=3pt
				\begin{array}{rrll}
					\textsc{Terms} & \ndterms \ni \tm,\tmtwo,\tmthree & \grameq& \val \mid \tm\tmtwo
					\\
					\textsc{Values} & \val,\valtwo & \grameq & \var \mid \la\var\tm
					
				\end{array}$ & $\arraycolsep=3pt
			\begin{array}{lll}
			(\la\var\tm)\val & \rtobv & \tm\isub\var\val
		\end{array}$
				
			\end{tabular}
			\\[5pt]
			\hline
			\\[-9pt]
			\tabcolsep = 10pt
				\begin{tabular}{c@{\hspace{0.1cm}}|@{\hspace{0.1cm}}c}
				 \begin{tabular}{c}
					\textsc{Weak Evaluation }
					\\[4pt]
					$\begin{array}{r@{\hspace{.15cm}}r@{\hspace{.1cm}}l@{\hspace{.1cm}}ll}
							\textsc{Weak Ctxs} & \evctx & \grameq & \ctxhole \mid \evctx\tm \mid \tm\evctx 
						\end{array}$
					\\	
						\AxiomC{$\tm \rtobv \tm'$}
						\UnaryInfC{$\evctxp\tm \towbv \evctxp{\tm'}$}
						\DisplayProof
					\\
				\end{tabular}
				&
				\begin{tabular}{c@{\hspace{0.5cm}}c}
					\textsc{Strong Evaluation}
					\\[4pt]
					$\begin{array}{r@{\hspace{.15cm}}r@{\hspace{.1cm}}l@{\hspace{.1cm}}ll}
						\textsc{Strong Ctxs} & \ctx & \grameq & \ctxhole \mid \ctx\tm \mid \tm\ctx  \mid \la\var\ctx
					\end{array}$
					\\ 
					\AxiomC{$\tm \rtobv \tm'$}
					\UnaryInfC{$\ctxp\tm \tobv \ctxp{\tm'}$}
					\DisplayProof
				\end{tabular}
				\end{tabular}
				\end{tabular}

\caption{Plotkin's (natural) \cbv $\l$-calculus.}
\label{fig:plotkin}
\end{center}
\end{figure}

\paragraph{Plotkin's Call-by-Value and Open Terms.} The definition of \cbv \emph{à la Plotkin} following the modern presentation by Dal Lago and Martini \cite{DBLP:journals/tcs/LagoM08} is in \reffig{plotkin}. Values $\val$ are defined as variables and abstraction, and the $\betav$-rule is obtained by restricting $\beta$-redexes to fire only when the argument is a value. Strong evaluation allows one to reduce $\betav$-redexes everywhere in a term, and weak evaluation instead forbids $\betav$-redexes under abstraction.

It is well-known that Plotkin's approach works smoothly only in the important and yet limited case of weak evaluation of closed terms. Open terms are an issue, and even more so is strong evaluation, because they cause \emph{stuck $\beta$-redexes} such as $(\la\var \tm) (\vartwo\varthree)$ where the argument is $\tobv$-normal and not a value, thus the $\betav$-rule cannot fire. These stuck configurations are problematic, as first noticed by Paolini and Ronchi della Rocca \cite{DBLP:journals/ita/PaoliniR99,DBLP:conf/ictcs/Paolini01,DBLP:series/txtcs/RoccaP04}. The easiest way of stating the problem is that the paradigmatic looping term $\Omega \defeq \delta\delta$, where $\delta \defeq \la\var\var\var$, and its variant $\Omega_{\symfont{stuck}} \defeq (\la\var\delta)(\vartwo\varthree)\delta$ are contextual equivalent and yet $\Omega$ loops while  $\Omega_{\symfont{stuck}}$ is normal in Plotkin's approach. Terms such as $\Omega_{\symfont{stuck}}$ are sometimes called \emph{premature normal forms}, and break expected properties of Plotkin's calculus with respect to denotational models, see Accattoli and Guerrieri \cite{DBLP:conf/aplas/AccattoliG18}. 

Logically, the issue is that not all detours can be eliminated using Plotkin's $\betav$-rule.

\paragraph{Substitutions by Value.} The issue with open terms has been studied at length by Accattoli and Guerrieri and co-authors \cite{DBLP:conf/aplas/AccattoliG16,DBLP:journals/lmcs/GuerrieriPR17,DBLP:conf/aplas/AccattoliG18,DBLP:journals/pacmpl/AccattoliG22,DBLP:conf/fossacs/AccattoliL24}, who in \cite{DBLP:conf/aplas/AccattoliG16} study and compare various ways of circumventing it. One of the most flexible and studied solutions amounts to add \emph{ESs at a distance}. The framework is Accattoli and Paolini's \emph{value substitution calculus} (VSC) \cite{DBLP:conf/flops/AccattoliP12}, defined in \reffig{ND-value-ES}, which is the variant of the substitution calculus (of the previous section) modelled over the \cbv translation of $\l$-calculus in linear logic proof nets. 
\begin{figure}[t!]
\centering
 \begin{tabular}{c}
\textsc{Terms}\ \ \ $\ndesterms$\ \ \ (as for the SC)
\\[4pt]
\hline 
\begin{tabular}{c\colspace |\colspace c}
\begin{tabular}{c}
\textsc{Root rules}
\\[4pt]
$\begin{array}{r\colspace rlll}
\textsc{$\beta$ at a distance} &\lctxp{\la\var\tm}\tmtwo & \rtodb & \lctxp{\sube\tmtwo\var\tm}
\\[4pt]
\textsc{Value substitution} &\sube{\sctxp\val}\var\tm & \rtovs & \sctxp{\subi\val\var\tm}
\\[4pt]
\textsc{Notation} &\tovsc & \grameq & \todb \cup \tovs
\end{array}$
\end{tabular}
&
\begin{tabular}{c}
\textsc{Contextual closure}
\\
$\asym{\in}\{\db,\vssym\}$
\\[4pt]

					\AxiomC{$\tm \rootRew{\asym} \tm'$}
					\UnaryInfC{$\ctxp\tm \Rew{\asym} \ctxp{\tm'}$}
					\DisplayProof
\end{tabular}
%
\end{tabular}
\end{tabular}
\caption{Value Substitution Calculus (VSC).}
\label{fig:ND-value-ES}
\end{figure}
The VSC has the same $\beta$ at a distance rule $\todb$ of the SC, which does \emph{not} require the argument to be a value. It has instead a different substitution rule, which is where the value restriction takes place. Its value substitution rule $\tovs$ uses distance (\ie the substitution context $\sctx$ around the value $\val$). For instance, $\sube{\sube\vartwo\tmthree\val}\var(\varthree\var\var) \to \sube\vartwo\tmthree(\varthree\val\val)$.
Distance allows one to avoid dedicated commuting rules such as $\sube{\sube\vartwo\tmthree\tmtwo}\var\tm \to \sube\vartwo\tmthree\sube{\tmtwo}\var\tm$ which are often found in \cbv calculi with $\letexp$-expressions, for instance in Moggi's calculus \cite{Moggi88tech,DBLP:conf/lics/Moggi89}, where that rule is called \emph{assoc}.

\paragraph{Advantages of ESs by Value.} The VSC handles open terms correctly thanks to distance and having moved the value restrictions from $\beta$-redexes to substitution redexes. As evidence of correct behavior, let's have a look at the variant $\Omega_{\symfont{stuck}}$ of $\Omega$ that is a premature normal form in Plotkin's calculus. It now (correctly) diverges in the VSC:
\begin{center}
$\begin{array}{lllllllllll}
\Omega_{\symfont{stuck}} &\defeq& (\la\var\delta)(\vartwo\varthree)\delta & \todb & (\sube{\vartwo\varthree}\var\delta)\delta 
\\
& \todb & \sube{\vartwo\varthree}\var\sube\delta\varfour(\varfour\varfour) &\tovs& \sube{\vartwo\varthree}\var(\delta\delta) &\todb&\ldots
\end{array}$
\end{center}
The VSC has a number of operational and denotational good properties, as shown by Accattoli and co-authors in various recent works \cite{DBLP:conf/lics/AccattoliCC21,DBLP:journals/pacmpl/AccattoliG22,DBLP:conf/fossacs/AccattoliL24}, essentially providing a \cbv calculus that mimics in \cbv most of the good properties of the \cbn $\l$-calculus, considerably improving over Plotkin's presentation. At the same time, the VSC and Plotkin's calculus induce the same contextual equivalence on closed terms (see \cite{DBLP:journals/pacmpl/AccattoliG22}), thus the VSC is a \emph{conservative} refinement of Plotkin's calculus.

\paragraph{Limits of ESs by Value.} The VSC, however, is not free from glitches. In contrast to the \cbn case, in the VSC \emph{not all ESs are eliminable}, because of the value constraint.    
Namely, an ES such as $\sube{\vartwo\varthree}\var\tm$ cannot be eliminated, since $\vartwo\varthree$ is normal and not a value. It might seem that this is the same issue that we pointed out for Plotkin's approach, for which some detours are not eliminable.
The situation however is different: in Plotkin's case, ineliminable detours break operational and denotational properties, while ineliminable ESs do not cause similar problems in the VSC. 

It is important to stress that---in itself---the presence of ineliminable ESs is not a drawback. They actually are a \emph{feature} of \cbv and of the VSC. The fact that some ESs are ineliminable, indeed, means that sub-term sharing ends up in normal forms, and thus becomes \emph{denotationally visible}, which is a good thing, since it opens the way to a mathematical understanding of sharing and efficiency. In Ehrhard's \cbv relational denotational model \cite{DBLP:conf/csl/Ehrhard12}, for instance, $\sube{\vartwo\varthree}\var(\varfour\var\var)$ and $\varfour(\vartwo\varthree)(\vartwo\varthree)$ have \emph{different} interpretations; they instead have the \emph{same} interpretation in the \cbn relational model. An extensive high-level discussion of the relationship between sharing and \cbv can be found in Accattoli's dissemination paper \cite{DBLP:conf/onward/Accattoli23}.

The glitch of the VSC is the fact that its ESs are still typed with cuts, as in the SC. Therefore, if some ESs are ineliminable then some cuts are ineliminable, in the \cbv interpretation of natural deduction with cuts. This clearly goes against the expected property of the cut rule, which is \emph{admissibility}, that is, that all cuts are eliminable. What happens in the VSC is that only cuts containing values are eliminable. 

Ideally, one would like to have two \emph{separate} constructors, one for eliminable cuts and one for ineliminable cuts. This does not seem to be naturally achievable in natural deduction, where the two are entangled. The aim of this paper is to show that, instead, it is exactly what naturally happens in the sequent calculus.
\section{Proof Terms for the Vanilla Sequent Calculus}
In this section, we present the sequent calculus for minimal intuitionistic logic (MIL). We adopt an additive presentation of the MIL fragment of Gentzen's original one \cite{Gentzen1964} (which was multiplicative), mimicking the standard use of  additive natural deduction for Howard's correspondence with the $\l$-calculus (see \eg Sørensen and Urzyczyn \cite{sorensen2006lectures}), despite the fact that his original presentation was multiplicative \cite{Howard1980-HOWTFN-2}. The rules are in \reffig{SC-static}, which we refer to as \emph{vanilla}, to stress that we do not considered distinguished formulas (also called \emph{stoup}) on the left of sequent symbol $\sproves$, nor any other tweak. 

The main difference with natural deduction is that the application rule $(@)$ is replaced by the left rule  $(\implyLeftRule)$ for $\imply$ (see \reffig{SC-static}, the notation $\multiForm \tctxplus \svartwo\hastype \form \imply \formtwo$ is explained below), beyond having the cut rule from the start (while in natural deduction we added it only at a later moment). For the sequent calculus, we shall re-use most basic concepts introduced for natural deduction without re-defining them.

\begin{figure}[t!]
\centering
 \begin{tabular}{c}
$\begin{array}{r\colspace rllllll}
\textsc{Vanilla terms} & \scterms \ni \stm,\stmtwo,\stmthree & \grameq &  \svar \in\scvars \mid \la\svar\stm \mid
\cuta\stmtwo\svar\stm \mid \nsuba\svartwo\stmtwo\svar \stm
\end{array}$
\\[5pt]\hline
\textsc{Decorated (Additive) Vanilla Sequent Calculus}
\\[4pt]
\begin{tabular}{c\colspace|\colspace c}	
	\AxiomC{$  \multiForm \sproves \stmtwo\hastype\form$}
	\AxiomC{$  \multiForm, \svar\hastype\form \sproves \stm \hastype\formtwo$}
	\RightLabel{$\cut $}
	\BinaryInfC{$  \multiForm \sproves  \cuta\stmtwo\svar\stm \hastype\formtwo$}
	\DisplayProof
&
	\AxiomC{}
	\RightLabel{$\ax$}
	\UnaryInfC{$  \multiForm, \svar\hastype\form \sproves \svar\hastype \form$}	
	\DisplayProof
\\[6pt]
	\AxiomC{$  \multiForm \sproves \stmtwo \hastype\form$}
	\AxiomC{$  \multiForm, \svar\hastype\formtwo \sproves \stm\hastype\formthree$}
	\RightLabel{$ \implyLeftRule $}
	\BinaryInfC{$  \multiForm \tctxplus \svartwo\hastype \form \imply \formtwo \sproves \nsuba\svartwo\stmtwo\svar \stm \hastype\formthree$}
	\DisplayProof
&
		\AxiomC{$  \svar\hastype\form,\multiForm \sproves \stm\hastype\formtwo$}
	\RightLabel{$ \implyRightRule $}
	\UnaryInfC{$  \multiForm \sproves \la\svar\stm\hastype\form \imply \formtwo$}
	\DisplayProof
\end{tabular}
\end{tabular}
\caption{Vanilla $\l$-terms and how they decorate the vanilla sequent calculus $\sproves$ for MIL.}
\label{fig:SC-static}
\end{figure}

\paragraph{Vanilla $\l$-Terms and the Left Rule for $\imply$.} We adopt what Mint calls the \emph{local approach} \cite{DBLP:conf/tableaux/Mints97}, followed also, for instance, by Gallier \cite[Section 10]{DBLP:journals/tcs/Gallier93}, Dyckhoff and Lengrand \cite{DBLP:conf/cie/DyckhoffL06,DBLP:journals/logcom/DyckhoffL07}, Ohori \cite{DBLP:conf/tlca/Ohori99}, and Cerrito and Kesner \cite{DBLP:journals/tcs/CerritoK04}, introducing a language $\scterms$ of \emph{vanilla} $\l$-terms  faithfully coding the structure of sequent proofs. We write the terms of $\scterms$ using a different font $\stm,\stmtwo,\stmthree$ (with respect to the one used for natural $\l$-terms), also for the variables $\svar,\svartwo,\svarthree,\svarfour$, whose set is noted $\scvars$. The main point  of $\scterms$ is that there are no applications (which decorate $(@)$ rules). They are replaced by  a new constructor $\suba\svartwo\stmtwo\svar\stm$, dubbed here \emph{subtraction} (following Accattoli \cite{DBLP:journals/lmcs/Accattoli23}), and which binds $\svar$ in $\stm$ and adds a free occurrence of $\svartwo$. Intuitively, subtraction can be thought of as $\letin\svar{\svartwo\stmtwo}\stm$ (which is the notation used by Gallier \cite{DBLP:journals/tcs/Gallier93}) but we prefer to avoid such a notation, for various reasons. Firstly, writing $\svartwo\stmtwo$ we would still be resting on application, while we want to stress that application is not part of the new language, since there is no $(@)$ rule at the logical level. Secondly, in $\letin\svar{\svartwo\stmtwo}\stm$ the two active premises $\svar$ and $\stmtwo$ of the $(\implyLeftRule)$ logical rule are given asymmetric roles, while in $\suba\svartwo\stmtwo\svar\stm$ their roles are symmetric. Thirdly, $\letexp$-expressions are verbose, thus we adopt a bracket notation. 

Note that the typing rule for subtractions uses a new notation $\multiForm \tctxplus \svar\hastype \form$, defined as:
\[\begin{array}{r\colspace c\colspace l\colspace \colspace l}
\multiForm \tctxplus \svar\hastype \form & \defeq & \multiForm, \svar\hastype \form & \mbox{if }\svar\notin\dom\multiForm
\\[4pt]
\multiForm \tctxplus \svar\hastype \form & \defeq & \multiForm & \mbox{if }\svar\in\multiForm \mbox{ and }\multiForm(\svar) = \form
\end{array}\]
and it is undefined if $\svar\in\multiForm$ and $\multiForm(\svar) \neq \form$. The notation is needed because the additive presentation might have to do on-the-fly contractions on the newly introduced formula $\form\imply\formtwo$, as for instance in the following case, where the free variable occurrence introduced by rule $(\implyLeftRule)$ is immediately identified with the already existing variable $\svartwo$:
\[
\AxiomC{$  \multiForm \sproves \stmtwo \hastype\form$}
	\AxiomC{}
	\RightLabel{$ \ax $}
	\UnaryInfC{$  \multiForm, \svar\hastype\formtwo, \svartwo\hastype\form \imply \formtwo \sproves \svartwo\hastype\form \imply\formtwo$}
	\RightLabel{$ \implyLeftRule $}
	\BinaryInfC{$  \multiForm, \svartwo\hastype \form \imply \formtwo \sproves \nsuba\svartwo\stmtwo\svar \svartwo \hastype\formtwo$}
	\DisplayProof
\]
This phenomenon is not present in natural deduction because therein formulas and variables are introduced on the left of $\nproves$ only by axioms. 

Given that weakenings are freely available (in axioms), rule $(\implyLeftRule)$ can alternatively be formulated without using $\tctxplus$, as follows:
\[	\AxiomC{$  \multiForm,\svartwo\hastype \form \imply \formtwo \sproves \stmtwo \hastype\form$}
	\AxiomC{$  \multiForm, \svar\hastype\formtwo, \svartwo\hastype \form \imply \formtwo \sproves \stm\hastype\formthree$}
	\RightLabel{$ \implyLeftRule' $}
	\BinaryInfC{$  \multiForm, \svartwo\hastype \form \imply \formtwo \sproves \nsuba\svartwo\stmtwo\svar \stm \hastype\formthree$}
	\DisplayProof
\]
This approach is adopted for instance by Herbelin \cite{HerbelinPhD} and Dyckhoff and Lengrand \cite{DBLP:journals/logcom/DyckhoffL07}. We preferred to avoid rule $(\implyLeftRule')$, however, as we find it counter-intuitive. Yet another equivalent approach is adopting a mixed multiplicative-additive presentation: having the simpler rule $(\implyLeftRule)$ of \reffig{SC-intro} that does not use $\tctxplus$ and adding a stand-alone contraction  rule (which is multiplicative) for the left-hand side of $\sproves$.


\paragraph{Different Decoration for Cuts.} Another difference between $\ndterms$ and $\scterms$ is that, since in $\scterms$ cuts shall play a slightly different role than in the VSC / natural deduction by value, for the sake of clarity we decorate them differently, with $\cuta\stmtwo\svar\stm$ rather than with $\sube\stmtwo\svar\stm$.

\begin{figure}[t!]
\centering
 \begin{tabular}{c}
 \begin{tabular}{c\colspace |\colspace c}
 \textsc{Natural $\l$ to Vanilla $\l$} & \textsc{Vanilla $\l$ to Natural $\l$}
 \\[4pt]
$\begin{array}{r l l@{\hspace{.3cm}} l}
\trndtosc\var & \defeq & \svar
\\[3pt]

\trndtosc{\la\var\tm} & \defeq & \la\svar \trndtosc\tm
\\[3pt]

\trndtosc{\sube\tmtwo\var\tm} & \defeq & \cuta{\trndtosc\tmtwo}\svar \trndtosc\tm
\\[3pt]

\trndtosc{\tm\tmtwo} & \defeq & 
\cuta{\trndtosc\tm}\svar \nsuba\svar{\trndtosc\tmtwo}\svartwo \svartwo & \svar,\svartwo \mbox{ fresh} 
\end{array}$
&
$\begin{array}{r l l l}
\trsctond\svar & \defeq & \var
\\[3pt]

\trsctond{\la\svar\stm} & \defeq & \la\var \trsctond\stm
\\[3pt]

\trsctond{\cuta\stmtwo\svar\stm} & \defeq & \sube{\trsctond\stmtwo}\var \trsctond\stm
\\[3pt]

\trsctond{\nsuba\svartwo\stmtwo\svar\stm} & \defeq & \sube{\vartwo\trsctond\stmtwo}\var \trsctond\stm 
\end{array}$
\end{tabular}
\\[5pt]\hline\hline
\\[-5pt]

	\AxiomC{$  \multiForm \nproves \form \imply \formtwo$}
	\AxiomC{$  \multiForm \nproves \form$}
	\RightLabel{$ @ $}
	\BinaryInfC{$  \multiForm \nproves  \formtwo$}
	\DisplayProof
\\[4pt]	
\emph{is simulated in $\sproves$ by}
\\[4pt]
\AxiomC{$  \multiForm \sproves \form \imply \formtwo$}
	\AxiomC{$  \multiForm \sproves \form$}
		\AxiomC{}
		\RightLabel{$ \ax $}
		\UnaryInfC{$  \multiForm,  \formtwo \sproves \formtwo$}
	\RightLabel{$ \implyLeftRule $}
	\BinaryInfC{$  \multiForm,  \form \imply \formtwo \sproves \formtwo$}
\RightLabel{$ \cut $}
\BinaryInfC{$  \multiForm \sproves \formtwo$}
	\DisplayProof
\\[4pt]
\hline\hline
\\[-5pt]
	\AxiomC{$  \multiForm \sproves \form$}
	\AxiomC{$  \multiForm, \formtwo \sproves \formthree$}
	\RightLabel{$ \implyLeftRule $}
	\BinaryInfC{$  \multiForm,  \form \imply \formtwo \sproves \formthree$}
	\DisplayProof
\\[4pt]	
\emph{is simulated in $\nproves$ by}
\\[4pt]
	\AxiomC{}
	\RightLabel{$ \ax $}
	\UnaryInfC{$  \multiForm,  \form \imply \formtwo \nproves \form \imply \formtwo$}
\AxiomC{$  \multiForm \nproves \form$}
\RightLabel{$ @ $}
\BinaryInfC{$  \multiForm, \form \imply \formtwo \nproves  \formtwo$}
		\AxiomC{$  \multiForm,  \formtwo \nproves \formthree$}
	\RightLabel{$ \cut $}
	\BinaryInfC{$  \multiForm, \form \imply \formtwo \nproves \formthree$}
	\DisplayProof
\end{tabular}
\caption{Translations between the natural $\l$-terms (with ESs) and the vanilla $\l$-terms.}
\label{fig:translations}
\end{figure}
\paragraph{Translations.} Translations from the natural $\l$-calculus to the vanilla one and vice-versa are given in \reffig{translations}. They induce translations of the related logical systems, whose key ingredients are highlighted in \reffig{translations}, namely  the simulation of the $(@)$ rule by $\sproves$ and the simulation of the $(\implyLeftRule)$ rule by $\nproves$. Both translations are obtained by introducing axioms and cuts, and are standard. For instance, Girard's \cbn and \cbv translations of natural deduction to linear logic \cite{DBLP:journals/tcs/Girard87} are modal decorations of $\trndtosc\cdot$, and the cuts used in the translation of applications are called \emph{correction cuts} by Danos et al. \cite{danos93wll}.

\paragraph{Translation of Cut-Free Vanilla Terms.} We have not yet introduced rewriting rules for vanilla $\l$-terms (that is, cut elimination), but there are no doubts about the expected notion of normal vanilla terms: they must be the cut-free ones. The next proposition states the starting observation of this work, that is, the fact that cut-free vanilla terms are mapped to natural terms that are \cbv normal, but not necessarily \cbn normal.

\begin{proposition}
\label{prop:cbv-nfs}
Let $\stm\in\scterms$ be cut-free. Then $\trsctond\stm$ is $\tovsc$-normal but not necessarily $\tosc$-normal.
\end{proposition}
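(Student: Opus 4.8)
The plan is to treat the two assertions separately: the positive statement ($\tovsc$-normality of $\trsctond\stm$) is the substantive one, which I would prove by structural induction on the cut-free term $\stm$, while the negative statement (failure of $\tosc$-normality) reduces to a single counterexample. Throughout, the guiding idea is a shape invariant on the image of $\trsctond\cdot$: since $\stm$ is cut-free, the only explicit substitutions created by the translation come from subtractions, and by the clause $\trsctond{\nsuba\svartwo\stmtwo\svar\stmthree} = \sube{\vartwo\trsctond\stmtwo}\var{\trsctond\stmthree}$ of \reffig{translations} each such ES carries as its content an \emph{application headed by a variable}.

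For the induction, the variable case $\trsctond\svar = \var$ is immediate, and the abstraction case $\trsctond{\la\svar\stmtwo} = \la\var\trsctond\stmtwo$ follows from the induction hypothesis, since wrapping a normal term in an abstraction creates no root redex. The only interesting case is the subtraction $\stm = \nsuba\svartwo\stmtwo\svar\stmthree$, giving $\sube{\vartwo\trsctond\stmtwo}\var{\trsctond\stmthree}$. Here the induction hypothesis already yields that $\trsctond\stmtwo$ and $\trsctond\stmthree$ are $\tovsc$-normal, so the only new positions to inspect are the freshly created application $\vartwo\trsctond\stmtwo$ and the freshly created ES at the root. I would dispatch both using the shape invariant: the application is not a $\todb$-redex, because a $\todb$-redex $\sctxp{\la\var\tm}\tmtwo$ must have an abstraction-or-ES as its immediate left subterm (see \reffig{ND-value-ES}), whereas here the left subterm is the variable $\vartwo$; and the ES is not a $\tovs$-redex, because a $\tovs$-redex $\sube{\sctxp\val}\var\tm$ must have a value-or-ES content, whereas here the content $\vartwo\trsctond\stmtwo$ is an application, hence neither a value nor ES-headed. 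Since every position of the term is thereby covered, $\trsctond\stm$ is $\tovsc$-normal.

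For the negative part, I would simply note that the substitution rule $\tos$ of the SC (\reffig{ND-name-ES}) fires on \emph{any} ES, with no value constraint, so any cut-free $\stm$ containing at least one subtraction suffices. Concretely, $\nsuba\svartwo\svarthree\svar\svar$ is cut-free and translates to $\sube{\vartwo\varthree}\var\var$, which is $\tovsc$-normal by the first part but satisfies $\sube{\vartwo\varthree}\var\var \tos \vartwo\varthree$, and is therefore not $\tosc$-normal.

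I expect the only real obstacle to be isolating and justifying the shape invariant cleanly, namely that every ES produced by the translation of a cut-free term has a variable-headed application as its content; this single fact is exactly what simultaneously blocks $\todb$ (no abstraction in functional position) and $\tovs$ (no value being substituted). Once it is stated, the remaining case analysis over the two VSC redex forms is routine, and the contrast with $\tosc$ is precisely that $\tos$ ignores the value constraint that $\tovs$ enforces.
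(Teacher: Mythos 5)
Your proof is correct and follows essentially the same route as the paper's: induction on the cut-free term, observing that the only ESs and applications introduced by $\trsctond\cdot$ come from subtractions and are variable-headed applications, which simultaneously rules out $\todb$- and $\tovs$-redexes, while the $\tos$ rule of the SC still fires on them. Your explicit isolation of the shape invariant and the concrete counterexample $\nsuba\svartwo\svarthree\svar\svar$ merely make explicit what the paper's proof states in passing.
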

\begin{proof}
By induction on $\stm$. For $\tovsc$, simply note that, for cut-free vanilla terms, ESs and applications (which form $\tovsc$ redexes) are only introduced by the translation of subtractions, and they receive as left sub-terms the application of a free variable to a term: thus the introduced application is not a $\todb$-redex, nor it is a value  (nor can reduce to one) thus the introduced ES is not a $\tovs$-redex. For $\tosc$, the translation of subtraction instead introduces $\tos$ redexes.\qed
\end{proof}

In the other direction, $\tovsc$-normal natural terms are not mapped by $\trndtosc\cdot$ to cut-free vanilla terms, because the translation of applications adds cuts. It is possible to optimize the translation as to avoid this mismatch, but the optimized translations would considerably complicate the translation of contexts (introduced and used in the next sections), which is why we refrain from it. At the same time, we shall show that the translation of normal natural terms are \emph{almost} cut-free, in a suitable harmless way.

\section{Defining Cut Elimination for the Vanilla $\l$-Calculus}
\label{sect:small-step-cut-el}
In this section, we define cut elimination on vanilla $\l$-terms, obtaining the \emph{vanilla $\l$-calculus}. Our cut elimination shall be based on meta-level substitution. The notion of value shall emerge naturally, and rules at a distance shall play a role.

\paragraph{Left Variable Occurrences.} At first sight, defining small-step cut elimination for vanilla terms is conceptually easy, since one would simply define it as follows:
\[\cuta{\stmtwo}\svar \stm \ \ \  \Rew{\symfont{cut}}\ \ \   \cutsub{\stmtwo}\svar \stm\]
for a notion of meta-level substitution $\cutsub{\stmtwo}\svar \stm$ for vanilla $\l$-terms.
The problem however is that the definition of $\cutsub{\stmtwo}\svar \stm$ cannot be the same as for natural terms. 

In natural deduction, the free occurrences of a variable are introduced by axioms, and axioms can always be replaced by proofs having the same ending sequent. Since we consider axiom as a right rule, let us say that variables have only \emph{right occurrences}, and that on right occurrences one can always define meta-level substitution $\subi\tmthree\var\tm$ simply as the \emph{replacement} of $\var$ by $\tmthree$ in $\tm$. 

On vanilla terms, subtractions $\suba\svartwo\stmtwo\svar\stm$ introduce \emph{left occurrences} of $\vartwo$. In contrast to right occurrences, the left ones cannot be simply replaced by a vanilla term $\stmthree$, because $\suba\stmthree\stmtwo\svar\stm$ does \emph{not} belong to the grammar of vanilla terms. Therefore, the definition of meta-level substitution $\cutsub\stmthree\svar\stm$ on vanilla terms is a bit tricky on subtractions.

\paragraph{Easy Cases and Values.} To define $\cutsub\stmthree\svar\stm$, we have to inspect the shape of $\stmthree$. The simplest case is when $\stmthree$ is a variable $\svartwo$, since in that case it is possible to simply replace $\svar$ by $\svartwo$. A second clear case is when $\stmthree$ is an abstraction $\la\svartwo\stmtwo$. Then, we can set:
\begin{center}
$\begin{array}{l\colspace l\colspace lll}
\cutsub{\la\svartwo\stmtwo}\svar \suba\svar\stmfour\svarthree\stm & \defeq &  \cuta{\cuta{\cutsub{\la\svartwo\stmtwo}\svar\stmfour}\svartwo \stmtwo}\svarthree \cutsub{\la\svartwo\stmtwo}\svar\stm
\end{array}$
\end{center}
This complex definition amounts to do the expected elimination of the principal cut between the abstraction and the subtraction (roughly, corresponding to the $\todb$ step $(\la\svartwo\stmtwo)\stmfour \todb \sube\stmfour\svartwo \stmtwo$ of the (V)SC) \emph{plus} putting the resulting term in a second created cut on $\svarthree$ \emph{and} propagating the substitution $\cutsub{\la\svartwo\stmtwo}\svar$ to the sub-terms $\stmfour$ and $\stm$ of the subtraction. Note that the two clear cases $\stmthree=\svartwo$ and $\stmthree = \la\svartwo\stmtwo$ are exactly those for \emph{values}, which correspond to the right rules of $\sproves$. The full definition of meta-level substitution $\cutsub\sval\svar\stm$ for values is in \reffig{SC-dynamic-small}.
The following property is proved by induction on $\stm$.
\begin{lemma}
Let $\svar\notin\fv\stm$. Then $\cutsub\sval\svar\stm=\stm$.\label{l:vacuous-subs}
\end{lemma}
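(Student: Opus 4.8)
The plan is to proceed by structural induction on $\stm$, following the grammar of vanilla terms $\scterms$, and to combine at each step the hypothesis $\svar\notin\fv\stm$ with the induction hypothesis applied to the immediate sub-terms (on which $\svar$ will again be absent from the free variables, by the usual computation of $\fv\cdot$). The base case $\stm=\svartwo$ is immediate: $\svar\notin\fv\svartwo$ forces $\svartwo\neq\svar$, so $\cutsub\sval\svar\svartwo=\svartwo$ whatever the value $\sval$ is. For the abstraction case $\stm=\la\svartwo\stmtwo$ and the cut case $\stm=\cuta\stmtwo\svartwo\stmthree$, the relevant clauses of $\cutsub\sval\svar\cdot$ merely push the substitution onto the immediate sub-terms. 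Since $\fv{\la\svartwo\stmtwo}=\fv\stmtwo\setminus\{\svartwo\}$ and $\fv{\cuta\stmtwo\svartwo\stmthree}=\fv\stmtwo\cup(\fv\stmthree\setminus\{\svartwo\})$, the hypothesis yields $\svar\notin\fv{\cdot}$ on each sub-term (assuming, as usual by $\alpha$-renaming, that $\svar\neq\svartwo$), so the induction hypothesis applies and every recursive substitution acts as the identity.

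The only delicate case is the subtraction $\stm=\nsuba\svartwo\stmtwo\svarthree\stmthree$, and it is delicate solely because the definition of $\cutsub\sval\svar\cdot$ in \reffig{SC-dynamic-small} contains a special clause—the one performing a principal cut-elimination step between an abstraction and a subtraction—which fires exactly when $\sval$ is an abstraction \emph{and} the replaced variable coincides with the left variable $\svartwo$ of the subtraction. The key observation is that this clause can never be reached under our hypothesis: the left variable of a subtraction is a free occurrence, i.e. $\svartwo\in\fv{\nsuba\svartwo\stmtwo\svarthree\stmthree}$, so $\svar\notin\fv\stm$ forces $\svar\neq\svartwo$. Consequently, for every shape of $\sval$ the substitution simply distributes, $\cutsub\sval\svar{\nsuba\svartwo\stmtwo\svarthree\stmthree}=\nsuba\svartwo{\cutsub\sval\svar\stmtwo}\svarthree{\cutsub\sval\svar\stmthree}$, and since $\fv{\nsuba\svartwo\stmtwo\svarthree\stmthree}=\{\svartwo\}\cup\fv\stmtwo\cup(\fv\stmthree\setminus\{\svarthree\})$ the hypothesis gives $\svar\notin\fv\stmtwo$ and $\svar\notin\fv\stmthree$ (again taking $\svar\neq\svarthree$ by $\alpha$-renaming), so the induction hypothesis closes the case.

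I do not expect any genuine obstacle, as this is a routine vacuous-substitution statement. The single point that demands care is precisely the remark above: the non-trivial defining clause of vanilla substitution is guarded by the left variable of a subtraction, which is always free, so the assumption $\svar\notin\fv\stm$ excludes it. Once this is noted, each case reduces to distributing the substitution over the constructor and invoking the induction hypothesis.
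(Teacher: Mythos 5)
Your proof is correct and follows the same route as the paper, which simply notes that the lemma is proved by induction on $\stm$. Your key observation---that the special clauses of $\cutsub\sval\svar\cdot$ for subtraction are guarded by the left variable being $\svar$, which is free and hence excluded by the hypothesis (note there are in fact \emph{two} such guarded clauses, one for variable values and one for abstraction values, both excluded by the same argument)---is exactly the point that makes the induction go through.
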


\begin{figure}[t!]
\centering
 \begin{tabular}{c}
$\begin{array}{r\colspace rllllll}
\textsc{Terms} & \scterms \ni \stm,\stmtwo,\stmthree & \grameq &  \svar \in\scvars \mid \la\svar\stm \mid
\cuta\stmtwo\svar\stm \mid \nsuba\svartwo\stmtwo\svar \stm
\\
\textsc{Values} & \sval & \grameq &  \svar \mid \la\svar\stm
\\[4pt]
\textsc{Contexts} & \gsctx & \grameq & \ctxhole  \mid \la\svar\gsctx \mid \cuta\stm\svar\gsctx \mid \cuta\gsctx\svar\stm \mid \nsuba\svartwo\stm\svar\gsctx \mid \nsuba\svartwo\gsctx\svar\stm
\\
\textsc{Left ctxs} &\lsctx & \grameq & \ctxhole \mid \cuta\stm\svar\lsctx \mid \nsuba\svartwo\stm\svar\lsctx

\end{array}$
\\[3pt]\hline
$\begin{array}{r@{\hspace{.15cm}} l@{\hspace{.15cm}} l @{\hspace{.15cm}}|@{\hspace{.15cm}} r@{\hspace{.15cm}} l@{\hspace{.15cm}} l}
\multicolumn{6}{c}{\textsc{Meta-level substitution of values for vanilla terms}}
\\[3pt]

\cutsub\sval\svar \svar & \defeq & \sval
&
\cutsub\sval\svar \cuta\stmtwo\svartwo \stm & \defeq &  \cuta{\cutsub\sval\svar\stmtwo}\svartwo \cutsub\sval\svar\stm
\\
 
 \cutsub\sval\svar \svartwo & \defeq & \svartwo
&
\cutsub\sval\svar \suba\svartwo\stmtwo\svarthree\stm & \defeq & \suba\svartwo{\cutsub\sval\svar\stmtwo}\svarthree \cutsub\sval\svar\stm
\\

\cutsub\sval\svar \la\svartwo\stm & \defeq & \la\svartwo\cutsub\sval\svar \stm
&
\cutsub\svartwo\svar \suba\svar\stmtwo\svarthree\stm & \defeq & \suba\svartwo{\cutsub\svartwo\svar\stmtwo}\svarthree \cutsub\svartwo\svar\stm
\\
\multicolumn{6}{c}{\cutsub{\la\svartwo\stmthree}\svar \suba\svar\stmtwo\svarthree\stm \ \ \defeq \ \  \cuta{\cuta{\cutsub{\la\svartwo\stmthree}\svar\stmtwo}\svartwo \stmthree}\svarthree \cutsub{\la\svartwo\stmthree}\svar\stm}

\end{array}$

\\[3pt]\hline
\begin{tabular}{c\colspace |\colspace c}
$\begin{array}{r\colspace l\colspace ll}
\multicolumn{3}{c}{\textsc{Root  cut elimination}}
\\[3pt]
\cuta{\lsctxp\sval}\svar \stm & \srtocut & \lsctxp{\cutsub\sval\svar \stm}
\end{array}$
&
$\begin{array}{c\colspace c}
\textsc{Ctx closure}
&
					\AxiomC{$\stm \srtocut \stm'$}
					\UnaryInfC{$\gsctxp\stm \stocut \gsctxp{\stm'}$}
					\DisplayProof
\end{array}$
\end{tabular}
\end{tabular}
\caption{The vanilla $\l$-calculus.}
\label{fig:SC-dynamic-small}
\end{figure}
\paragraph{The Vanilla $\l$-Calculus.} For defining small-step cut elimination for all terms, we rely on the crucial observation that \emph{every} vanilla term $\stm$ \emph{splits uniquely} as $\stm = \lsctxp\sval$, where $\lsctx$ is a left context---defined in \reffig{SC-dynamic-small}---and $\sval$ is a value. Note that a similar splitting is also used in the value substitution rule $\tovs$ of the VSC, but there is a key difference: in the VSC applications cannot be split as $\lctxp\val$ (because application is not a left rule), which is why some ESs cannot be eliminated, while \emph{all} vanilla terms can be split (because application is replaced by subtraction, which is a left rule).

We know how to substitute $\sval$, and we are left to define what to do with $\lsctx$. There are two options, namely mimicking the substitution rules by name (namely $\tos$) and by value ($\tovs$) for natural terms: either we carry $\lsctx$ along with $\sval$ in the propagation of the substitution, possibly duplicating or erasing it, as in the \cbn rule $\tos$ of the SC, or we commute it out, as in the \cbv rule $\tovs$ of the VSC. While in principle both are valid choices, the latter seems more natural because---even if we adopt the \cbn approach---cut-free vanilla terms map to \cbv normal natural terms, as shown by \refprop{cbv-nfs}, and \emph{not} to the \cbn ones. Essentially, to catch a \cbn semantics we should also modify something about cut-free proofs, but that goes against the starting point of our work, namely that vanilla cut-free proofs are a good representation of \cbv normal forms. 

The cut elimination rule $\stocut$ is then defined in \reffig{SC-dynamic-small} by commuting out the left context $\lsctx$, completing the definition of the vanilla $\l$-calculus. 

\paragraph{Subject Reduction.} The standard sanity check for the definition of cut elimination $\stocut$ is subject reduction.

\begin{toappendix}
\begin{proposition}[Subject reduction]
\NoteProof{prop:subject-reduction}
Let $\stm\in\scterms$ and  $\tderiv \exder \multiForm \sproves \stm\hastype\form$ be a derivation. If $\stm \stocut \stmtwo$ then there exists a derivation $\tderivtwo \exder \multiForm \sproves \stmtwo\hastype\form$.\label{prop:subject-reduction}
\end{proposition}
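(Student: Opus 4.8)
The plan is to separate the \emph{contextual closure} from the \emph{root rule}, reducing everything to two ingredients: a context replacement lemma and a value substitution lemma. For the contextual closure I would prove, by induction on the general context $\gsctx$, a \emph{replacement lemma}: any derivation of $\multiForm\sproves\gsctxp{\stm}\hastype\form$ contains a sub-derivation $\typctxtwo\sproves\stm\hastype\formtwo$, where $\typctxtwo$ is $\multiForm$ enriched with the hypotheses bound on the path from the root to the hole and $\formtwo$ is the type at the hole; conversely, any $\stmtwo$ with $\typctxtwo\sproves\stmtwo\hastype\formtwo$ can be plugged back to give $\multiForm\sproves\gsctxp{\stmtwo}\hastype\form$. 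Given this lemma, it suffices to show that the root rule $\cuta{\lsctxp{\sval}}\svar\stm\srtocut\lsctxp{\cutsub\sval\svar\stm}$ preserves \emph{both} the type and the typing context; the replacement lemma then lifts this to the full relation $\stocut$.

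The real content is the \emph{value substitution lemma}: if $\multiForm\sproves\sval\hastype\form$ and $\multiForm,\svar\hastype\form\sproves\stm\hastype\formtwo$, then $\multiForm\sproves\cutsub\sval\svar\stm\hastype\formtwo$. I would prove it by induction on $\stm$, following the clauses defining $\cutsub\sval\svar\stm$ in \reffig{SC-dynamic-small}. The axiom clauses and the clauses for abstraction, cut, and subtraction on a variable $\neq\svar$ follow immediately from the induction hypotheses, since the substitution simply commutes with the constructor. The delicate clause is the substitution of an abstraction $\sval=\la\svartwo\stmthree$ into a subtraction whose head is exactly $\svar$, namely $\suba\svar\stmtwo\svarthree\stmfour$, which produces the two nested cuts $\cuta{\cuta{\cutsub\sval\svar\stmtwo}\svartwo\stmthree}\svarthree{\cutsub\sval\svar\stmfour}$. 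Here $\form$ is forced to be an implication $\form_1\imply\form_2$; inverting $(\implyRightRule)$ on the value gives $\multiForm,\svartwo\hastype\form_1\sproves\stmthree\hastype\form_2$, inverting $(\implyLeftRule)$ on the subtraction gives a derivation of $\stmtwo$ of type $\form_1$ and one of $\stmfour$ of type $\formtwo$ under $\svarthree\hastype\form_2$, and the induction hypotheses applied to $\stmtwo$ and $\stmfour$ make the two created cuts type-check, threading the intermediate formulas $\form_1$ and $\form_2$. Notice that the additive $(\implyLeftRule)$ may have contracted $\svar$ into the premises $\stmtwo$ and $\stmfour$ via $\tctxplus$, which is exactly why the definition of $\cutsub\sval\svar\cdot$ descends recursively into them.

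Finally, I would establish root-rule preservation by induction on the left context $\lsctx$. In the base case $\lsctx=\ctxhole$ the term is a plain cut $\cuta\sval\svar\stm$; inverting it yields the two hypotheses of the substitution lemma, which concludes. When $\lsctx$ starts with a cut or a subtraction, I would invert that outermost rule in the derivation of $\lsctxp{\sval}$, move its bound hypothesis into the context typing $\stm$ by a routine weakening lemma, apply the induction hypothesis to the shorter left context, and then re-apply the same outer rule; the types match because both cut and subtraction preserve the type of their right premise. The only subtlety here is scoping: the variables bound by $\lsctx$ originally range only over $\sval$, and commuting them outward so that they also range over $\cutsub\sval\svar\stm$ must not capture free variables of $\stm$ --- this is guaranteed by the on-the-fly $\alpha$-renaming afforded by rewriting at a distance. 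I expect the main obstacle to be precisely this delicate abstraction-into-subtraction clause of the substitution lemma together with the freshness bookkeeping of the left-context commutation; the remaining cases are routine inversion-and-reassembly.
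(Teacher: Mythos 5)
Your proposal is correct and follows essentially the same route as the paper: the paper's proof proceeds by induction on $\stmthree$ (your value substitution lemma, with the same delicate clause for an abstraction substituted into a subtraction headed by the cut variable), on $\lsctx$ (commuting the left context out, using the same auxiliary weakening lemma), and on $\gsctx$ (your replacement lemma for the contextual closure). No gaps.
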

\end{toappendix}

\begin{proof}
Let $\stm =\gsctxp{\cuta{\lsctxp\sval}\svar\stmthree} \stocut \gsctxp{\lsctxp{\cutsub\sval\svar\stmthree}} = \stmtwo$. The proof is by induction on $\stmthree$, $\lsctx$, and $\gsctx$. Details in the Appendix\withoutproofs{ of the technical report \cite{accattoli2024vanillasequentcalculuscallbyvalue}}.\qed
\end{proof}


\section{Simulating the Vanilla $\l$-calculus in the VSC}
\label{sect:simulation-vanilla-by-vsc}
To simulate the vanilla $\l$-calculus in the VSC, we need to look at how the $\trndtosc\cdot$ translation relates the respective notions of meta-level substitution. On vanilla terms, substitution on left variable occurrences does also the work that is done by rule $\todb$ on the VSC. Unsurprisingly, then, the translation commutes with substitution only up to $\todb$.  

\begin{toappendix}
\begin{lemma}[Substitution and $\trsctond\cdot$ commute up to $\todb$]
\NoteProof{l:sc-to-nd-subst}$\subi{\trsctond\sval}\var \trsctond\stm \todb^* \trsctond{\cutsub\sval\svar \stm}$.\label{l:sc-to-nd-subst}
\end{lemma}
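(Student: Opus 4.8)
The plan is to proceed by structural induction on the vanilla term $\stm$, keeping the value $\sval$ and the variable $\svar$ fixed throughout. The intuition guiding every case is that ordinary capture-avoiding substitution $\subi{\trsctond{\sval}}{\var}{(\cdot)}$ on natural terms commutes with $\trsctond{\cdot}$ structurally, so that the goal reduces to gluing together the translations of the subterms; the \emph{only} place where this commutation is not an equality but requires a genuine $\todb$ step is when an \emph{abstraction} value is substituted onto a \emph{left} occurrence of $\svar$, that is, onto the head variable of a subtraction. This is precisely where a $\todb$-redex is created, so that the distance rule on the natural side performs the principal cut realized on the sequent side by the nested-cut clause of $\cutsub{\la\svartwo\stmthree}{\svar}{(\cdot)}$. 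Throughout I rely on the variable convention (so $\svartwo,\svarthree$ are fresh and $\svar\notin\fv{\sval}$) and on $\todb$ being a contextual closure, which lets me transport an induction-hypothesis reduction under $\la$, inside a cut, and inside a subtraction.

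First I would dispatch the routine cases, in each of which no new $\todb$-redex appears. If $\stm=\svar$ then both sides equal $\trsctond{\sval}$, and if $\stm=\svartwo\neq\svar$ both sides equal $\vartwo$, so the goal holds in zero steps. If $\stm=\la\svartwo\stm'$, $\stm=\cuta{\stmtwo}{\svartwo}{\stm'}$, or $\stm=\nsuba{\svartwo}{\stmtwo}{\svarthree}{\stm'}$ with head variable $\svartwo\neq\svar$, then $\subi{\trsctond{\sval}}{\var}{(\cdot)}$ pushes through $\trsctond{\cdot}$ structurally (using $\vartwo\neq\var$ for the subtraction head, so its head application stays untouched), and applying the induction hypothesis to the immediate subterms, lifted through the surrounding context, closes the case. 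The remaining routine case is a subtraction whose head variable is $\svar$ but with $\sval=\svartwo$ a variable: substituting $\vartwo$ for the head $\var$ keeps the head a variable, so no redex is created, and the induction hypothesis on the subterms again suffices.

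The crucial case is $\stm=\nsuba{\svar}{\stmtwo}{\svarthree}{\stm'}$ with $\sval=\la\svartwo\stmthree$, hence $\trsctond{\sval}=\la\vartwo \trsctond{\stmthree}$ and $\trsctond{\stm}=\sube{\var\,\trsctond{\stmtwo}}{\varthree}{\trsctond{\stm'}}$. Substituting the abstraction for the head occurrence $\var$ turns it into an application, hence into a $\todb$-redex with empty substitution context, which I then fire:
\[
\subi{\trsctond{\sval}}{\var}{\trsctond{\stm}}
\;=\;
\sube{(\la\vartwo \trsctond{\stmthree})\,(\subi{\trsctond{\sval}}{\var}{\trsctond{\stmtwo}})}{\varthree}{\subi{\trsctond{\sval}}{\var}{\trsctond{\stm'}}}
\;\todb\;
\sube{\sube{\subi{\trsctond{\sval}}{\var}{\trsctond{\stmtwo}}}{\vartwo}{\trsctond{\stmthree}}}{\varthree}{\subi{\trsctond{\sval}}{\var}{\trsctond{\stm'}}}.
\]
Applying the induction hypothesis to $\stmtwo$ and to $\stm'$ then rewrites $\subi{\trsctond{\sval}}{\var}{\trsctond{\stmtwo}}\todb^*\trsctond{\cutsub{\sval}{\svar}{\stmtwo}}$ and $\subi{\trsctond{\sval}}{\var}{\trsctond{\stm'}}\todb^*\trsctond{\cutsub{\sval}{\svar}{\stm'}}$ inside their explicit-substitution positions, yielding $\sube{\sube{\trsctond{\cutsub{\sval}{\svar}{\stmtwo}}}{\vartwo}{\trsctond{\stmthree}}}{\varthree}{\trsctond{\cutsub{\sval}{\svar}{\stm'}}}$. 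This is exactly $\trsctond{\cuta{\cuta{\cutsub{\sval}{\svar}{\stmtwo}}{\svartwo}{\stmthree}}{\svarthree}{\cutsub{\sval}{\svar}{\stm'}}}=\trsctond{\cutsub{\sval}{\svar}{\stm}}$, using that $\trsctond{\stmthree}$ is left untouched because $\svar\notin\fv{\sval}$, which matches the fact that the nested-cut clause does not substitute into $\stmthree$.

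I expect this last case to be the only real obstacle: the point is to check that the \emph{single} created $\todb$ step corresponds precisely to the nested double cut generated by the abstraction clause of $\cutsub{\cdot}{\cdot}{\cdot}$, and that the leftover substitutions on the two subterms are discharged verbatim by the induction hypotheses. All other cases are mere bookkeeping, namely commuting ordinary substitution with $\trsctond{\cdot}$ and transporting the induction hypothesis through contexts, which is sound because $\todb$ is closed under all natural-deduction contexts.
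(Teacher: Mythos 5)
Your proof is correct and follows essentially the same route as the paper's: a structural induction on $\stm$ in which every case is a direct commutation of meta-level substitution with $\trsctond\cdot$ except the one where an abstraction value is substituted for the head variable of a subtraction, which creates exactly the single $\todb$-redex (with empty substitution context) whose firing produces the nested double cut of the abstraction clause of $\cutsub\sval\svar\stm$. Nothing to add.
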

\end{toappendix}

To establish the simulation, we need a lemma about the commutation of contexts and of the translation. \emph{Technicality}: the translations are extended to contexts by setting $\trsctond\ctxhole\defeq \ctxhole$ and $\trndtosc\ctxhole\defeq \ctxhole$, and defining them as for terms on the other cases.

\begin{toappendix}
\begin{lemma}[Contexts and $\trsctond\cdot$ translation]
\NoteProof{l:transl-sc-to-nd-ctxs}\label{l:transl-sc-to-nd-ctxs}
\hfill
\begin{enumerate}
	\item $\trsctond\lsctx$ is a $\lctx$ context and $\trsctond{\lsctxp\stm} = \trsctond\lsctx\ctxholep{\trsctond\stm}$.
	\item $\trsctond\gsctx$ is a  $\gctx$ context and $\trsctond{\gsctxp\stm} = \trsctond\gsctx\ctxholep{\trsctond\stm}$.
\end{enumerate}
\end{lemma}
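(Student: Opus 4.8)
The plan is to prove both statements by structural induction on the vanilla context, unfolding the definition of $\trsctond\cdot$ on the outermost constructor and checking simultaneously that the translated expression is generated by the target grammar and that the displayed commutation equation holds (feeding the induction hypothesis to the immediate sub-context). I would treat item~2 as the main induction, since its commutation equation already contains the content needed for item~1: the left-context grammar $\lsctx \grameq \ctxhole \mid \cuta\stm\svar\lsctx \mid \nsuba\svartwo\stm\svar\lsctx$ is a sub-grammar of $\gsctx$, so for item~1 it only remains to record the sharper fact that the output falls into the substitution-context grammar $\lctx \grameq \ctxhole \mid \sube\tmtwo\var\lctx$.

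For item~2, the base case $\gsctx = \ctxhole$ holds by the stipulation $\trsctond\ctxhole \defeq \ctxhole$. The four cases $\la\svar\gsctx'$, $\cuta\stm\svar\gsctx'$, $\cuta\gsctx'\svar\stm$ and $\nsuba\svartwo\stm\svar\gsctx'$ are routine: unfolding $\trsctond\cdot$ produces, respectively, $\la\var\trsctond{\gsctx'}$, $\sube{\trsctond\stm}{\var}{\trsctond{\gsctx'}}$, $\sube{\trsctond{\gsctx'}}{\var}{\trsctond\stm}$ and $\sube{\vartwo\trsctond\stm}{\var}{\trsctond{\gsctx'}}$, each a single natural-deduction context layer matching one of the clauses $\la\var\ctx$, $\sube\tmtwo\var\ctx$, $\sube\ctx\var\tm$, $\sube\tmtwo\var\ctx$, with the hole sitting in an immediate sub-context. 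In all four the induction hypothesis applies directly, giving both that $\trsctond{\gsctx'}$ is a $\gctx$ context and that plugging commutes; the commutation equation for the whole context then follows because plugging distributes through the surrounding abstraction or explicit substitution.

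The single delicate case, and the only real obstacle, is $\gsctx = \nsuba\svartwo\gsctx'\svar\stm$, where the hole sits in the \emph{left} sub-term of a subtraction. Here $\trsctond{\nsuba\svartwo\gsctx'\svar\stm} = \sube{\vartwo\trsctond{\gsctx'}}{\var}{\trsctond\stm}$, so the hole migrates into the application $\vartwo\trsctond{\gsctx'}$ manufactured by the translation and is then buried as the argument of an explicit substitution. The point to verify is that this is still a general context, which I would do by composing two grammar clauses: by the induction hypothesis $\trsctond{\gsctx'}$ is a $\gctx$ context, hence $\vartwo\trsctond{\gsctx'}$ is one by the clause $\tm\ctx$, and therefore $\sube{\vartwo\trsctond{\gsctx'}}{\var}{\trsctond\stm}$ is one by the clause $\sube\ctx\var\tm$. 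The commutation equation then holds because filling the hole commutes through both the application and the enclosing substitution, invoking the induction hypothesis on $\gsctx'$. This is the only place where the hole crosses a constructor boundary, and it is exactly where a naive inspection of the grammar (forgetting that natural contexts nest an application inside a substitution) could fail.

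Finally, for item~1 the induction ranges only over the three left-context productions, and in each the hole stays to the right of the explicit substitution: $\cuta\stm\svar\lsctx'$ and $\nsuba\svartwo\stm\svar\lsctx'$ translate to $\sube{\trsctond\stm}{\var}{\trsctond{\lsctx'}}$ and $\sube{\vartwo\trsctond\stm}{\var}{\trsctond{\lsctx'}}$, each matching the production $\sube\tmtwo\var\lctx$ since $\trsctond{\lsctx'}$ is a substitution context by the induction hypothesis. Hence $\trsctond\lsctx$ is a $\lctx$ context, and its commutation equation is the corresponding instance of item~2. No obstacle arises here, so the whole proof reduces to the careful bookkeeping of the subtraction-left case above.
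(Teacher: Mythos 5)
Your proof is correct and follows essentially the same route as the paper's: a structural induction on the context, checking for each grammar production that the translation lands in the target context grammar and that plugging commutes, with the only nontrivial point being the case $\nsuba\svartwo\gsctx\svar\stm$ where the hole ends up inside the application $\vartwo\trsctond\gsctx$ manufactured by the translation. Your handling of that case (composing the clauses $\tm\ctx$ and $\sube\ctx\var\tm$) is exactly what is needed, and the derivation of item 1 as the sub-grammar instance is fine.
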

\end{toappendix}

\begin{proposition}[VSC simulates vanilla]
If $\stm \stocut \stmtwo$ then  $\trsctond\stm \tovs\todb^* \trsctond\stmtwo$.\label{prop:sc-to-nd-simulation}
\end{proposition}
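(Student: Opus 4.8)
The plan is to reduce the statement to the root cut-elimination case by exploiting that both $\stocut$ and VSC reduction are contextual closures of root rules. Concretely, a step $\stm \stocut \stmtwo$ has the shape $\stm = \gsctxp{\cuta{\lsctxp\sval}\svar\stmthree}$ and $\stmtwo = \gsctxp{\lsctxp{\cutsub\sval\svar\stmthree}}$ for some general context $\gsctx$, left context $\lsctx$, value $\sval$, and term $\stmthree$. By \reflemma{l:transl-sc-to-nd-ctxs}(2), $\trsctond\gsctx$ is a (natural) general context and $\trsctond{\cdot}$ commutes with plugging into $\gsctx$, so it suffices to prove the claim for the root step $\cuta{\lsctxp\sval}\svar\stmthree \srtocut \lsctxp{\cutsub\sval\svar\stmthree}$ and then close under $\trsctond\gsctx$, using that VSC reduction is stable under general contexts.

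For the root case, I would unfold the translation of the left-hand side. By definition of $\trsctond{\cdot}$ on cuts, $\trsctond{\cuta{\lsctxp\sval}\svar\stmthree} = \sube{\trsctond{\lsctxp\sval}}\var \trsctond\stmthree$, and by \reflemma{l:transl-sc-to-nd-ctxs}(1) we have $\trsctond{\lsctxp\sval} = \trsctond\lsctx\ctxholep{\trsctond\sval}$ with $\trsctond\lsctx$ a substitution context $\lctx$. The key observation is that $\trsctond\sval$ is a \emph{value}: indeed $\trsctond\svar = \var$ and $\trsctond{\la\svar\stm} = \la\var\trsctond\stm$ are both VSC values. Hence $\sube{\trsctond\lsctx\ctxholep{\trsctond\sval}}\var \trsctond\stmthree$ is exactly a $\tovs$-redex, and firing it yields $\trsctond\lsctx\ctxholep{\subi{\trsctond\sval}\var \trsctond\stmthree}$. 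At this point \reflemma{l:sc-to-nd-subst} gives $\subi{\trsctond\sval}\var \trsctond\stmthree \todb^* \trsctond{\cutsub\sval\svar\stmthree}$; lifting these steps through $\trsctond\lsctx$ (a particular general context) and repackaging via \reflemma{l:transl-sc-to-nd-ctxs}(1) as $\trsctond\lsctx\ctxholep{\trsctond{\cutsub\sval\svar\stmthree}} = \trsctond{\lsctxp{\cutsub\sval\svar\stmthree}}$, I obtain $\trsctond{\cuta{\lsctxp\sval}\svar\stmthree} \tovs \todb^* \trsctond{\lsctxp{\cutsub\sval\svar\stmthree}}$, exactly one $\tovs$ step followed by $\todb^*$, as required.

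The substantial content is front-loaded into the two supporting lemmas; given them, the proposition assembles directly. Within the argument, the only real subtlety is the identification of the translated redex as a $\tovs$-redex rather than a $\todb$-redex: this is precisely where the value restriction of \cbv surfaces, and it hinges on left contexts translating to substitution contexts and values translating to values. A second point requiring care is that the $\todb^*$ steps produced by \reflemma{l:sc-to-nd-subst} occur under the substitution context $\trsctond\lsctx$, so one must invoke stability of VSC reduction under general contexts to lift them. The genuine workhorse — and the step I would expect to be the main obstacle were it not assumed — is \reflemma{l:sc-to-nd-subst}, whose proof must account for the fact that vanilla substitution on left occurrences internalizes the work that $\todb$ performs in the VSC, which is exactly why the commutation holds only up to $\todb$ rather than on the nose.
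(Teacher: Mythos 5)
Your proposal is correct and follows essentially the same route as the paper's own proof: reduce to the root case via the context-translation lemma, identify the translated cut as a $\mathsf{vs}$-redex because left contexts map to substitution contexts and values map to values, fire it, and close with the substitution-commutation lemma and the context lemma again. No meaningful difference.
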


\begin{proof}
For a root step $\stm= \cuta{\lsctxp\sval}\svar \stmthree  \srtocut  \lsctxp{\cutsub\sval\svar \stmthree}= \stmtwo$, we have:
\begin{center}
$\begin{array}{ccllllllllll}
\trsctond{\cuta{\lsctxp\sval}\svar \stmthree} &= &\sube{\trsctond{\lsctxp\sval}}\var \trsctond\stmthree
 &=_{\reflemmaeq{transl-sc-to-nd-ctxs}.1}& 
\sube{\trsctond\lsctx\ctxholep{\trsctond\sval}}\var \trsctond\stmthree
 &\rtovs& 
\trsctond\lsctx\ctxholep{\subi{\trsctond\sval}\var \trsctond\stmthree}
\\[3pt] (\reflemmaeq{sc-to-nd-subst}) &\todb^*&   
\trsctond\lsctx\ctxholep{\trsctond{\cutsub\sval\svar\stmthree}}
 &=_{\reflemmaeq{transl-sc-to-nd-ctxs}.1}& 
\trsctond{\lsctxp{\cutsub\sval\svar\stmthree}}.
\end{array}$
\end{center}
For steps in contexts, the simulation follows from the root case and \reflemma{transl-sc-to-nd-ctxs}.2.\qed
\end{proof}

Putting together the simulation of cut elimination with the translation of cut-free terms, we obtain the following property.
\begin{lemma}[Preservation of VSC termination]
Let $\stm \stocut^* \stmtwo$ with $\stmtwo$ cut-free. Then $\trsctond\stm \tovsc^* \trsctond\stmtwo$ with $\trsctond\stmtwo$ is $\tovsc$-normal.
\end{lemma}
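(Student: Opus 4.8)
The plan is to derive the two conjuncts of the conclusion from facts already in hand, glued by a short induction on the length of the reduction $\stm \stocut^* \stmtwo$. Indeed, the statement is essentially a corollary of \refprop{sc-to-nd-simulation} (VSC simulates vanilla) for the reduction part, and of \refprop{cbv-nfs} for the normality part, so the only genuine work is the bookkeeping of composing single-step simulations into one.

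For the reduction part, $\trsctond\stm \tovsc^* \trsctond\stmtwo$, I would induct on the number $n$ of $\stocut$ steps in $\stm \stocut^* \stmtwo$. If $n = 0$ then $\stm = \stmtwo$ and the empty $\tovsc^*$ reduction does the job. If $n > 0$, decompose the reduction as $\stm \stocut \stmthree \stocut^* \stmtwo$ with the tail having $n-1$ steps. By \refprop{sc-to-nd-simulation} the head step gives $\trsctond\stm \tovs\todb^* \trsctond\stmthree$; since $\tovsc$ is $\todb \cup \tovs$, this is in particular a $\tovsc^*$ reduction (and a nonempty one, though nonemptiness is not needed here). The induction hypothesis applied to the tail yields $\trsctond\stmthree \tovsc^* \trsctond\stmtwo$, and concatenating the two sequences gives $\trsctond\stm \tovsc^* \trsctond\stmtwo$.

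For the normality part, I would not prove anything new: the hypothesis that $\stmtwo$ is cut-free is exactly the premise of \refprop{cbv-nfs}, which directly gives that $\trsctond\stmtwo$ is $\tovsc$-normal. There is no real obstacle to overcome, but the one point worth flagging is the role of the cut-freeness hypothesis: it is what powers the normality conclusion, and thereby what upgrades a bare simulation into a genuine \emph{termination-preservation} statement in the vanilla-to-VSC direction. Had $\stmtwo$ carried cuts, its translation would not be $\tovsc$-normal and the VSC reduction could continue, so the matching of endpoints hinges precisely on landing in a cut-free vanilla term.
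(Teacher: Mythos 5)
Your proposal is correct and follows exactly the paper's own argument: the paper's proof likewise invokes Proposition \ref{prop:sc-to-nd-simulation} to obtain $\trsctond\stm \tovsc^* \trsctond\stmtwo$ (leaving the step-by-step concatenation implicit, which you merely spell out) and Proposition \ref{prop:cbv-nfs} for the $\tovsc$-normality of $\trsctond\stmtwo$. Nothing is missing.
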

\begin{proof}
By \refprop{sc-to-nd-simulation}, we obtain $\trsctond\stm \tovsc^* \trsctond\stmtwo$. By \refprop{cbv-nfs}, $\trsctond\stmtwo$ is a $\tovsc$-normal form. \qed
\end{proof}
\section{Simulating the VSC in the Vanilla $\l$-Calculus} 
\label{sect:simulating-vsc-vanilla}
In the other direction, the translation and meta-level substitution commute neatly.
\begin{toappendix}
\begin{lemma}[Substitution and $\trndtosc\cdot$ commute]
\NoteProof{l:nd-to-sc-subst}
$\cutsub{\trndtosc{\val}}\svar\trndtosc{\tm} = \trndtosc{\subi\val\var\tm}$.\label{l:nd-to-sc-subst}
\end{lemma}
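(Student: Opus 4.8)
The plan is a structural induction on the natural $\lambda$-term $\tm$, with $\val$ fixed. A preliminary remark makes the right-hand side meaningful: $\trndtosc{\val}$ is always a value of $\scterms$, since $\trndtosc{\var} = \svar$ and $\trndtosc{\la\var\tmtwo} = \la\svar{\trndtosc{\tmtwo}}$ are exactly the two shapes of vanilla values, so that the meta-level substitution $\cutsub{\trndtosc{\val}}{\svar}{\cdot}$ is defined on $\trndtosc{\tm}$. In the binding cases I adopt the usual convention that bound variables are fresh for $\var$ and for $\fv{\val}$.

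The cases where $\tm$ is a variable, an abstraction, or an explicit substitution are routine. In each of them $\trndtosc{\cdot}$ commutes with the term constructor, the natural substitution $\subi{\val}{\var}{\cdot}$ distributes through it in the standard way, and the corresponding clause of $\cutsub{\trndtosc{\val}}{\svar}{\cdot}$ mirrors that distribution, e.g. $\cutsub{\sval}{\svar}{\la\svartwo\stm} = \la\svartwo{\cutsub{\sval}{\svar}{\stm}}$ and $\cutsub{\sval}{\svar}{\cuta{\stmtwo}{\svartwo}{\stm}} = \cuta{\cutsub{\sval}{\svar}{\stmtwo}}{\svartwo}{\cutsub{\sval}{\svar}{\stm}}$; plugging in the induction hypothesis then closes the case. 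The variable case splits into $\cutsub{\sval}{\svar}{\svar} = \sval = \trndtosc{\val}$ and $\cutsub{\sval}{\svar}{\svartwo} = \svartwo$, matching $\subi{\val}{\var}{\var} = \val$ and $\subi{\val}{\var}{\vartwo} = \vartwo$.

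The only case carrying real content is the application $\tm = \tmtwo\tmthree$, whose translation $\trndtosc{\tmtwo\tmthree} = \cuta{\trndtosc{\tmtwo}}{\svarthree}{\nsuba{\svarthree}{\trndtosc{\tmthree}}{\svarfour}{\svarfour}}$ introduces a cut and a subtraction with fresh variables $\svarthree,\svarfour$. I would push $\cutsub{\trndtosc{\val}}{\svar}{\cdot}$ through the cut, then through the subtraction, noting that the axiom body $\svarfour$ stays put because $\svarfour \neq \svar$, obtaining
\[
\cutsub{\trndtosc{\val}}{\svar}{\trndtosc{\tmtwo\tmthree}}
= \cuta{\cutsub{\trndtosc{\val}}{\svar}{\trndtosc{\tmtwo}}}{\svarthree}{\nsuba{\svarthree}{\cutsub{\trndtosc{\val}}{\svar}{\trndtosc{\tmthree}}}{\svarfour}{\svarfour}}.
\]
The induction hypothesis on $\tmtwo$ and $\tmthree$ turns the right-hand side into $\trndtosc{(\subi{\val}{\var}{\tmtwo})(\subi{\val}{\var}{\tmthree})} = \trndtosc{\subi{\val}{\var}{\tmtwo\tmthree}}$ (choosing the same fresh names in the translation), which is the goal.

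The one point that deserves care---and really the whole content of the statement---is that pushing the substitution through the subtraction uses the \emph{simple} clause $\cutsub{\sval}{\svar}{\nsuba{\svartwo}{\stmtwo}{\svarthree}{\stm}} = \nsuba{\svartwo}{\cutsub{\sval}{\svar}{\stmtwo}}{\svarthree}{\cutsub{\sval}{\svar}{\stm}}$, and \emph{never} the principal-cut-elimination clause for $\cutsub{\la\svartwo{\stmthree}}{\svar}{\nsuba{\svar}{\stmtwo}{\svarthree}{\stm}}$. This is guaranteed because the left variable $\svarthree$ of the subtraction is fresh, hence different from $\svar$. More structurally: natural $\lambda$-terms have no subtractions at all, so every subtraction occurring in an image $\trndtosc{\tm}$ is created by the translation of an application and therefore carries a freshly introduced left variable, never the variable $\svar$ we substitute along. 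This dormancy of the hard clause is exactly why the commutation here is an \emph{exact} identity, in contrast with \reflemma{sc-to-nd-subst}, where substitution commutes with the reverse translation only up to $\todb$.
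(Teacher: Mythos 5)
Your proof is correct and follows essentially the same route as the paper's: a structural induction on $\tm$ whose only non-routine case is the application, where the freshness of the variables introduced by $\trndtosc{\tm\tmtwo}$ guarantees that only the homomorphic clause of $\cutsub{\sval}{\svar}{\cdot}$ is ever triggered on the generated subtraction, so the commutation holds on the nose. Your closing remark correctly pinpoints why the hard (principal-cut) clause stays dormant here, in contrast with the converse direction where commutation only holds up to $\todb$.
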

\end{toappendix}

As before, for the simulation we need a lemma about contexts.
\begin{toappendix}
\begin{lemma}[Contexts and $\trndtosc\cdot$ translation]
\NoteProof{l:transl-nd-to-sq-ctxs}\label{l:transl-nd-to-sq-ctxs}
\hfill
\begin{enumerate}
	\item $\trndtosc\lctx$ is an $\lsctx$ context and $\trndtosc{\lctxp\tm} = \trndtosc\lctx\ctxholep{\trndtosc\tm}$.
	\item $\trndtosc\ctx$ is a $\gsctx$ context and $\trndtosc{\ctxp\tm} = \trndtosc\ctx\ctxholep{\trndtosc\tm}$.
\end{enumerate}
\end{lemma}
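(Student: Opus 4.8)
The plan is to prove both items by a routine structural induction on the context, establishing item~1 for substitution contexts $\lctx$ and item~2 for general contexts $\ctx$. In each case I verify two things simultaneously: that $\trndtosc\cdot$ sends the context into the claimed class of vanilla contexts, and that $\trndtosc\cdot$ commutes with plugging, i.e.\ $\trndtosc{\ctx\ctxholep\tm} = \trndtosc\ctx\ctxholep{\trndtosc\tm}$. The base case $\ctxhole$ is immediate for both items: $\trndtosc\ctxhole = \ctxhole$ is at once a left context and a general vanilla context, and $\trndtosc{\ctxhole\ctxholep\tm} = \trndtosc\tm = \trndtosc\ctxhole\ctxholep{\trndtosc\tm}$.

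For item~1 the only inductive case is $\lctx = \sube\tmtwo\var\lctxtwo$. Unfolding the translation gives $\trndtosc\lctx = \cuta{\trndtosc\tmtwo}\svar \trndtosc\lctxtwo$; by the i.h.\ $\trndtosc\lctxtwo$ is a left context, so $\trndtosc\lctx$ matches the left-context production $\cuta\stm\svar\lsctx$, and the plugging identity follows by unfolding and applying the i.h.\ to $\lctxtwo$. For item~2 the cases $\la\var\ctxtwo$, $\sube\tmtwo\var\ctxtwo$ and $\sube\ctxtwo\var\tm$ are equally mechanical: the translation carries the hole into the recursive sub-context, which by the i.h.\ is a general vanilla context, so the result matches the productions $\la\svar\gsctx$, $\cuta\stm\svar\gsctx$ and $\cuta\gsctx\svar\stm$ respectively, the plugging identity again following by unfolding and the i.h.

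The only genuinely interesting cases are the two application contexts, where the translation of an application introduces an auxiliary cut, an auxiliary subtraction, and fresh variables, so one must check that the hole lands in a position still generated by the $\gsctx$ grammar. For $\ctx = \ctxtwo\tmtwo$ we get $\trndtosc\ctx = \cuta{\trndtosc\ctxtwo}\svar \nsuba\svar{\trndtosc\tmtwo}\svartwo \svartwo$, so the hole sits in the left sub-term of the introduced cut and $\trndtosc\ctx$ matches $\cuta\gsctx\svar\stm$ via the i.h.\ on $\ctxtwo$. For $\ctx = \tmtwo\ctxtwo$ we get $\trndtosc\ctx = \cuta{\trndtosc\tmtwo}\svar \nsuba\svar{\trndtosc\ctxtwo}\svartwo \svartwo$, so the hole sits in the middle term of the introduced subtraction; by the i.h.\ $\trndtosc\ctxtwo$ is a general vanilla context, hence $\nsuba\svar{\trndtosc\ctxtwo}\svartwo\svartwo$ matches the production $\nsuba\svartwo\gsctx\svar\stm$ and the whole term matches $\cuta\stm\svar\gsctx$. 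In both cases the plugging identity is obtained by unfolding the translation of the application, applying the i.h.\ to $\ctxtwo$, and reassembling. I do not expect any real obstacle here: the only point requiring care is precisely these two application cases, namely tracking that the freshly introduced cut and subtraction keep the hole in a $\gsctx$-generated position, and that the freshness of $\svar,\svartwo$ prevents variable capture when the translation is extended to contexts.
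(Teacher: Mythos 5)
Your proof is correct and follows essentially the same route as the paper: a routine structural induction on the context, checking for each production that the translation lands in the right vanilla context class and commutes with plugging, with the only delicate points being the two application cases and the freshness of the auxiliary variables introduced there. Nothing to add.
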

\end{toappendix}

\begin{proposition}[Vanilla simulates VSC]
\NoteProof{prop:nd-simulation-by-sc}\label{prop:nd-simulation-by-sc}\hfill
\begin{enumerate}
\item If $\tm \todb \tmtwo$ then $\trndtosc\tm \stocut\stocut \trndtosc\tmtwo$.
\item If $\tm \tovs \tmtwo$ then $\trndtosc\tm \stocut \trndtosc\tmtwo$.
\end{enumerate}
\end{proposition}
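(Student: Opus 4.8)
The plan is to prove each implication by reducing to its root case and then invoking contextual closure, exactly as in the proof of \refprop{sc-to-nd-simulation}. Closure is uniform for both rules: if the VSC step fires a root redex $\tmfive$ inside a context $\ctx$, so that $\tm = \ctxp\tmfive$ and $\tmtwo = \ctxp{\tmfive'}$ with $\tmfive \rtodb \tmfive'$ (resp.\ $\rtovs$) a root step, then \reflemma{transl-nd-to-sq-ctxs}.2 gives $\trndtosc\tm = \trndtosc\ctx\ctxholep{\trndtosc\tmfive}$ and $\trndtosc\tmtwo = \trndtosc\ctx\ctxholep{\trndtosc{\tmfive'}}$, with $\trndtosc\ctx$ a general vanilla context $\gsctx$; the root-case reduction sequence then transports into $\trndtosc\ctx$ by the contextual closure of $\stocut$. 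So it suffices to treat the two root steps, and I would start with $\tovs$, which is the clean one.

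For the root $\tovs$ step $\sube{\sctxp\val}\var\tmthree \rtovs \sctxp{\subi\val\var\tmthree}$, I would translate the left-hand side to $\trndtosc{\sube{\sctxp\val}\var\tmthree} = \cuta{\trndtosc{\sctxp\val}}\svar{\trndtosc\tmthree}$, where $\svar$ is the image of the cut variable $\var$. By \reflemma{transl-nd-to-sq-ctxs}.1 the left component equals $\trndtosc\sctx\ctxholep{\trndtosc\val}$ with $\trndtosc\sctx$ a left context $\lsctx$, and a one-line case split on $\val$ shows $\trndtosc\val$ is a vanilla value $\sval$ (a variable or an abstraction). Hence the translation already has the shape $\cuta{\lsctxp\sval}\svar\stm$ of a root cut-elimination redex, so a single $\srtocut$ step yields $\trndtosc\sctx\ctxholep{\cutsub{\trndtosc\val}\svar{\trndtosc\tmthree}}$. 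Then \reflemma{nd-to-sc-subst} rewrites the meta-substitution to $\trndtosc{\subi\val\var\tmthree}$, and a second use of \reflemma{transl-nd-to-sq-ctxs}.1 folds the left context back, giving $\trndtosc{\sctxp{\subi\val\var\tmthree}}$, the translation of the reduct. This is exactly one $\stocut$ step.

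For the root $\todb$ step $\lctxp{\la\var\tmthree}\tmfour \rtodb \lctxp{\sube\tmfour\var\tmthree}$, I would first use that the redex is an application, so its translation is $\cuta{\trndtosc{\lctxp{\la\var\tmthree}}}\svar{\nsuba\svar{\trndtosc\tmfour}\svartwo\svartwo}$ with $\svar,\svartwo$ fresh, and by \reflemma{transl-nd-to-sq-ctxs}.1 the left component equals $\trndtosc\lctx\ctxholep{\la\svarthree\trndtosc\tmthree}$, a left context filled with the value $\la\svarthree\trndtosc\tmthree$. A first $\srtocut$ step therefore fires, and computing $\cutsub{\la\svarthree\trndtosc\tmthree}\svar{\nsuba\svar{\trndtosc\tmfour}\svartwo\svartwo}$ triggers precisely the clause of \reffig{SC-dynamic-small} that substitutes an abstraction onto its own subtraction variable. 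Using \reflemma{vacuous-subs} (since $\svar \notin \fv{\trndtosc\tmfour}$ by freshness) together with the variable clauses of meta-substitution (as $\svartwo \neq \svar$), this collapses to the administrative cut $\cuta{\cuta{\trndtosc\tmfour}\svarthree\trndtosc\tmthree}\svartwo\svartwo$, sitting inside $\trndtosc\lctx$. A second $\stocut$ step then erases it: writing the inner term in its unique splitting $\lsctxp\sval$, we get $\cuta{\lsctxp\sval}\svartwo\svartwo \srtocut \lsctxp{\cutsub\sval\svartwo\svartwo} = \lsctxp\sval$ because $\cutsub\sval\svartwo\svartwo = \sval$. The result is $\trndtosc\lctx\ctxholep{\cuta{\trndtosc\tmfour}\svarthree\trndtosc\tmthree} = \trndtosc{\lctxp{\sube\tmfour\var\tmthree}}$, yielding the two $\stocut$ steps claimed.

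The hard part will be the $\todb$ case. Its subtlety is that the translation of application introduces a fresh \emph{administrative} cut/subtraction pair $\nsuba\svar{\trndtosc\tmfour}\svartwo\svartwo$ encoding the missing $(@)$ rule, so that the principal cut step does not land directly on the image of the reduct but first produces the administrative cut $\cuta{\cdots}\svartwo\svartwo$, which is not itself in the image of $\trndtosc\cdot$. Recognising that this extra cut is exactly eliminated by a second $\stocut$ step is the crux; the supporting points are the careful bookkeeping of the fresh variables $\svar,\svartwo$ (which licenses \reflemma{vacuous-subs} and the variable clauses of meta-substitution) and the appeal to the unique splitting property (which guarantees the second step applies to the arbitrary sub-term $\cuta{\trndtosc\tmfour}\svarthree\trndtosc\tmthree$). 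By contrast the $\tovs$ case is routine, since there the translation already exposes a root redex and \reflemma{nd-to-sc-subst} does the remaining work in one step.
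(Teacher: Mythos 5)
Your proposal is correct and follows essentially the same route as the paper's proof: the same reduction to root cases via \reflemma{transl-nd-to-sq-ctxs}, the same single-step argument for $\tovs$ using \reflemma{nd-to-sc-subst}, and the same two-step argument for $\todb$ in which the first cut-elimination step produces the administrative cut $\cuta{\cdot}\svartwo\svartwo$ (handled via \reflemma{vacuous-subs}) and the second erases it by the splitting $\lsctxp\sval$ with $\cutsub\sval\svartwo\svartwo = \sval$. Nothing is missing.
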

\begin{proof}
For root steps:\applabel{prop:nd-simulation-by-sc}
\begin{enumerate}
\item \emph{$\beta$ at a distance}, \ie $\tm  = \sctxp{\la\var\tmthree}\tmfour \rtodb \sctxp{\sube\tmfour\var\tmthree} = \tmtwo$. With $\svartwo$ and $\svarthree$ fresh:
\begin{center}
$\begin{array}{clllllllllll}
\trndtosc{\sctxp{\la\var\tmthree}\tmfour} 
&= &
\cuta{\trndtosc{\sctxp{\la\var\tmthree}}}\svartwo \nsuba\svartwo {\trndtosc\tmfour}\svarthree\svarthree
\\[2.5pt] & =_{\reflemmaeq{transl-sc-to-nd-ctxs}.1} & 
\cuta{\trndtosc\sctx\ctxholep{\la\svar\trndtosc\tmthree}}\svartwo \nsuba\svartwo {\trndtosc\tmfour}\svarthree\svarthree
\\[2.5pt] & \stocut & 
\multicolumn{3}{l}{\trndtosc\sctx\ctxholep{  \cuta{\cuta{\cutsub{\la\svar\trndtosc\tmthree}\svartwo \trndtosc\tmfour}\svar\trndtosc\tmthree}\svarthree \cutsub{\la\svar\trndtosc\tmthree}\svartwo\svarthree }}
\\[2.5pt] & =_{\reflemmaeq{vacuous-subs}} & 
\trndtosc\sctx\ctxholep{  \cuta{\cuta{\trndtosc\tmfour}\svar\trndtosc\tmthree}\svarthree \svarthree }
\\[2.5pt] (\star) & \stocut & 
\trndtosc\sctx\ctxholep{ \cuta{\trndtosc\tmfour}\svar \trndtosc\tmthree }
\\[2.5pt] & = & 
\trndtosc\sctx\ctxholep{ \trndtosc{\sube\tmfour\svar \tmthree} }
& =_{\reflemmaeq{transl-sc-to-nd-ctxs}.1} & 
\trndtosc{\sctxp{\sube\tmfour\var\tmthree}}
\end{array}$
\end{center}
Where step $(\star)$ is given by the general fact that $\cuta\stm\svar\svar \stocut \stm$, since $\stm = \lsctxp\sval$ for some $\lsctx$ and $\sval$, thus $\cuta\stm\svar\svar = \cuta{\lsctxp\sval}\svar\svar \stocut \lsctxp\sval = \stm$.
\item \emph{Value substitution}, \ie $\tm  = \sube{\sctxp\val}\var\tmthree \rtodb \sctxp{\subi\val\var\tmthree} = \tmtwo$. Then:
\begin{center}
$\begin{array}{clllllllllll}
\trndtosc{ \sube{\sctxp\val}\var\tmthree } 
&= &
\cuta{\trndtosc{\sctxp\val}}\svar \trndtosc\tmthree
\\[2.5pt] & =_{\reflemmaeq{transl-sc-to-nd-ctxs}.1} & 
\cuta{\trndtosc\sctx\ctxholep{\trndtosc\val}}\svar \trndtosc\tmthree
\\[2.5pt] & \stocut & 
\trndtosc\sctx\ctxholep{\cutsub{\trndtosc\val}\svar \trndtosc\tmthree}
\\[2.5pt]  & =_{\reflemmaeq{nd-to-sc-subst}} & 
\trndtosc\sctx\ctxholep{\trndtosc{\subi\val\var\tmthree}}
& =_{\reflemmaeq{transl-sc-to-nd-ctxs}.1} & 
\trndtosc{ \sctxp{\subi\val\var\tmthree} }
\end{array}$
\end{center}
\end{enumerate}
For steps in contexts, the simulation follows from the root case and \reflemma{transl-nd-to-sq-ctxs}.2.\qed
\end{proof}

\paragraph{Translation of Normal Natural Terms.} As anticipated in \refsect{small-step-cut-el}, the translation of normal VSC terms does not give cut-free vanilla terms, because the translation of applications introduces cuts. Here we show that nonetheless the obtained terms are \emph{almost} cut-free, since they are cut-free up to some trivial cut elimination steps, dubbed \emph{renaming steps}. In particular, they strictly reduce the size of a vanilla term. 

\begin{definition}[Renaming Cut Elimination Steps]
A step $\stm = \gsctxp{\cuta{\lsctxp\sval}\svar\stmthree} \stocut \gsctxp{\lsctxp{\cutsub\sval\svar\stmthree}}=\stmtwo$ is a renaming step, noted $\stm \storencut \stmtwo$, if $\sval$ is a variable.
\end{definition}

\begin{toappendix}
\begin{proposition}[The $\trndtosc\cdot$ translation of VSC normal forms is almost cut-free]
\NoteProof{prop:trans-normal-nd-terms}
Let $\tm\in\ndesterms$ be $\tovsc$-normal. Then there exists $\stm$ cut-free such that $\trndtosc\tm \storencut^k \stm$ with $k\leq\size\tm$, where $\size\tm$ is the size (\ie number of constructors) in $\tm$. \label{prop:trans-normal-nd-terms}
\end{proposition}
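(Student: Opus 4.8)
The plan is to observe that the only cuts occurring in $\trndtosc\tm$ are those inserted by the translation of explicit substitutions and of applications, and to show that $\tovsc$-normality forces each of them to be a \emph{renaming} cut; these can then be removed one after another by $\storencut$-steps, each deleting exactly one cut.

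First I would prove an auxiliary lemma on the value produced by the translation. Writing $\trndtosc\tm = \lsctxp\sval$ for the unique splitting of a vanilla term into a left context and a value, I claim that $\sval$ is an abstraction exactly when $\tm = \sctxp{\la\var\tmtwo}$ for some substitution context $\sctx$ (call such a $\tm$ \emph{abstraction-ending}), and that $\sval$ is a variable otherwise. This is a routine induction on $\tm$ using the clauses of \reffig{translations}: an abstraction translates to an abstraction under the empty left context; an explicit substitution $\sube\tmtwo\var\tmthree$ merely prepends the frame $\cuta{\trndtosc\tmtwo}\svar\ctxhole$ to the left context computed for $\tmthree$, leaving the terminal value unchanged; and, crucially, an application translates by $\trndtosc{\tm\tmtwo} = \cuta{\trndtosc\tm}\svar\nsuba\svar{\trndtosc\tmtwo}\svartwo\svartwo$, whose terminal value is the \emph{variable} $\svartwo$. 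In particular, no application is abstraction-ending once translated.

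Next I would read off the cuts of $\trndtosc\tm$. By inspection of the translation, every cut of $\trndtosc\tm$ is either (a) a cut $\cuta{\trndtosc\tmtwo}\svar\trndtosc\tmthree$ coming from a subterm $\sube\tmtwo\var\tmthree$ of $\tm$, with content $\trndtosc\tmtwo$, or (b) a cut $\cuta{\trndtosc{\tm'}}\svar\nsuba\svar{\trndtosc\tmtwo}\svartwo\svartwo$ coming from an application subterm $\tm'\tmtwo$ of $\tm$, with content $\trndtosc{\tm'}$. Since $\tm$ is $\tovsc$-normal, in case (a) the subterm $\sube\tmtwo\var\tmthree$ is not a $\tovs$-redex, so $\tmtwo$ is not of the form $\sctxp\val$, hence (as abstractions are values) not abstraction-ending; in case (b) the subterm $\tm'\tmtwo$ is not a $\todb$-redex, so $\tm'$ is not of the form $\sctxp{\la\vartwo\tmthree}$, i.e. not abstraction-ending. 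In either case the auxiliary lemma gives that the content of the cut has a variable as terminal value, i.e. every cut of $\trndtosc\tm$ is a renaming cut.

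Finally I would iterate the elimination. A single renaming step $\gsctxp{\cuta{\lsctxp\svarthree}\svar\stmthree} \storencut \gsctxp{\lsctxp{\cutsub\svarthree\svar\stmthree}}$ substitutes a \emph{variable}, so, by the definition of $\cutsub\sval\svar\stm$ in \reffig{SC-dynamic-small}, it creates no cut and duplicates none: the cuts of $\lsctx$ are only commuted out, those of $\stmthree$ are only renamed, and the fired cut disappears, so the number of cuts decreases by exactly one. Moreover, renaming a variable preserves, for every subterm, whether its terminal value is a variable or an abstraction, so ``all cuts are renaming cuts'' is an invariant of $\storencut$. Since $\trndtosc\tm$ has exactly one cut per explicit substitution and one per application of $\tm$, its number of cuts is at most $\size\tm$; iterating thus reaches a cut-free $\stm$ in $k\leq\size\tm$ renaming steps. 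The main obstacle is the auxiliary terminal-value lemma together with the invariant bookkeeping: I must check that substituting a variable genuinely keeps every remaining cut a renaming cut---which rests on renaming not affecting whether a term ends in a variable or an abstraction---and that commuting the left context $\lsctx$ out of the fired cut neither merges nor generates cuts.
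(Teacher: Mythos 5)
Your proof is correct and follows what is essentially the intended route: the auxiliary characterization (the terminal value of $\trndtosc\tm$ is a variable unless $\tm$ is of the form $\sctxp{\la\var\tmtwo}$) is exactly the observation that, combined with $\tovsc$-normality ruling out $\todb$-redexes at applications and $\tovs$-redexes at explicit substitutions, makes every cut of $\trndtosc\tm$ a $\storencut$-redex. The bookkeeping you flag as the main obstacle is indeed the only delicate point, and your justification is sound: substituting a variable is a homomorphic renaming that preserves the left-context/value splitting of every subterm, so the ``all cuts are renaming cuts'' invariant and the one-cut-per-step decrement both hold, giving $k\leq\size\tm$.
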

\end{toappendix}

As for the other direction, we can put together the simulation and the translation of normal forms, obtaining the preservation of termination by $\trndtosc\cdot$.
\begin{lemma}[Preservation of termination]
Let $\stm \tovsc^* \tmtwo$ with $\tmtwo$  $\tovsc$-normal. Then there exists a cut-free term $\stmtwo$ such that $\trndtosc\tm \stocut^* \stmtwo$.
\end{lemma}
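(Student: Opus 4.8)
The plan is to compose the two results just established in this section: the step-by-step simulation of VSC reductions by cut elimination, and the almost-cut-freeness of the translation of VSC normal forms. Here $\tm$ denotes the natural (VSC) term being reduced, $\tmtwo$ its $\tovsc$-normal form, and $\stmtwo$ the cut-free vanilla term we are after.

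First I would apply \refprop{nd-simulation-by-sc} along the given reduction sequence $\tm \tovsc^* \tmtwo$. Each $\todb$ step of $\tm$ is simulated by two $\stocut$ steps of its translation and each $\tovs$ step by one $\stocut$ step, so by a straightforward induction on the length of the sequence one obtains $\trndtosc\tm \stocut^* \trndtosc\tmtwo$. Note that this first half relies, through \refprop{nd-simulation-by-sc}, on the context-commutation lemma \reflemma{transl-nd-to-sq-ctxs} and the substitution-commutation lemma \reflemma{nd-to-sc-subst}, but these have already been dispatched, so here they can be used as black boxes.

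Next, since $\tmtwo$ is $\tovsc$-normal, I would invoke \refprop{trans-normal-nd-terms} to obtain a cut-free vanilla term $\stmtwo$ together with a sequence $\trndtosc\tmtwo \storencut^k \stmtwo$ of renaming cut-elimination steps (with $k \leq \size\tmtwo$). As renaming steps are by definition a restriction of $\stocut$, this immediately gives $\trndtosc\tmtwo \stocut^* \stmtwo$. Concatenating the two halves then yields $\trndtosc\tm \stocut^* \trndtosc\tmtwo \stocut^* \stmtwo$, hence $\trndtosc\tm \stocut^* \stmtwo$ with $\stmtwo$ cut-free, as required.

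As for obstacles, there is essentially none left at this level of abstraction: the two genuinely technical facts---the simulation and the analysis of the translation of normal forms---have already been proved, and what remains is their composition. The only points worth keeping in mind are that the simulation is not length-preserving (a $\todb$ step costs two $\stocut$ steps), which is harmless since we only claim $\stocut^*$, and that one must observe the inclusion $\storencut \subseteq \stocut$ in order to splice the two halves together, which is immediate from the definition of renaming step.
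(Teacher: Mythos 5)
Your proof is correct and follows exactly the paper's own argument: compose the simulation (\refprop{nd-simulation-by-sc}) applied along the reduction sequence with the almost-cut-freeness of translated normal forms (\refprop{trans-normal-nd-terms}), noting that renaming steps are particular $\stocut$ steps. The additional remarks about the non-length-preserving simulation and the inclusion $\storencut \subseteq \stocut$ are accurate but not needed beyond what the paper states.
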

\begin{proof}
By \refprop{nd-simulation-by-sc}, we obtain $\trndtosc\tm \stocut^* \trndtosc\tmtwo$. By \refprop{trans-normal-nd-terms}, there exists $\stmtwo$ cut-free such that $\trndtosc\tmtwo \stocut^* \stmtwo$. \qed
\end{proof}

\section{Strong Normalization}
The typical theorem for Curry-Howard correspondences is that the logical system ensures strong normalization of the typed terms. We thus provide such a result for our new correspondence between the vanilla $\l$-calculus and the vanilla sequent calculus. 

In this section, we give only the statement, since the proof is too technical for a fresh perspective paper. The proof is developed in \withproofs{Appendix \ref{app:SN-preliminaries} and Appendix \ref{sect:SN}}\withoutproofs{sections D and E of the Appendix of the technical report \cite{accattoli2024vanillasequentcalculuscallbyvalue}}, and is based on the bi-orthogonal reducibility  method, adapting and slightly simplifying its presentation in Accattoli \cite{DBLP:journals/lmcs/Accattoli23}.

\begin{definition}[Strong normalization]
A vanilla term $\stm$ is strongly normalizing, also noted $\stm \in \sn\cutsym$, if $\stm \stocut \stmtwo$ implies $\stmtwo\in\sn\cutsym$.
\end{definition}

\begin{theorem}[Typable terms are SN]
Let $\stm\in\scterms$ and  $\multiForm \sproves \stm\hastype\form$. Then $\stm\in\sn\cutsym$.
\end{theorem}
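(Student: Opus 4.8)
The plan is to prove strong normalization by the \emph{bi-orthogonal reducibility} method, adapting Accattoli's treatment of the intuitionistic linear logic sequent calculus \cite{DBLP:journals/lmcs/Accattoli23}. First I would record the elementary facts about the pole $\sn\cutsym$: closure under sub-terms, the characterization that $\stm\in\sn\cutsym$ iff all its one-step $\stocut$-reducts are in $\sn\cutsym$, and a handful of \emph{anti-reduction} lemmas ensuring that a redex is SN once its contractum together with the duplicated or erased material is SN. The unique splitting $\stm=\lsctxp\sval$ and \reflemma{vacuous-subs} would be used pervasively, as they are exactly what makes the at-a-distance steps tractable.

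The core is an orthogonality relation with pole $\sn\cutsym$. I would take values on one side and the continuations supplied by left contexts $\lsctx$ (equivalently, cut-contexts $\cuta\ctxhole\svar\stm$) on the other, declaring a value $\sval$ and a continuation orthogonal when plugging yields a term in $\sn\cutsym$. Writing $(-)^\perp$ for the induced antitone Galois connection, I would define by induction on $\form$ a set of reducible values $\mathcal{R}^{\mathsf v}_\form$ and set the reducible terms to be its bi-orthogonal closure $\mathcal{R}_\form\defeq(\mathcal{R}^{\mathsf v}_\form)^{\perp\perp}$. At the atomic type, $\mathcal{R}^{\mathsf v}_\aform$ is the set of values orthogonal to every SN continuation, so that $\mathcal{R}_\aform\subseteq\sn\cutsym$; at $\form\imply\formtwo$, an abstraction $\la\svartwo\stmthree$ is reducible exactly when $\cuta{\sval'}\svartwo\stmthree\in\mathcal{R}_\formtwo$ for every $\sval'\in\mathcal{R}^{\mathsf v}_\form$ --- the clause forced by the cut-elimination of an abstraction against a subtraction in \reffig{SC-dynamic-small}. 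The standard bi-orthogonality bookkeeping ($X\subseteq X^{\perp\perp}$, the collapse $X^{\perp\perp\perp}=X^\perp$, that every candidate is $\perp\perp$-closed, contains the variables of its type, and is contained in $\sn\cutsym$) would be established once and reused.

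The main work is the \emph{adequacy lemma}: for every derivation $\multiForm\sproves\stm\hastype\form$ and every substitution mapping each $\svartwo\hastype\formtwo$ of $\multiForm$ to a reducible value of type $\formtwo$, the corresponding instance of $\stm$ lies in $\mathcal{R}_\form$; this is proved by induction on the derivation. The $\ax$ and $\implyRightRule$ cases are routine, the latter unfolding the arrow clause. The two delicate cases are $\cut$ and $\implyLeftRule$. For $\cut$ one needs that cutting a reducible value into a reducible term stays reducible; here the argument is split as $\lsctxp\sval$, the value part is handled by a substitution lemma (in the style of \reflemma{vacuous-subs}), and an anti-reduction argument pushes the left context $\lsctx$ out, mirroring the root rule $\srtocut$. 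For $\implyLeftRule$ one must show that the two nested cuts spawned when an abstraction is substituted into a subtraction remain reducible, combining the arrow clause at $\form\imply\formtwo$ with the $\perp\perp$-closure at the result type. Since $\mathcal{R}_\form\subseteq\sn\cutsym$ and the identity substitution of variables for variables is reducible, instantiating adequacy yields $\stm\in\sn\cutsym$, proving the theorem.

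The hard part will be precisely these $\cut$ and $\implyLeftRule$ cases \emph{in the presence of distance}: cut elimination commutes an entire left context $\lsctx$ out of the redex, and substituting a value into a subtraction creates two fresh cuts, so the candidates must be engineered to be stable under exactly these context manipulations while the anti-reduction lemmas must be strong enough to absorb the duplicated or erased $\lsctx$. Calibrating the orthogonality so that $\mathcal{R}_\form\subseteq\sn\cutsym$ yet the closure properties still hold is the crux, and is what makes bi-orthogonality necessary rather than naive reducibility. As an independent sanity check --- though not the self-contained route I would develop --- one may note that \refprop{sc-to-nd-simulation} sends each $\stocut$ step to a non-empty $\tovs\todb^*$ sequence, so that, granting strong normalization of typed VSC terms and preservation of typing by $\trsctond\cdot$, an infinite $\stocut$-reduction from a typable $\stm$ would induce an infinite $\tovsc$-reduction from $\trsctond\stm$, a contradiction.
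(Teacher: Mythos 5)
Your proposal follows essentially the same route as the paper's proof: bi-orthogonal reducibility over the pole $\sn\cutsym$, with continuations given by elimination contexts $\lctxp{\cuta\ctxhole\svar\stm}$, formula interpretations obtained as the bi-orthogonal closure of type-indexed seeds whose arrow clause cuts reducible arguments into the abstraction body, and adequacy proved by induction on the derivation and instantiated with the identity substitution. The difficulties you single out are precisely the three tools the paper isolates---extension, root cut expansion (your anti-reduction), and stability of $\sn\cutsym$ under the structural equivalence that commutes left contexts out of a redex---so the plan matches the paper's development.
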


\section{Conclusions} 
We introduced the \emph{vanilla} $\l$-calculus, a computational interpretation of the simplest sequent calculus, and showed that it simulates and it is simulated by call-by-value evaluation. Technically, the simulations are clean and compact, thanks to the use of rewriting rules at a distance for both the new cut elimination rule for the vanilla $\l$-calculus and the presentation of call-by-value evaluation that we adopt from the literature.

Our study nicely complements the two famous cornerstones by Curry and Howard about minimal intuitionistic logic resting only on basic logical concepts: 
\begin{enumerate}
\item \emph{Curry 1958} \cite{curry1958combinatory}: Hilbert's system matches combinatory logic;
\item \emph{Howard 1969} (but published only in 1980) \cite{Howard1980-HOWTFN-2}: Gentzen's natural deduction matches the (call-by-name) $\l$-calculus;
\item \emph{Here}: Gentzen's sequent calculus matches call-by-value evaluation.
\end{enumerate}

We believe that our work provides a fresh perspective over the sequent calculus. Its modern computational interpretation is usually defined starting from sequent calculi with stoups for classical logic, following Curien and Herbelin \cite{DBLP:conf/icfp/CurienH00}. The basic, \emph{vanilla} presentation of the intuitionistic case seems to have fallen into a blind spot of the literature. This work shows that it is far from being unworthy of attention.

\paragraph{Future Work.} Maraist et al. \cite{DBLP:journals/tcs/MaraistOTW99} propose a Curry-Howard correspondence for \emph{call-by-need} using an affine logic with a duplication modality, tweaking the linear logic one for call-by-value. In ongoing work, we are using the results of this paper as the starting point for a Curry-Howard for call-by-need that is \emph{not} based on linear/affine concepts. 


%
%
%
\newpage
\bibliographystyle{splncs04}
\bibliography{main.bbl}
\withproofs{
\newpage
\appendix
\setboolean{appendix}{true}

\section{Proofs Removed from \refsect{small-step-cut-el} (Defining Cut Elimination for the Vanilla $\l$-Calculus)}
\label{sect:app-small-step-cut-el}

\begin{lemma}[Weakening]
\label{l:sc-weakening}
Let $\tderiv \exder \multiForm \sproves \stm\hastype\form$ be a derivation and $\svar\notin \fv\stm \cup \dom\multiForm$. Then, for any formula $\formtwo$, there exists a derivation $\tderivtwo_\formtwo \exder \multiForm, \svar\hastype\formtwo \sproves \stm\hastype\form$.
\end{lemma}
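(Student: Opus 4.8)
The plan is to argue by induction on the derivation $\tderiv$, threading the fresh hypothesis $\svar\hastype\formtwo$ through the typing context at every rule. The freshness assumption $\svar\notin\fv\stm\cup\dom\multiForm$ is exactly what keeps all the extended contexts well-formed: since $\svar$ does not occur in $\multiForm$, writing $\multiForm,\svar\hastype\formtwo$ is legal, and since $\svar$ does not occur free in $\stm$, up to $\alpha$-renaming it differs from every variable bound along the way. As type contexts are taken modulo exchange, I am free to place the new assumption wherever convenient.

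The easy cases follow the shape of the rules. For the axiom $\stm=\svartwo$ with $\multiForm=\multiForm',\svartwo\hastype\form$, the rule $\ax$ permits an arbitrary context, so I re-apply it directly with $\svar\hastype\formtwo$ inserted. For the cut $\stm=\cuta\stmtwo\svarthree\stmthree$ and the right rule $\stm=\la\svartwo\stmtwo$, I invoke the induction hypothesis on each premise to obtain derivations already weakened by $\svar\hastype\formtwo$, and then re-apply the same rule; in the $\implyRightRule$ case freshness gives $\svar\neq\svartwo$, so the bound variable of the abstraction is left untouched, and in the cut case the weakened premises type the same subterms $\stmtwo,\stmthree$ under $\multiForm,\svar\hastype\formtwo$.

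The one delicate case, which I expect to be the main obstacle, is the left rule $\implyLeftRule$, where $\stm=\nsuba\svartwo\stmtwo\svarthree\stmthree$ has type $\form$ and the conclusion context is built with $\tctxplus$ rather than by a plain extension: concretely $\multiForm=\multiForm'\tctxplus\svartwo\hastype\form_1\imply\form_2$, with premises $\multiForm'\sproves\stmtwo\hastype\form_1$ and $\multiForm',\svarthree\hastype\form_2\sproves\stmthree\hastype\form$. Applying the induction hypothesis to the two premises yields derivations under the context $\multiForm',\svar\hastype\formtwo$, and re-applying $\implyLeftRule$ (subtracting the same variable $\svartwo$) produces a conclusion whose context is $(\multiForm',\svar\hastype\formtwo)\tctxplus\svartwo\hastype\form_1\imply\form_2$. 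It remains to check that this coincides with the target context $\multiForm,\svar\hastype\formtwo$, i.e. that inserting $\svar\hastype\formtwo$ commutes with $\tctxplus$. This is the crux, and it rests entirely on freshness: since $\svartwo\in\dom\multiForm$ while $\svar\notin\dom\multiForm$, we have $\svar\neq\svartwo$, and unfolding the two clauses of the definition of $\tctxplus$ (according to whether $\svartwo\in\dom\multiForm'$) gives in both cases
\[
(\multiForm',\svar\hastype\formtwo)\tctxplus\svartwo\hastype\form_1\imply\form_2 \;=\; (\multiForm'\tctxplus\svartwo\hastype\form_1\imply\form_2),\,\svar\hastype\formtwo \;=\; \multiForm,\,\svar\hastype\formtwo,
\]
which is precisely the desired weakened judgement, completing the induction.
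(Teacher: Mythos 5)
Your proof is correct and follows exactly the route the paper takes: the paper's own proof is just ``straightforward induction on $\tderiv$'', and your argument fills in the details of that induction, correctly identifying the $(\implyLeftRule)$ case and the commutation of context extension with $\tctxplus$ (justified by $\svar\neq\svartwo$) as the only point requiring care.
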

\begin{proof}
Straightforward induction on $\tderiv$.\qed
\end{proof}

\gettoappendix{prop:subject-reduction}
\begin{proof}
\applabel{prop:subject-reduction}
Let $\stm =\gsctxp{\cuta{\lsctxp\sval}\svar\stmthree} \stocut \gsctxp{\lsctxp{\cutsub\sval\svar\stmthree}} = \stmtwo$. The statement is decomposed in three more basic ones, the second and the third one using the preceding statement as base case for the induction:
\begin{enumerate}
\item Let $\tderiv \exder \multiForm \sproves \cuta\sval\svar\stmthree \hastype\form$. Then there exists $\tderivtwo \exder \multiForm \sproves \cutsub\sval\svar\stmthree \hastype\form$.
\item Let $\tderiv \exder \multiForm \sproves \cuta{\lsctxp\sval}\svar\stmthree \hastype\form$. Then there exists $\tderivtwo \exder\multiForm \sproves \lsctxp{\cutsub\sval\svar\stmthree} \hastype\form$.
\item Let $\tderiv \exder \multiForm \sproves \gsctxp{\cuta{\lsctxp\sval}\svar\stmthree} \hastype\form$. Then there exists $\tderivtwo \exder\multiForm \sproves \gsctxp{\lsctxp{\cutsub\sval\svar\stmthree}} \hastype\form$.
\end{enumerate}

Proofs:
\begin{enumerate}
\item By induction on $\stmthree$. There are seven cases, corresponding to the cases defining $\cutsub\sval\svar\stmthree$, and divided into principal and commutative cases:
\begin{itemize}
\item \emph{Principal axiom case 1}, \ie $\stmthree= \svar$ and $\cutsub\sval\svar\stmthree = \sval$. The derivation in the hypotheses is:
\[
	\AxiomC{$ \tderiv_\sval \exder \multiForm \sproves \sval\hastype\form$}
		\AxiomC{}
		\RightLabel{$ \ax $}
		\UnaryInfC{$  \multiForm,  \svar\hastype\form \sproves \svar\hastype\form$}
\RightLabel{$ \cut $}
\BinaryInfC{$  \multiForm \sproves \cuta\sval\svar\svar \hastype\form$}
	\DisplayProof
\]
The type derivation for $\sval$ then simply is given by the sub-proof $\tderiv_\sval \exder \multiForm \sproves \sval\hastype\form$.

\item \emph{Principal axiom case 2}, \ie $\stmthree= \svartwo$ and $\cutsub\sval\svar\stmthree = \svartwo$. The derivation in the hypotheses is:
\[
	\AxiomC{$\tderiv_\sval \exder\multiForm \sproves \sval\hastype\formtwo$}
		\AxiomC{}
		\RightLabel{$ \ax $}
		\UnaryInfC{$\multiForm,  \svar\hastype\formtwo,\svartwo\hastype\form \sproves \svartwo\hastype\form$}
\RightLabel{$ \cut $}
\BinaryInfC{$\multiForm,\svartwo\hastype\form \sproves \cuta\sval\svar\svartwo \hastype\form$}
	\DisplayProof
\]
The type derivation for $\svartwo$ then simply is given by the following axiom:
\[
		\AxiomC{}
		\RightLabel{$ \ax $}
		\UnaryInfC{$\multiForm,\svartwo\hastype\form \sproves \svartwo \hastype\form$}
	\DisplayProof
\]

\item \emph{Principal subtraction case 1}, \ie $\cutsub\sval\svar\stmthree$ is: 
\[\begin{array}{l\colspace l\colspace l}
\cutsub\svartwo\svar\nsuba\svar\stmfour\svarthree\stmfive  
& = &
\nsuba\svartwo{\cutsub\svartwo\svar\stmfour}\svarthree \cutsub\svartwo\svar\stmfive.
\end{array}\]
The derivation $\tderiv$ in the hypothesis then is:
\[
	\AxiomC{}
	\RightLabel{$ \ax $}
	\UnaryInfC{$\multiForm \sproves \svartwo\hastype\formtwo\imply\formthree$}
		\AxiomC{$\tderiv_\stmfour \exder \multiFormthree  \sproves \stmfour \hastype\formtwo$}
		\AxiomC{$\tderiv_\stmfive \exder \multiFormthree,  \svarthree\hastype\formthree \sproves \stmfive \hastype\form$}
		\RightLabel{$ \implyLeftRule $}
		\BinaryInfC{$\multiForm, \svar\hastype\formtwo\imply\formthree \sproves \nsuba\svar\stmfour\svarthree\stmfive  \hastype \form$}
\RightLabel{$ \cut $}
\BinaryInfC{$\multiForm \sproves \cuta\svartwo\svar\nsuba\svar\stmfour\svarthree\stmfive   \hastype \form$}
	\DisplayProof
\]
With $\multiForm = \multiFormtwo,\svartwo\hastype\formtwo\imply\formthree$ and either $\multiFormthree=\multiForm$ or $\multiFormthree=\multiForm, \svar\hastype\formtwo\imply\formthree$. We treat the case $\multiFormthree=\multiForm, \svar\hastype\formtwo\imply\formthree$, the other one is simpler. Then, we actually have $\multiFormthree = \multiFormtwo,\svar\hastype\formtwo\imply\formthree, \svartwo\hastype\formtwo\imply\formthree$.

In order to apply the \ih, note that our hypotheses allow us to build the following two derivations:
\[
	\AxiomC{}
	\RightLabel{$ \ax $}
	\UnaryInfC{$\multiFormtwo,\svartwo\hastype\formtwo\imply\formthree \sproves \svartwo\hastype\formtwo\imply\formthree$}
		\AxiomC{$\tderiv_\stmfour \exder \multiFormtwo,\svar\hastype\formtwo\imply\formthree, \svartwo\hastype\formtwo\imply\formthree  \sproves \stmfour \hastype\formtwo$}
\RightLabel{$ \cut $}
\BinaryInfC{$\multiFormtwo,\svartwo\hastype\formtwo\imply\formthree \sproves \cuta\svartwo\svar\stmfour   \hastype \formtwo$}
	\DisplayProof
\]
and
\[\small
	\AxiomC{}
	\RightLabel{$ \ax $}
	\UnaryInfC{$\multiFormtwo,\svartwo\hastype\formtwo\imply\formthree,\svarthree\hastype\formthree \sproves \svartwo\hastype\formtwo\imply\formthree$}
		\AxiomC{$\tderiv_\stmfive \exder \multiFormtwo,\svar\hastype\formtwo\imply\formthree, \svartwo\hastype\formtwo\imply\formthree,  \svarthree\hastype\formthree \sproves \stmfive \hastype\form$}
\RightLabel{$ \cut $}
\BinaryInfC{$\multiFormtwo,\svartwo\hastype\formtwo\imply\formthree,\svarthree\hastype\formthree \sproves \cuta\svartwo\svar\stmfive   \hastype \form$}
	\DisplayProof
\]
By \ih, we obtain two derivations:
\begin{enumerate}
\item $\tderivtwo_\stmfour \exder \multiFormtwo,\svartwo\hastype\formtwo\imply\formthree \sproves \cutsub\svartwo\svar\stmfour   \hastype \formtwo$, and

\item $\tderivtwo_\stmfive \exder \multiFormtwo,\svartwo\hastype\formtwo\imply\formthree,\svarthree\hastype\formthree \sproves \cutsub\svartwo\svar\stmfive   \hastype \form$.
\end{enumerate}
Then, the derivation $\tderivtwo$ proving the statement is built as follows:
\[
\AxiomC{$\tderivtwo_\stmfour \exder \multiFormtwo,\svartwo\hastype\formtwo\imply\formthree \sproves \cutsub\svartwo\svar\stmfour   \hastype \formtwo$}
		\AxiomC{$\tderivtwo_\stmfive \exder \multiFormtwo,\svartwo\hastype\formtwo\imply\formthree,\svarthree\hastype\formthree \sproves \cutsub\svartwo\svar\stmfive   \hastype \form$}
		\BinaryInfC{$\multiFormtwo,\svartwo\hastype\formtwo\imply\formthree \sproves \nsuba\svartwo{\cutsub\svartwo\svar\stmfour}\svarthree{\cutsub\svartwo\svar\stmfive}  \hastype \form$}
			\DisplayProof
\]

\item \emph{Principal subtraction case 2}: \ie $\cutsub\sval\svar\stmthree$ is: 
\[\begin{array}{l\colspace l\colspace l}
\cutsub{\la\svartwo\stmfour}\svar \suba\svar\stmfive\svarthree\stmsix
& = &
 \cuta{\cuta{\cutsub{\la\svartwo\stmfour}\svar\stmfive}\svartwo\stmfour}\svarthree \cutsub{\la\svartwo\stmfour}\svar\stmsix.
\end{array}\]
The derivation $\tderiv$ in the hypothesis then is:
\[
	\AxiomC{$\tderiv_\stmfour \exder \multiForm, \svartwo\hastype\formtwo \sproves \stmfour\hastype\formthree$}
	\RightLabel{$ \implyRightRule $}
	\UnaryInfC{$\multiForm \sproves \la\svartwo\stmfour \hastype\formtwo\imply\formthree$}
		\AxiomC{$\tderiv_\stmfive \exder \multiFormthree  \sproves \stmfive \hastype\formtwo$}
		\AxiomC{$\tderiv_\stmsix \exder \multiFormthree,  \svarthree\hastype\formthree \sproves \stmsix \hastype\form$}
		\RightLabel{$ \implyLeftRule $}
		\BinaryInfC{$\multiForm, \svar\hastype\formtwo\imply\formthree \sproves \nsuba\svar\stmfive\svarthree\stmsix  \hastype \form$}
\RightLabel{$ \cut $}
\BinaryInfC{$\multiForm \sproves \cuta{\la\svartwo\stmfour}\svar\nsuba\svar\stmfive\svarthree\stmsix   \hastype \form$}
	\DisplayProof
\]
With either $\multiFormthree=\multiForm$ or $\multiFormthree=\multiForm, \svar\hastype\formtwo\imply\formthree$. We treat the case $\multiFormthree=\multiForm, \svar\hastype\formtwo\imply\formthree$, the other one is simpler.

In order to apply the \ih, note that our hypotheses allow us to build the following two derivations:
\[
	\AxiomC{$\multiForm \sproves \la\svartwo\stmfour \hastype\formtwo\imply\formthree$}
		\AxiomC{$\tderiv_\stmfive \exder \multiForm, \svar\hastype\formtwo\imply\formthree  \sproves \stmfive \hastype\formtwo$}
\RightLabel{$ \cut $}
\BinaryInfC{$\multiForm \sproves \cuta{\la\svartwo\stmfour}\svar\stmfive   \hastype \formtwo$}
	\DisplayProof
\]
and
\[\small
	\AxiomC{$\tderiv_\stmfour' \exder \multiForm, \svartwo\hastype\formtwo,  \svarthree\hastype\formthree \sproves \stmfour\hastype\formthree$}
	\RightLabel{$ \implyRightRule $}
	\UnaryInfC{$\multiForm,  \svarthree\hastype\formthree \sproves \la\svartwo\stmfour \hastype\formtwo\imply\formthree$}
		\AxiomC{$\tderiv_\stmsix \exder \multiForm, \svar\hastype\formtwo\imply\formthree,  \svarthree\hastype\formthree \sproves \stmsix \hastype\form$}
\RightLabel{$ \cut $}
\BinaryInfC{$\multiForm,\svarthree\hastype\formthree \sproves \cuta{\la\svartwo\stmfour}\svar\stmsix   \hastype \form$}
	\DisplayProof
\]
Where the sub-derivation $\tderiv_\stmfour'$ is obtained by applying the weakening lemma (\reflemma{sc-weakening}) to $\tderiv_\stmfour$.

By \ih, we obtain two derivations:
\begin{enumerate}
\item $\tderivtwo_\stmfive \exder \multiForm \sproves \cutsub{\la\svartwo\stmfour}\svar\stmfive   \hastype \formtwo$, and

\item $\tderivtwo_\stmsix \exder \multiForm,\svarthree\hastype\formthree \sproves \cutsub{\la\svartwo\stmfour}\svar\stmsix   \hastype \form$.
\end{enumerate}

Then, the derivation $\tderivtwo$ proving the statement is built as follows:
\[ \footnotesize
\AxiomC{$\tderivtwo_\stmfive \exder \multiForm \sproves \cutsub{\la\svartwo\stmfour}\svar\stmfive   \hastype \formtwo$}
	\AxiomC{$\tderiv_\stmfour \exder \multiForm, \svartwo\hastype\formtwo \sproves \stmfour\hastype\formthree$}
\RightLabel{$ \cut $}
\BinaryInfC{$\multiForm \sproves \cutsub{\la\svartwo\stmfour}\svar\stmfive   \hastype \formthree$}
		\AxiomC{$\tderivtwo_\stmsix \exder \multiForm,\svarthree\hastype\formthree \sproves \cutsub{\la\svartwo\stmfour}\svar\stmsix   \hastype \form$}
\RightLabel{$ \cut $}
\BinaryInfC{$\multiForm \sproves \cuta{\cuta{\cutsub{\la\svartwo\stmfour}\svar\stmfive}\svartwo\stmfour}\svarthree \cutsub{\la\svartwo\stmfour}\svar\stmsix  \hastype \form$}
	\DisplayProof
\]
\item \emph{Commutative cases}: the following three cases (they correspond to so-called \emph{commutative cut elimination cases}) in the definition of $\cutsub\sval\svar\stmthree$ follow immediately from the \ih and the weakening lemma (\reflemma{sc-weakening}):
\begin{enumerate}
\item $\cutsub\sval\svar \la\svartwo\stmfour  \defeq  \la\svartwo\cutsub\sval\svar \stmfour$;
\item $\cutsub\sval\svar \cuta\stmfour\svartwo \stmfive  \defeq   \cuta{\cutsub\sval\svar\stmfour}\svartwo \cutsub\sval\svar\stmfive$;
\item $\cutsub\sval\svar \suba\svartwo\stmfour\svarthree\stmfive  \defeq  \suba\svartwo{\cutsub\sval\svar\stmfour}\svarthree \cutsub\sval\svar\stmfive$.
\end{enumerate}
We give the details of the first one, to showcase the reasoning. The derivation $\tderiv$ in the hypotheses is (with $\form = \formtwo \imply \formthree$):
\[
	\AxiomC{$\tderiv_\sval\exder \multiForm \sproves \sval\hastype\formfour$}
		\AxiomC{$\tderiv_\stmfour \exder \multiForm,  \svar\hastype\formfour,\svartwo\hastype\formtwo \sproves \stmfour \hastype\formthree$}
		\RightLabel{$ \implyRightRule $}
		\UnaryInfC{$\multiForm,  \svar\hastype\formfour \sproves \la\svartwo\stmfour\hastype \formtwo \imply \formthree$}
\RightLabel{$ \cut $}
\BinaryInfC{$\multiForm \sproves \cuta\sval\svar\la\svartwo\stmfour\hastype \formtwo \imply \formthree$}
	\DisplayProof
\]
By the weakening lemma (\reflemma{sc-weakening}) applied to $\tderiv_\sval$, there is a derivation $\tderiv'_\sval \exder \multiForm,\svartwo\hastype\formtwo \sproves \sval\hastype\formfour$.
Then, we can build the following derivation $\tderiv'$:
\[
	\AxiomC{$\tderiv'_\sval \exder \multiForm,\svartwo\hastype\formtwo \sproves \sval\hastype\formfour$}
		\AxiomC{$\tderiv_\stmfour \exder \multiForm,  \svar\hastype\formfour,\svartwo\hastype\formtwo \sproves \stmfour \hastype\formthree$}
\RightLabel{$ \cut $}
\BinaryInfC{$\multiForm,\svartwo\hastype\formtwo \sproves \cuta\sval\svar\stmfour\hastype  \formthree$}
	\DisplayProof
\]
By \ih applied to $\tderiv'$, there is a derivation $\tderivtwo' \exder \multiForm,\svartwo\hastype\formtwo \sproves \cutsub\sval\svar\stmfour\hastype \formthree$. Finally, the derivation $\tderivtwo$ for $\la\svartwo\cuta\sval\svar\stmfour$ is built as follows:
\[
		\AxiomC{$\tderivtwo' \exder \multiForm,\svartwo\hastype\formtwo \sproves \cutsub\sval\svar\stmfour\hastype \formthree$}
		\RightLabel{$ \implyRightRule $}
		\UnaryInfC{$\multiForm \sproves \la\svartwo\cutsub\sval\svar\stmfour\hastype \formtwo\imply\formthree$}
	\DisplayProof
\]

\end{itemize}

\item By induction on $\lsctx$. Cases:

\begin{itemize}
\item \emph{Empty}, \ie $\lsctx=\ctxhole$. Then it follows from the previous point.

\item \emph{Cut}, \ie $\lsctx=\cuta\stmfour\svartwo\lsctxtwo$. The derivation $\tderiv$ in the hypothesis then is:
\[
\AxiomC{$\tderiv_\stmfour \exder \multiForm \sproves \stmfour\hastype\formthree$}
		\AxiomC{$\tderiv_\sval \exder \multiFormtwo, \svartwo\hastype\formthree \sproves \sval\hastype\formtwo$}
\doubleLine\dashedLine
 	\RightLabel{$\lsctxtwo$}
	\UnaryInfC{$\multiForm, \svartwo \hastype \formthree \sproves \lsctxtwop\sval \hastype \formtwo$}
\RightLabel{$ \cut $}
\BinaryInfC{$\multiForm \sproves \cuta\stmfour\svartwo\lsctxtwop\sval \hastype\formtwo$}
		\AxiomC{$\tderiv_\stmthree \exder \multiForm,  \svar\hastype\formtwo \sproves \stmthree \hastype\form$}
\RightLabel{$ \cut $}
\BinaryInfC{$\multiForm \sproves \cuta{\cuta\stmfour\svartwo\lsctxtwop\sval}\svar\stmthree \hastype\form$}
	\DisplayProof
\]
In order to apply the \ih, note that our hypotheses allow us to build the following derivation:
\[
		\AxiomC{$\tderiv_\sval \exder \multiFormtwo, \svartwo\hastype\formthree \sproves \sval\hastype\formtwo$}
\doubleLine\dashedLine
 	\RightLabel{$\lsctxtwo$}
	\UnaryInfC{$\multiForm, \svartwo \hastype \formthree \sproves \lsctxtwop\sval \hastype \formtwo$}
		\AxiomC{$\tderiv_\stmthree' \exder \multiForm, \svartwo\hastype\formthree,  \svar\hastype\formtwo \sproves \stmthree \hastype\form$}
\RightLabel{$ \cut $}
\BinaryInfC{$\multiForm, \svartwo\hastype\formthree \sproves \cuta{\lsctxtwop\sval}\svar\stmthree \hastype\form$}
	\DisplayProof
\]
Where the sub-derivation $\tderiv_\stmthree'$ is obtained by applying the weakening lemma (\reflemma{sc-weakening}) to $\tderiv_\stmthree$.

By \ih, we obtain the derivation $\tderivtwo' \exder \multiForm, \svartwo\hastype\formthree \sproves \lsctxtwop{\cutsub\sval\svar\stmthree} \hastype\form$. Finally, the derivation $\tderivtwo$ for $\la\svartwo\cuta\sval\svar\stmfour$ is built as follows:
\[
\AxiomC{$\tderiv_\stmfour \exder \multiForm \sproves \stmfour\hastype\formthree$}
		\AxiomC{$\tderivtwo' \exder \multiForm, \svartwo\hastype\formthree \sproves \lsctxtwop{\cutsub\sval\svar\stmthree} \hastype\form$}
\RightLabel{$ \cut $}
\BinaryInfC{$\multiForm \sproves \cuta\stmfour\svartwo\lsctxtwop{\cutsub\sval\svar\stmthree}  \hastype\form$}
	\DisplayProof
\]

\item \emph{Subtraction}, \ie $\lsctx=\nsuba\svartwo\stmfour\svarthree\lsctxtwo$. Similar to the previous case.
\end{itemize}

\item A straightforward induction on $\gsctx$, using the previous point for the case $\gsctx = \ctxhole$.\qed

\end{enumerate}
\end{proof}

\section{Proofs Removed from \refsect{simulation-vanilla-by-vsc} (Simulating the Vanilla $\l$-calculus in the VSC)}
\label{sect:app-simulation-vanilla-by-vsc}

\gettoappendix{l:sc-to-nd-subst}
\begin{proof}
By induction on $\stm$. 
Cases: \applabel{l:sc-to-nd-subst}
\begin{itemize}
\item \emph{Variable}: if $\stm = \svar$ then: 
\begin{center}
$\begin{array}{llllllllllll}
\subi{\trsctond\sval}\var\trsctond\svar &=&\subi{\trsctond\sval}\var\var &=& \trsctond\sval &=& \trsctond{\cutsub\sval\svar \svar}.
\end{array}$
\end{center}
If instead $\stm = \svartwo$ then: 
\begin{center}
$\begin{array}{llllllllllll}
\subi{\trsctond\sval}\var\trsctond\vartwo &=&\subi{\trsctond\sval}\var\vartwo &=& \vartwo &=& \trsctond\svartwo &=& \trsctond{\cutsub\sval\svar \svartwo}.
\end{array}$
\end{center}

\item \emph{Abstraction}, \ie $\stm = \la\svartwo\stmtwo$. Then:
\begin{center}
$\begin{array}{clllllllllll}
\subi{\trsctond\sval}\var \trsctond{\la\svartwo\stmtwo} &= &\subi{\trsctond\sval}\var\la\vartwo\trsctond\stmtwo 
\\[3pt] &=& 
\la\vartwo \subi{\trsctond\sval}\var\trsctond\stmtwo 
\\[3pt] (\ih) &\todb^*&  
\la\vartwo \trsctond{\cutsub\sval\svar \stmtwo}
\\[3pt] &=& 
\trsctond{\la\svartwo\cutsub\sval\svar \stmtwo}
 &=& 
\trsctond{\cutsub\sval\svar\la\svartwo \stmtwo}.
\end{array}$
\end{center}

\item \emph{Subtraction}, \ie $\stm = \nsuba\svarthree\stmtwo\svartwo \stmthree$. If $\varthree\neq\svar$ then:
\begin{center}
$\begin{array}{clllllllllll}
\subi{\trsctond\sval}\var \trsctond{\nsuba\svarthree\stmtwo\svartwo \stmthree} &= &\subi{\trsctond\sval}\var \sube{\varthree\trsctond\stmtwo}\vartwo \trsctond\stmthree
\\[3pt] &=& 
 \sube{\varthree{\subi{\trsctond\sval}\var\trsctond\stmtwo}}\vartwo \subi{\trsctond\sval}\var\trsctond\stmthree
\\[3pt] (\ih) &\todb^*&   
\sube{\varthree{\trsctond{\cutsub\sval\svar\stmtwo}}}\vartwo \trsctond{\cutsub\sval\svar\stmthree}
\\[3pt] &=&
\trsctond{\nsuba\svarthree{\cutsub\sval\svar\stmtwo}\svartwo \cutsub\sval\svar\stmthree}
&=& 
\trsctond{\cutsub\sval\svar\nsuba\svarthree\stmtwo\svartwo \stmthree}.
\end{array}$
\end{center}
If $\varthree=\var$ then we have to inspect $\sval$. If $\sval$ is a variable $\svarfour$ then:
\begin{center}
$\begin{array}{clllllllllll}
\subi{\trsctond\svarfour}\var \trsctond{\nsuba\svar\stmtwo\svartwo \stmthree} &= &\subi\varfour\var \sube{\var\trsctond\stmtwo}\vartwo \trsctond\stmthree
\\[3pt] &=& 
 \sube{\varfour{\subi\varfour\var\trsctond\stmtwo}}\vartwo \subi\varfour\var\trsctond\stmthree
\\[3pt] (\ih) &\todb^*&   
\sube{\varfour{\trsctond{\cutsub\svarfour\svar\stmtwo}}}\vartwo \trsctond{\cutsub\svarfour\svar\stmthree}
\\[3pt] &=&
\trsctond{\nsuba\svarfour{\cutsub\svarfour\svar\stmtwo}\vartwo \cutsub\svarfour\svar\stmthree}
 &=& 
\trsctond{\cutsub\svarfour\svar\nsuba\svar\stmtwo\svartwo \stmthree}.
\end{array}$
\end{center}
If instead $\sval$ is an abstraction $\la\svarfour\stmfour$ then this is the interesting case requiring $\todb$:
\begin{center}
$\begin{array}{clllllllllll}
\subi{\trsctond\sval}\var \trsctond{\nsuba\svar\stmtwo\svartwo\stmthree}
&= &
\subi{\trsctond\sval}\var \sube{\var\trsctond\stmtwo}\vartwo \trsctond\stmthree
\\[3pt] &=& 
 \sube{\trsctond\sval \subi{\trsctond\sval}\var\trsctond\stmtwo}\vartwo \subi{\trsctond\sval}\var\trsctond\stmthree
\\[3pt] (\ih) &\todb^*&   
\sube{\trsctond\sval{\trsctond{\cutsub\sval\svar\stmtwo}}}\vartwo \trsctond{\cutsub\sval\svar\stmthree}
\\[3pt] &=&
\sube{(\la\varfour\trsctond\stmfour){\trsctond{\cutsub\sval\svar\stmtwo}}}\vartwo \trsctond{\cutsub\sval\svar\stmthree}
\\[3pt] &\todb& 
\sube{\sube{\trsctond{\cutsub\sval\svar\stmtwo}}\varfour\trsctond\stmfour}\vartwo \trsctond{\cutsub\sval\svar\stmthree}
\\[3pt] &=& 
\trsctond{\cuta{\cuta{\cutsub\sval\svar\stmtwo}\svarfour\stmfour}\svartwo \cutsub\sval\svar\stmthree}
\\[3pt] &=& 
\trsctond{\cutsub{\la\svarfour\stmfour}\svar\nsuba\svar\stmtwo\svartwo\stmthree}.
\\[3pt] &=& 
\trsctond{\cutsub\sval\svar\nsuba\svar\stmtwo\svartwo\stmthree}.
\end{array}$
\end{center}

\item \emph{Cut}, \ie $\stm = \cuta\stmtwo\svartwo\stmthree$. Then:
\begin{center}
$\begin{array}{clllllllllll}
\subi{\trsctond\sval}\var \trsctond{\cuta\stmtwo\svartwo \stmthree} &= &\subi{\trsctond\sval}\var \sube{\trsctond\stmtwo}\vartwo \trsctond\stmthree
\\[3pt] &=& 
 \sube{{\subi{\trsctond\sval}\var\trsctond\stmtwo}}\vartwo \subi{\trsctond\sval}\var\trsctond\stmthree
\\[3pt] (\ih) &\todb^*&   
\sube{{\trsctond{\cutsub\sval\svar\stmtwo}}}\vartwo \trsctond{\cutsub\sval\svar\stmthree}
\\[3pt] &=&
\trsctond{\cuta{\cutsub\sval\svar\stmtwo}\vartwo \cutsub\sval\svar\stmthree}
&=& 
\trsctond{\cutsub\sval\svar\cuta\stmtwo\vartwo \stmthree}.
\end{array}$
\end{center}\qed
\end{itemize}
\end{proof}

\gettoappendix{l:transl-sc-to-nd-ctxs}
\begin{proof}
\applabel{l:transl-sc-to-nd-ctxs}
\begin{enumerate}
	\item By induction on $\lsctx$. Case:
	\begin{itemize}
	\item \emph{Empty left context}, \ie $\lsctx=\ctxhole$: by definition empty contexts are translated to empty contexts. 
	
	\item \emph{Subtraction}, \ie $\lsctx= \nsuba\svar\stmtwo\svartwo\lsctxtwo$. Then $\trsctond{\nsuba\svar\stmtwo\svartwo\lsctxtwo} = \sube{\var\trsctond\stmtwo}\vartwo\trsctond\lsctxtwo =_{\ih} \sube{\var\trsctond\stmtwo}\vartwo\lctxtwo$ which is a $\lctx$ context, and similarly $\trsctond{\nsuba\svar\stmtwo\svartwo\lsctxtwop\stm} = \sube{\var\trsctond\stmtwo}\vartwo\trsctond{\lsctxtwop\stm} =_{\ih} \sube{\var\trsctond\stmtwo}\vartwo\trsctond\lsctxtwo\ctxholep{\trsctond\stm} = \trsctond\lsctx \ctxholep{\trsctond\tm}$.
	
	\item \emph{Cut}, \ie $\lsctx= \cuta\stmtwo\svar\lsctxtwo$. Similar to the previous case.
	\end{itemize}
	\item By induction on $\gsctx$ and similar to the previous point.\qed
\end{enumerate}

\end{proof}

\section{Proofs Removed from \refsect{simulating-vsc-vanilla} (Simulating the VSC in the Vanilla $\l$-Calculus)}
\label{sect:app-simulating-vsc-vanilla}

\gettoappendix{l:nd-to-sc-subst}
\begin{proof}
By induction on $\tm$. Cases:\applabel{l:nd-to-sc-subst}
\begin{itemize}
\item \emph{Variable}: if $\tm = \var$ then: 
\begin{center}
$\begin{array}{llllllllllll}
\cutsub{\trndtosc\val}\svar\trndtosc\var &=&\cutsub{\trndtosc\val}\svar\svar &=& \trndtosc\sval &=& \trndtosc{\subi\sval\svar \svar}.
\end{array}$
\end{center}
If instead $\tm = \vartwo$ then: 
\begin{center}
$\begin{array}{llllllllllll}
\cutsub{\trndtosc\sval}\svar\trndtosc\vartwo &=&\cutsub{\trndtosc\sval}\svar\svartwo &=& \svartwo &=& \trndtosc\vartwo &=& \trndtosc{\subi\sval\var \vartwo}.
\end{array}$
\end{center}

\item \emph{Abstraction}, \ie $\tm = \la\vartwo\tmtwo$. Then:
\begin{center}
$\begin{array}{clllllllllll}
\cutsub{\trndtosc\val}\svar \trndtosc{\la\vartwo\tmtwo} &= &\cutsub{\trndtosc\val}\svar\la\svartwo\trndtosc\tmtwo 
\\[3pt] &=& 
\la\svartwo \cutsub{\trndtosc\val}\svar\trndtosc\stmtwo 
\\[3pt] (\ih) &=&  
\la\svartwo \trndtosc{\subi\val\var \tmtwo}
\\[3pt] &=& 
\trndtosc{\la\vartwo\subi\val\var \tmtwo}
\\[3pt]  &=& 
\trndtosc{\subi\val\var\la\vartwo \tmtwo}.
\end{array}$
\end{center}

\item \emph{Application}, \ie $\tm = \tmtwo \tmthree$. We have (with $\svartwo$ and $\svarthree$ fresh):
\begin{center}
$\begin{array}{clllllllllll}
\cutsub{\trndtosc{\val}}\svar\trndtosc{\tmtwo \tmthree}
 &=& 
\cutsub{\trndtosc{\val}}\svar \cuta{\trndtosc\tmtwo}\svartwo \nsuba\svartwo {\trndtosc\tmthree}\svarthree \svarthree
\\[3pt] &=& 
 \cuta{\cutsub{\trndtosc{\val}}\svar\trndtosc\tmtwo}\svartwo \nsuba\svartwo {\cutsub{\trndtosc{\val}}\svar\trndtosc\tmthree}\svarthree \svarthree
\\[3pt] (\ih) &=&   
 \cuta{\trndtosc{\cutsub\val\var\tmtwo}}\svartwo \nsuba\svartwo {\trndtosc{\cutsub\val\var\tmthree}}\svarthree 
\\[3pt] &=&
\trndtosc{\cutsub\sval\var\stmtwo \, \cutsub\sval\var\stmthree}
\\[3pt] &=& 
\trndtosc{\cutsub\sval\var(\stmtwo\stmthree)}
\end{array}$
\end{center}

\item \emph{Explicit substitution}, \ie $\tm = \sube\tmtwo\vartwo\tmthree$. Then:
\begin{center}
$\begin{array}{clllllllllll}
\cutsub{\trndtosc{\val}}\svar\trndtosc{\sube\tmtwo\vartwo\tmthree}
 &=& 
\cutsub{\trndtosc{\val}}\svar \cuta{\trndtosc\tmtwo}\svartwo \trndtosc\tmthree
\\[3pt] &=& 
 \cuta{\cutsub{\trndtosc{\val}}\svar\trndtosc\tmtwo}\svartwo \cutsub{\trndtosc{\val}}\svar\trndtosc\tmthree
\\[3pt] (\ih) &=&   
 \cuta{\trndtosc{\cutsub\val\var\tmtwo}}\svartwo \trndtosc{\cutsub\val\var\tmthree}
\\[3pt] &=&
\trndtosc{ \sube{\cutsub\sval\var\tmtwo}\vartwo \cutsub\sval\var\tmthree}
\\[3pt] &=& 
\trndtosc{\cutsub\sval\var \sube\tmtwo\vartwo\tmthree}
\end{array}$
\end{center}
\qed
\end{itemize}
\end{proof}

\gettoappendix{l:transl-nd-to-sq-ctxs}
\begin{proof}
\applabel{l:transl-nd-to-sq-ctxs}\hfill
\begin{enumerate}
	\item By induction on $\lctx$. By definition empty contexts are translated to empty contexts. If $\lctx= \sube\tmtwo\var\lctxtwo$ then $\trndtosc{\sube\tmtwo\var\lctxtwo} = \cuta{\trndtosc\tmtwo}\svar\trndtosc\lctxtwo =_{\ih} \cuta{\trndtosc\tmtwo}\svar\lsctxtwo$ which is a $\lsctx$ context, and similarly $\trndtosc{\sube\tmtwo\var\lctxtwop\tm} = \cuta{\trndtosc\tmtwo}\svar\trndtosc{\lctxtwop\tm} =_{\ih} \cuta{\trndtosc\tmtwo}\svar\trndtosc\lctxtwo\ctxholep{\trndtosc\tm} = \trndtosc\lctx \ctxholep{\trndtosc\tm}$.
	\item By induction on $\ctx$ and similar to the previous point.\qed
\end{enumerate}
\end{proof}

\gettoappendix{prop:trans-normal-nd-terms}
\begin{proof}
For the proof, we have to strengthen the statement with a \emph{moreover} clause, as follows:\applabel{prop:trans-normal-nd-terms}
\begin{center}
\emph{Let $\tm\in\ndesterms$ be $\tovsc$-normal. Then $\trndtosc\tm \storencut^k \stm$ with $\stm$ cut-free and $k\leq\size\tm$. Moreover, if $\tm$ does not have shape $\lctxp{\la\var\tm'}$ then $\stm$ does not have shape $\lsctxp{\la\svar\stm'}$.}
\end{center}
The proof is by induction on $\tm$. Cases:
\begin{itemize}
\item \emph{Variable}: if $\tm = \var$ then the statement holds by taking $\stm \defeq\svar$ and $k=0\leq\size\var=1$, as indeed $\trndtosc\var=\svar=\stm$ is cut-free. The \emph{moreover} part of the statement holds.

\item \emph{Abstraction}, \ie $\tm = \la\vartwo\tmtwo$. It follows from the \ih

\item \emph{Application}, \ie $\tm = \tmtwo \tmthree$. Then $\size\tm = \size\tmtwo + \size\tmthree + 1$. With $\svartwo$ and $\svarthree$ fresh:
\begin{center}
$\begin{array}{c\colspace c\colspace llllllllll}
\trndtosc{\tmtwo \tmthree}
 &=& 
\cuta{\trndtosc\tmtwo}\svartwo \nsuba\svartwo {\trndtosc\tmthree}\svarthree \svarthree
\\ (\mbox{by }\ih) & \storencut^{\leq\size\tmtwo} &
\cuta{\stmtwo}\svartwo \nsuba\svartwo {\trndtosc\tmthree}\svarthree \svarthree
\\ (\mbox{by }\ih)  & \storencut^{\leq\size\tmthree} &
\cuta{\stmtwo}\svartwo \nsuba\svartwo {\stmthree}\svarthree \svarthree
\end{array}$
\end{center}
Note that $\tmtwo$ cannot have shape $\lctxp{\la\var\tmtwo'}$, otherwise $\tm$ would have a $\todb$-redex. By \ih, $\stmtwo$ cannot have shape $\lsctxp{\la\svar\stmtwo'}$, that is, it has shape $\lsctxp{\svar}$ for some $\lsctx$ and $\svar$. Then, we can perform a renaming step:
\begin{center}
$\begin{array}{c\colspace c\colspace llllllllll}
\cuta{\stmtwo}\svartwo \nsuba\svartwo {\stmthree}\svarthree \svarthree
 &=& 
\cuta{\lsctxp{\svar}}\svartwo \nsuba\svartwo {\stmthree}\svarthree \svarthree
\\  & \storencut &
\lsctxp{\nsuba\svar {\stmthree}\svarthree \svarthree}
\end{array}$
\end{center}
Note that $\lsctxp{\nsuba\svar {\stmthree}\svarthree \svarthree}$ is cut-free because by the \ih both $\lsctx$ and $\stmthree$ are cut-free, and that the \emph{moreover} part of the statement holds.
\item \emph{Explicit substitution}, \ie $\tm = \sube\tmtwo\vartwo\tmthree$. Then $\size\tm = \size\tmtwo + \size\tmthree + 1$. We have:
\begin{center}
$\begin{array}{c\colspace c\colspace llllllllll}
\trndtosc{\sube\tmtwo\vartwo \tmthree}
 &=& 
\cuta{\trndtosc\tmtwo}\svartwo \trndtosc\tmthree
\\ (\mbox{by }\ih) & \storencut^{\leq\size\tmtwo} &
\cuta{\stmtwo}\svartwo \trndtosc\tmthree
\\ (\mbox{by }\ih)  & \storencut^{\leq\size\tmtwo} &
\cuta{\stmtwo}\svartwo \stmthree
\end{array}$
\end{center}
Note that $\tmtwo$ cannot have shape $\lctxp{\la\var\tmtwo'}$, otherwise $\tm$ would have a $\tovs$-redex. By \ih, $\stmtwo$ cannot have shape $\lsctxp{\la\svar\stmtwo'}$, that is, it has shape $\lsctxp{\svar}$ for some $\lsctx$ and $\svar$. Then, we can perform a renaming step:
\begin{center}
$\begin{array}{c\colspace c\colspace llllllllll}
\cuta{\stmtwo}\svartwo \stmthree
 &=& 
\cuta{\lsctxp{\svar}}\svartwo \stmthree
\\  & \stocut &
\lsctxp{\cutsub\svar\svartwo\stmthree}
\end{array}$
\end{center}
Note that $\lsctxp{\cutsub\svar\svartwo\stmthree}$ is cut-free because by the \ih both $\lsctx$ and $\stmthree$ are cut-free, and that the \emph{moreover} part of the statement holds by the \ih on $\tmthree$.
\qed
\end{itemize}
\end{proof}
\section{Rewriting Preliminaries for the Proof of Strong Normalization}
\label{app:SN-preliminaries}

A first straightforward property of SN is the following one.
\begin{lemma}[Stability of SN under renaming]
If $\stm \in \sn\cutsym$  then $\cutsub\svar\svartwo\stm \in \sn\cutsym$ for all variables $\svar,\svartwo\in\vars$.\label{l:deformations}
\end{lemma}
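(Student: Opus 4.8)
The plan is to show that a renaming is, up to the evident bijection on occurrences, transparent to cut elimination, so that an infinite $\stocut$-sequence out of $\cutsub\svar\svartwo\stm$ would reflect into an infinite one out of $\stm$. First I would dispose of the degenerate case $\svar=\svartwo$, where $\cutsub\svar\svar\stm=\stm$ is the identity (a trivial induction on $\stm$ using the defining clauses of meta-substitution), and henceforth assume $\svar\neq\svartwo$. Note that since a variable is a value, $\cutsub\svar\svartwo\cdot$ is the substitution of a \emph{value} and is thus a genuine instance of the meta-level substitution of \reffig{SC-dynamic-small}.

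The core of the argument is that renaming commutes with the three ingredients of a $\stocut$-step. \textbf{(i)} Renaming preserves the value/non-value distinction and sends a left context to a left context, hence it preserves the unique splitting $\stm=\lsctxp\sval$; concretely $\cutsub\svar\svartwo(\lsctxp\sval)=(\cutsub\svar\svartwo\lsctx)\ctxholep{\cutsub\svar\svartwo\sval}$, with $\cutsub\svar\svartwo\sval$ again a value. \textbf{(ii)} Renaming commutes with plugging in a general context $\gsctx$, i.e. $\cutsub\svar\svartwo(\gsctxp\stmtwo)=(\cutsub\svar\svartwo\gsctx)\ctxholep{\cutsub\svar\svartwo\stmtwo}$. \textbf{(iii)} The substitution lemma: renaming commutes with meta-level substitution of an arbitrary value, $\cutsub\svar\svartwo(\cutsub\sval\svarthree\stmtwo)=\cutsub{\cutsub\svar\svartwo\sval}\svarthree(\cutsub\svar\svartwo\stmtwo)$, where by $\alpha$-conversion we keep the cut variable $\svarthree$ distinct from $\svar,\svartwo$. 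Facts (i) and (ii) are routine inductions on the context, and (iii) is a routine induction on $\stmtwo$.

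From these commutations I would extract a reflection property. Because renaming neither creates nor destroys cut constructors, and because \emph{every} cut is a redex (every subterm splits as $\lsctxp\sval$), the redex occurrences of $\stm$ and of $\cutsub\svar\svartwo\stm$ are in bijection; moreover, contracting matching redexes yields renamed results, by combining (i)--(iii). Formally: if $\cutsub\svar\svartwo\stm\stocut\stmthree$, then there is $\stmtwo$ with $\stm\stocut\stmtwo$ and $\stmthree=\cutsub\svar\svartwo\stmtwo$. Iterating this reflection, any infinite reduction issuing from $\cutsub\svar\svartwo\stm$ lifts to an infinite reduction issuing from $\stm$, contradicting $\stm\in\sn\cutsym$; therefore $\cutsub\svar\svartwo\stm\in\sn\cutsym$.

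The main obstacle I anticipate is the substitution lemma (iii), and specifically its interaction with the one genuinely delicate clause of meta-substitution, namely $\cutsub{\la\svartwo\stmthree}\svar \suba\svar\stmtwo\svarthree\stm$, which simultaneously contracts a principal cut and spawns two fresh cuts. There one must verify that renaming commutes with this reshuffling and with the nested occurrence of the substitution inside it. This is manageable precisely because renaming sends variables to variables and abstractions to abstractions, so it never changes which defining clause of $\cutsub\cdot\cdot\cdot$ applies; the only real care needed is with the freshness of the bound and newly introduced variables, which $\alpha$-conversion lets us take disjoint from $\svar$ and $\svartwo$.
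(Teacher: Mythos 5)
Your proposal is correct and takes essentially the same route as the paper: both rest on the reflection property that any step from $\cutsub\svar\svartwo\stm$ lifts to a step from $\stm$ with a renamed reduct, justified by the observation that renaming neither creates, erases, nor duplicates redexes. The only (inessential) difference is that the paper concludes by induction on the predicate $\stm\in\sn\cutsym$ rather than by contradiction with an infinite reduction sequence, which fits the paper's inductive definition of SN slightly more directly.
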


\begin{proof}
By induction on $\stm\in\sn\cutsym$. Since renaming cannot create, erase, or duplicate redexes, 
if $\cutsub\svar\svartwo\stm \stocut \stmtwo$ then there exists $\stmthree$ such that $\stm \stocut \stmthree$ and 
$\cutsub\svar\svartwo\stmthree = \stmtwo$. Then, by \ih on $\stmthree$ we obtain $\cutsub\svar\svartwo\stmthree=\stmtwo\in\sn\cutsym$.
\qed
\end{proof}

\paragraph{Structure of This Section.} We are now going to prove three rewriting properties which shall be the main rewriting tools in the proof of SN by the reducibility method developed in the next section. Namely,
\begin{enumerate}
\item \emph{Extension}: there shall be two extension properties, that extend SN from root sub-terms to the whole term, in special cases;
\item \emph{Root cut expansion}: this property states that if the reduct of a root cut is SN then SN lifts to the term before reducing the cut.
\item \emph{Structural stability}: we shall introduce a notion of structural equivalence for vanilla terms and prove that it preserves SN.
\end{enumerate}
The first two properties can be found (under various names) in \emph{all} proofs of SN, often without specific emphasis. The third one is characteristic of calculi with some form of explicit substitutions and rewriting rules at a distance.

\paragraph{Extension.} The \emph{extension} property is the easy fact that, for abstractions and subtractions, SN follows from SN of the root sub-terms.

\begin{lemma}[Neutral extension] 
Let $\stm, \stmtwo \in \sn\cutsym$. 
Then $\la\svar\stm$ and $\suba\svar\stmtwo\svartwo  \stm$ are in $\sn\cutsym$. \label{l:extension}
\end{lemma}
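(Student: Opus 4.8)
The plan is to exploit the fact that neither $\la\svar\stm$ nor $\suba\svar\stmtwo\svartwo\stm$ is a cut, so the root rule $\srtocut$—whose left-hand side $\cuta{\lsctxp\sval}\svar\stm$ is \emph{always} a cut—never fires at the top level of either term. Consequently, by the very definition of $\stocut$ as the contextual closure of $\srtocut$ through the grammar of general contexts $\gsctx$ in \reffig{SC-dynamic-small}, every $\stocut$-step out of these terms is \emph{internal}, taking place strictly inside one of the root sub-terms. Strong normalization of the whole term therefore reduces to strong normalization of its sub-terms, and I would run the proof by induction on $\stm\in\sn\cutsym$ (and, for the subtraction, simultaneously on $\stmtwo\in\sn\cutsym$), exactly as in the induction on the inductive definition of $\sn\cutsym$ already used for \reflemma{deformations}.

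For the abstraction, I would observe that any step $\la\svar\stm\stocut\stmthree$ can only arise from the context clause $\la\svar\gsctx$, since there is no root redex; hence $\stmthree=\la\svar\stm'$ with $\stm\stocut\stm'$. As $\stm\in\sn\cutsym$ gives $\stm'\in\sn\cutsym$, the \ih applied to $\stm'$ yields $\la\svar\stm'\in\sn\cutsym$. Since every one-step reduct of $\la\svar\stm$ is thus in $\sn\cutsym$, so is $\la\svar\stm$ itself, by definition of $\sn\cutsym$.

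For the subtraction I would use the simultaneous induction on $\stm,\stmtwo\in\sn\cutsym$. Again there is no root step, so a step $\suba\svar\stmtwo\svartwo\stm\stocut\stmthree$ comes either from the clause reducing the argument, giving $\stmthree=\suba\svar{\stmtwo'}\svartwo\stm$ with $\stmtwo\stocut\stmtwo'$, or from the clause reducing the body, giving $\stmthree=\suba\svar\stmtwo\svartwo{\stm'}$ with $\stm\stocut\stm'$. In either case exactly one of the two root sub-terms reduces while the other is left untouched, so one component of the pair strictly decreases for $\stocut$ and the other remains in $\sn\cutsym$; the \ih then gives $\stmthree\in\sn\cutsym$. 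Hence all reducts are in $\sn\cutsym$, and therefore so is $\suba\svar\stmtwo\svartwo\stm$.

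The only points deserving care—and which I expect to be the crux rather than a genuine obstacle—are the two structural facts underpinning the argument: that the top constructors (abstraction, subtraction) can never match the cut pattern $\cuta{\lsctxp\sval}\svar\stm$ of $\srtocut$, so that no root step is ever available; and that the two branches of a subtraction reduce \emph{independently}, with no duplication or erasure happening at the top level (duplication and erasure being the exclusive job of the cut rule, carried out through the meta-level substitution $\cutsub\sval\svar\cdot$, and not of subtraction). Both facts are immediate from the definition of $\stocut$ and the grammar of $\gsctx$ in \reffig{SC-dynamic-small}. The lemma is thus essentially a routine ``$\sn\cutsym$ is closed under neutral contexts'' statement; as the name \emph{neutral extension} suggests, its purpose is to supply the base cases of the reducibility argument developed in the next section.
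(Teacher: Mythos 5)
Your proof is correct and follows essentially the same route as the paper: since neither an abstraction nor a subtraction matches the root pattern $\cuta{\lsctxp\sval}\svar\stm$ of $\srtocut$, every step is internal to a root sub-term, and a well-founded induction on $\stm\in\sn\cutsym$ (resp.\ on the pair $\stm,\stmtwo\in\sn\cutsym$) closes the argument. The two structural observations you single out (no root redex, independence of the two branches of a subtraction) are exactly the content of the paper's proof of this ``easy fact''.
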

\begin{proof}
By induction on $\stm \in \sn\cutsym$ for $\la\svar\stm$ and by induction on $(\stm\in \sn\cutsym, \stmtwo \in \sn\cutsym)$ for $\suba\svar\stmtwo\svartwo\stm$. In both cases, one shows that all reducts are in $\sn\cutsym$, which follows immediately from the \ih, because there cannot be interaction between the immediate sub-terms.\qed
\end{proof}

Similarly, when we extend a term $\stm$ with a left contexts $\lsctx$ that captures no variables of $\stm$, SN follows from SN of $\stm$ and $\lsctx$.
\begin{lemma}[Disjoint left context extension]
Let $\stm \in \sn\cutsym$ and $\lsctx$ be a left context such that:\label{l:disj-left-ctx-extension}
\begin{enumerate}
\item $\lsctx$ does not captures variables in $\fv\stm$, and;
\item $\lsctxp\svar \in \sn\cutsym$ for a  variable $\svar$ not captured by $\lsctx$. 
\end{enumerate}
Then, $\lsctxp\stm \in\sn\cutsym$.
\end{lemma}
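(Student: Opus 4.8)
The plan is to show that $\lsctxp{\stm}$ reduces in two non-interfering ways --- inside the plugged term $\stm$, or inside the surrounding left context $\lsctx$ --- and to bound each kind of step separately using $\stm\in\sn\cutsym$ and $\lsctxp\svar\in\sn\cutsym$. The reason the two do not interfere is hypothesis~1: the binders of $\lsctx$ do not occur free in $\stm$, so whenever a cut of the context fires, the meta-substitution it triggers is \emph{vacuous} on $\stm$ by \reflemma{vacuous-subs}, and the copy of $\stm$ is simply carried around untouched. I would first take $\svar$ fresh for both $\lsctx$ and $\stm$ (harmless: the statement only requires $\svar$ not to be captured, and one may always pick the test variable fresh, appealing to \reflemma{deformations} to rename an arbitrary non-captured $\svar$ if desired).

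The core is a projection between the reductions of $\lsctxp\svar$ and those of $\lsctxp\stm$. I would establish the invariant that \emph{the hole variable stays unique and in a left-context position}: in a left context the hole sits in a cut-body or a subtraction-body, meta-substitution (\reffig{SC-dynamic-small}) recurses on such bodies exactly once, and --- by freshness --- no substituted value ever contains $\svar$; hence the marked occurrence of $\svar$ is never duplicated, never erased, and never moved to the left component of a cut nor to a subtracted-variable slot. Consequently every reduct of $\lsctxp\svar$ is again of the shape $\lsctx'\ctxholep{\svar}$ for a left context $\lsctx'$ (with $\svar$ at its unique hole), and, using the variable convention so that freshly created binders avoid $\fv\stm$ and $\svar$, this $\lsctx'$ still satisfies hypotheses~1 and~2.

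From the invariant the projection reads as follows: a step $\lsctxp\stm \stocut \stmtwo$ has exactly one of two shapes. Either it lies in the plugged copy of $\stm$, giving $\stmtwo = \lsctxp{\stm'}$ with $\stm \stocut \stm'$; or it fires a cut of the context --- and since the marked occurrence is never a cut's left component, the very same cut is a redex of $\lsctxp\svar$, so that $\stmtwo = \lsctx'\ctxholep{\stm}$ with $\lsctxp\svar \stocut \lsctx'\ctxholep{\svar}$. That these two descriptions of $\stmtwo$ agree is again guaranteed by the vacuity of the fired substitution on $\stm$ (\reflemma{vacuous-subs}).

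With the projection in hand, I would conclude by well-founded induction on the pair $(\lsctxp\svar,\stm)$ ordered lexicographically by $\stocut$, which is well founded since both components are in $\sn\cutsym$. For a reduct $\stmtwo$ of $\lsctxp\stm$: if the step is internal then $\stm\stocut\stm'$, the first component is unchanged and the second strictly decreases, so the \ih (whose hypotheses persist since $\fv{\stm'}\subseteq\fv\stm$) gives $\lsctxp{\stm'}=\stmtwo\in\sn\cutsym$; if it is a context step then $\lsctxp\svar\stocut\lsctx'\ctxholep{\svar}$ strictly decreases the first component, $\lsctx'\ctxholep{\svar}\in\sn\cutsym$ as a reduct of an $\sn\cutsym$ term, and the \ih on $(\lsctx',\stm)$ gives $\lsctx'\ctxholep{\stm}=\stmtwo\in\sn\cutsym$. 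Hence every reduct is in $\sn\cutsym$, so $\lsctxp\stm\in\sn\cutsym$. The main obstacle is precisely the invariant of the second paragraph: checking that the marked hole survives the substitution clause turning a subtraction into nested cuts (the last clause of \reffig{SC-dynamic-small}), where the body carrying the hole must be seen to land in the body of the \emph{outer} created cut --- thus staying in a left-context position --- and to be copied only once. Everything else is routine once this is secured.
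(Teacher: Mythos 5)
Your proof is correct and follows essentially the same route as the paper's: induction on the pair $(\stm\in\sn\cutsym,\ \lsctxp\svar\in\sn\cutsym)$ plus a case analysis showing that every reduct of $\lsctxp\stm$ is either a step inside $\stm$ or the projection of a step of $\lsctxp\svar$, with no interaction between the two thanks to hypothesis~1. The paper simply asserts the non-interaction and that the context steps preserve the left-context shape, whereas you spell out the underlying invariant in detail; the argument is otherwise the same.
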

\begin{proof}
By induction on $(\stm\in \sn\cutsym, \lsctxp\svar \in \sn\cutsym)$. We look at the reducts of $\lsctxp\stm$:
\begin{itemize}
\item $\lsctxp\stm \stocut \lsctxp{\stm'}$ because $\stm\stocut\stm'$. By \ih (1st component), $\lsctxp{\stm'}\in\sn\cutsym$.
\item $\lsctxp\stm \stocut \lsctxtwop{\stm}$ because $\lsctxp\svar\stocut\lsctxtwop\svar$. By \ih (2nd component), $\lsctxtwop\stm \in \sn\cutsym$.
\item There are no other possible reducts, because by hypothesis $\lsctx$ does not captures variables in $\fv\stm$, and so $\lsctx$ and $\stm$ cannot interact.\qed
\end{itemize}
\end{proof}

\paragraph{Substitutivity.} For proving the second main rewriting property, namely \emph{root cut expansion}, we need the following substitutivity properties.

\begin{lemma}[Substitutivity]
\label{l:substitutivity-of-red}\hfill
\begin{enumerate}
\item
If $\stm \stocut \stm'$ then $\cutsub\sval\svar \stm \stocut \cutsub\sval\svar \stm'$;
\item \label{p:substitutivity-of-red-right}
If $\stm \stocut \stm'$ then $\cutsub\stmtwo\svar \stm \stocut \cutsub\stmtwo\svar \stm'$;
\item
If $\sval \stocut \svaltwo$ then $\cutsub\sval\svar \stm \stocut^* \cutsub{\svaltwo}\svar \stm$.
\item \label{p:substitutivity-of-red-left}
If $\stmtwo \stocut \stmtwo'$ then $\cutsub\stmtwo\svar \stm \stocut^* \cutsub{\stmtwo'}\svar \stm$.
\end{enumerate}
\end{lemma}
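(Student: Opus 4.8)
The plan is to prove the four items by a shared induction on the term $\stm$, following the defining clauses of the meta-level substitution in \reffig{SC-dynamic-small}, treating the value-substitution items (1 and 3) first and deriving the general-term items (2 and 4) from them. For item 1 I would reason as follows. The contracted redex of $\stm \stocut \stm'$ sits at one fixed occurrence, say $\stm = \gsctxp{\cuta{\lsctxp\svaltwo}\svarfour\stmfour}$ with $\svar \neq \svarfour$ by $\alpha$-renaming. Pushing $\cutsub\sval\svar$ through the constructors of $\gsctx$, $\lsctx$, $\svaltwo$, and $\stmfour$ — each clause of the substitution distributes over a single constructor — turns this occurrence into a redex of the very same shape in $\cutsub\sval\svar\stm$, whose contraction yields $\cutsub\sval\svar\stm'$. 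The only clause that is not a plain homomorphism is the one substituting an abstraction value into a subtraction whose head is $\svar$, which replaces that subtraction by nested cuts; these created cuts are inert (they are not the contracted redex and occur identically before and after the step), so exactly one $\stocut$ step remains. Matching the two reducts in the root case requires a substitution lemma stating that $\cutsub\sval\svar\cdot$ commutes with the substitution performed by the contracted cut.

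For item 3 the substituted value reduces, $\sval \stocut \svaltwo$. Since values are closed under $\stocut$ (a variable is normal and an abstraction reduces to an abstraction), $\svaltwo$ is again a value and $\cutsub\svaltwo\svar\stm$ is well defined. I would induct on $\stm$ and contract the planted copies of $\sval$ one at a time: at a right occurrence, $\cutsub\sval\svar\svar = \sval \stocut \svaltwo = \cutsub\svaltwo\svar\svar$ (one step); at a subtraction head on $\svar$ with $\sval = \la\svartwo\stmthree$, the reduction $\sval \stocut \svaltwo$ becomes a reduction inside the body $\stmthree$ of the created cut $\cuta{\cdots}\svartwo\stmthree$ (again one step). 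Summing over the free occurrences of $\svar$ gives $\cutsub\sval\svar\stm \stocut^* \cutsub\svaltwo\svar\stm$, with zero steps when $\svar\notin\fv\stm$ by \reflemma{vacuous-subs}; this is exactly why items 3 and 4 are stated with $\stocut^*$ while items 1 and 2 are single-step.

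For items 2 and 4 I would use the unique splitting of every vanilla term as $\stmtwo = \lsctxp\sval$, so that the general substitution $\cutsub{\lsctxp\sval}\svar\stm = \lsctxp{\cutsub\sval\svar\stm}$ is the contractum of the cut $\cuta\stmtwo\svar\stm$, with $\lsctx$ extruded. Item 2 then follows from item 1 by closing $\stocut$ under the left context $\lsctx$, giving one step. For item 4 I would analyse where the step $\stmtwo = \lsctxp\sval \stocut \stmtwo'$ falls: a step inside $\lsctx$ is replayed under $\lsctxp{\cdot}$ (one step); a step inside $\sval$ is handled by item 3 and hence duplicated across the occurrences of $\svar$ (giving $\stocut^*$); and a root step involving both $\lsctx$ and $\sval$ is absorbed by re-splitting the reduct. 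In all cases $\cutsub\stmtwo\svar\stm \stocut^* \cutsub{\stmtwo'}\svar\stm$.

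The hard part will be the substitution lemma used in the root case of item 1: because the abstraction-into-subtraction clause does genuine work (it creates cuts rather than merely relocating the substituted value), commuting $\cutsub\sval\svar\cdot$ with the cut being contracted is not a routine homomorphism argument and needs its own induction, with that same clause as the critical case. Everything else is bookkeeping driven by the recursive clauses of the substitution, together with closure of $\stocut$ under left contexts.
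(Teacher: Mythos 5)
Your proposal follows essentially the same route as the paper: Points 1 and 3 are proved by induction (the paper calls them ``easy''), Point 2 is an immediate consequence of Point 1, and Point 4 is handled by splitting $\stmtwo = \lsctxp\sval$ and casing on whether the step lies in $\lsctx$, in $\sval$ (reducing to Point 3, whence the $\stocut^*$), or involves both. The only spot where the paper does visibly more work than your sketch is that last mixed case, where the reduct must be re-split around the new value $\cutsub\svaltwo\svartwo\sval$ and one genuine $\stocut$ step is performed using the composition identity $\cutsub{\cutsub\svaltwo\svartwo\sval}\svar \stm = \cutsub\svaltwo\svartwo\cutsub\sval\svar \stm$ --- so ``absorbed by re-splitting'' must not be read as zero steps.
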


\begin{proof}
Point 1 is an easy induction on $\stm\stocut\stm'$, and Point 2 is an immediate consequence. Point 3 is another easy induction on $\stm$. For Point 4, let $\stmtwo=\lsctxp\sval$. There are various cases, depending on the step in $\stmtwo$:
\begin{itemize}
\item  \emph{Step in $\lsctx$}, that is, $\stmtwo = \lsctxp\sval \stocut \lsctxtwop\sval = \stmtwo'$. Then:
\[\begin{array}{r\hcolspace c\hcolspace l\hcolspace c\hcolspace l\hcolspace c\hcolspace l}
\cutsub{\lsctxp\sval}\svar \stm 
& = & \lsctxp{\cutsub\sval\svar \stm} 
\\ & \stocut & 
\lsctxtwop{\cutsub\sval\svar \stm}
 & = & \cutsub{\lsctxtwop\sval}\svar \stm 
 \end{array}
\]

\item  \emph{Step in $\sval$}, that is, $\stmtwo = \lsctxp\sval \stocut \lsctxp\svaltwo = \stmtwo'$. Then:
\[\begin{array}{r\hcolspace c\hcolspace l\hcolspace c\hcolspace l\hcolspace c\hcolspace l}
\cutsub{\lsctxp\sval}\svar \stm 
& = & \lsctxp{\cutsub\sval\svar \stm} 
\\\mbox{(by P. 3)} & \stocut & 
\lsctxp{\cutsub\svaltwo\svar \stm}
 & = & \cutsub{\lsctxp\svaltwo}\svar \stm 
 \end{array}
\]

\item  \emph{Step involving both $\lsctx$ and $\sval$}, that is, $\lsctx = \lsctxtwop{\cuta{\lsctxthreep\svaltwo}\svar\lsctxfour}$ and 
\[\begin{array}{r\hcolspace c\hcolspace l\hcolspace c\hcolspace l\hcolspace c\hcolspace l}
\stmtwo 
&= &
\lsctxtwop{\cuta{\lsctxthreep\svaltwo}\svartwo\lsctxfourp\sval} 
&\stocut& 
\lsctxtwop{\lsctxthreep{\cutsub\svaltwo\svartwo\lsctxfourp\sval}} 
&=& \stmtwo'.
\end{array}\]
Let $\lsctx^\bullet \defeq \cutsub\svaltwo\svartwo\lsctxfour$, so that we can write $\stmtwo' = \lsctxtwop{\lsctxthreep{\lsctx^\bullet\ctxholep{\cutsub\svaltwo\svartwo\sval}}}$ and note that $\cutsub\svaltwo\svartwo\sval$ is a value.
Then:
\[\begin{array}{r\hcolspace c\hcolspace l\hcolspace c\hcolspace l}
\cutsub{\lsctxp\sval}\svar \stm&=&\cutsub{\lsctxtwop{\cuta{\lsctxthreep\svaltwo}\svartwo\lsctxfourp\sval}}\svar \stm 
\\ &= & \lsctxtwop{\cuta{\lsctxthreep\svaltwo}\svartwo\lsctxfourp{\cutsub\sval\svar \stm}}
\\ &\stocut & 
\lsctxtwop{\lsctxthreep{\lsctx^\bullet\ctxholep{\cutsub\svaltwo\svartwo\cutsub\sval\svar \stm}}}
\\ & = & 
\lsctxtwop{\lsctxthreep{\lsctx^\bullet\ctxholep{\cutsub{\cutsub\svaltwo\svartwo\sval}\svar \stm}}}
\\ & = & 
\cutsub{\lsctxtwop{\lsctxthreep{\lsctx^\bullet\ctxholep{\cutsub\svaltwo\svartwo\sval}}}}\svar \stm
& = & 
\cutsub{\stmtwo'}\svar \stm
 \end{array}
\]
\qed
\end{itemize}
\end{proof}

\paragraph{Root Cut Expansion.} The \emph{root cut expansion} property is the less obvious fact that if the reduct of a root \emph{small-step} cut is $\sn\cutsym$ then the reducing term also is. It is a key property playing a crucial role in all proofs of SN.

\begin{proposition}[Root cut expansion]
If $\stmtwo\in\sn\cutsym$ and $\cutsub\stmtwo\svar\stm\in\sn\cutsym$ then 
$\cuta\stmtwo\svar \stm \in\sn\cutsym$.\label{prop:root-sn-expansion}
\end{proposition}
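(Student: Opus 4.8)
The plan is to prove the statement by a straightforward well-founded induction, once the right measure is fixed. The key observation is that, thanks to the splitting property, the left sub-term always decomposes as $\stmtwo = \lsctxp\sval$, so the root cut of $\cuta\stmtwo\svar\stm$ is \emph{always} a redex and the root step is exactly $\cuta\stmtwo\svar\stm = \cuta{\lsctxp\sval}\svar\stm \stocut \lsctxp{\cutsub\sval\svar\stm} = \cutsub\stmtwo\svar\stm$. Thus the contractum of the principal cut is the very term $\cutsub\stmtwo\svar\stm$ assumed to be in $\sn\cutsym$, which handles the most delicate reduct for free. Before starting the induction I would also record that $\stm \in \sn\cutsym$: an infinite reduction out of $\stm$ would, by \reflemma{substitutivity-of-red}.2 (each step $\stm \stocut \stm'$ is reflected as a step $\cutsub\stmtwo\svar\stm \stocut \cutsub\stmtwo\svar{\stm'}$), lift to an infinite reduction out of $\cutsub\stmtwo\svar\stm$, contradicting the hypothesis.

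I would then prove the claim by induction on the pair $(\stmtwo,\stm)$ ordered lexicographically by $\stocut$, which is well founded because both components are in $\sn\cutsym$. To show $\cuta\stmtwo\svar\stm\in\sn\cutsym$ it suffices to check that all its one-step reducts are in $\sn\cutsym$, and these come in exactly three shapes: the root step to $\cutsub\stmtwo\svar\stm$ (in $\sn\cutsym$ by hypothesis), an internal step $\cuta{\stmtwo'}\svar\stm$ with $\stmtwo\stocut\stmtwo'$, and an internal step $\cuta\stmtwo\svar{\stm'}$ with $\stm\stocut\stm'$.

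For the two internal reducts I would invoke the induction hypothesis after discharging its premises through the Substitutivity lemma. In the first case $\stmtwo'\in\sn\cutsym$ as a reduct of an $\sn\cutsym$ term, and $\cutsub{\stmtwo'}\svar\stm\in\sn\cutsym$ because \reflemma{substitutivity-of-red}.4 gives $\cutsub\stmtwo\svar\stm \stocut^* \cutsub{\stmtwo'}\svar\stm$; since the first component strictly decreases, the induction hypothesis yields $\cuta{\stmtwo'}\svar\stm\in\sn\cutsym$. In the second case $\stm'\in\sn\cutsym$ and $\cutsub\stmtwo\svar{\stm'}\in\sn\cutsym$ by \reflemma{substitutivity-of-red}.2; since the first component is unchanged and the second strictly decreases, the induction hypothesis again applies. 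As every reduct is in $\sn\cutsym$, so is $\cuta\stmtwo\svar\stm$.

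The main obstacle is not a single hard step but the design of the induction: one must pick a measure that descends both when the cut argument $\stmtwo$ is reduced and when the body $\stm$ is reduced, while guaranteeing that the substitution hypothesis $\cutsub\stmtwo\svar\stm\in\sn\cutsym$ survives to the smaller instances. The lexicographic order on $(\stmtwo,\stm)$ does exactly this, and it works only because the Substitutivity lemma reflects reductions of either sub-term into reductions of $\cutsub\stmtwo\svar\stm$; the preliminary observation $\stm\in\sn\cutsym$ is what makes the second component of the measure legitimately well founded.
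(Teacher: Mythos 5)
Your proof is correct and follows essentially the same route as the paper: a well-founded induction over the strongly normalizing components, an exhaustive case analysis of the reducts of $\cuta\stmtwo\svar\stm$ into the root step (which, by the unique splitting $\stmtwo=\lsctxp\sval$, lands exactly on $\cutsub\stmtwo\svar\stm$) and the two internal steps, with the Substitutivity lemma (points 2 and 4) discharging the hypotheses of the inductive calls. The only cosmetic difference is the choice of measure---the lexicographic pair $(\stmtwo,\stm)$, which obliges you first to derive $\stm\in\sn\cutsym$ from $\cutsub\stmtwo\svar\stm\in\sn\cutsym$, a step you carry out correctly via point 2 of Substitutivity.
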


\begin{proof}
By induction on $(\stmtwo \in \sn\cutsym, \cutsub\stmtwo\svar \stm \in \sn\cutsym)$, proving that any reduct of $\cuta\stmtwo\svar \stm$ is in $\sn\cutsym$. Cases:
\begin{itemize}
\item \emph{Reduction of the root cut}: $\cuta\stmtwo\svar \stm \stocut \cutsub\stmtwo\svar \stm$, which is in $\sn\cutsym$ by hypothesis.
\item \emph{Reduction of a cut of $\stm$}: that is, $\cuta\stmtwo\svar \stm \stocut \cuta\stmtwo\svar \stm'$ with $\stm \stocut \stm'$. By stability under substitution (\reflemmap{substitutivity-of-red}{right}), $\cutsub\stmtwo\svar \stm \stocut \cutsub\stmtwo\svar \stm'$ and so $\cutsub\stmtwo\svar \stm' \in\sn\to$. By \ih (2nd component, the 1st is unchanged), $\cuta\stmtwo\svar \stm'\in\sn\cutsym$.

\item \emph{Reduction of a cut of $\stmtwo$}: that is, $\cuta\stmtwo\svar \stm \stocut \cuta{\stmtwo'}\svar \stm$ with $\stmtwo \stocut \stmtwo'$. By \reflemmap{substitutivity-of-red}{left}, $\cutsub\stmtwo\svar \stm \stocut^{*} \cutsub{\stmtwo'}\svar \stm$, and so $\cutsub{\stmtwo'}\svar \stm \in\sn\cutsym$. By \ih (1st component), $\cuta{\stmtwo'}\svar \stm \in\sn\cutsym$.\qed
\end{itemize}
\end{proof}

\subsection{Structural Equivalence}
The proof of strong normalization shall exploit \emph{structural equivalence}, which is standard and pervasive concept in $\l$-calculi with explicit substitutions and rewriting rules at a distance. In the vanilla $\l$-calculus, structural equivalence allows one to displace left rules everywhere but inside values. To define it, we need the notion of \emph{weak contexts}, which are contexts whose hole can appear everywhere but inside abstractions.
\begin{definition}[Weak contexts]
Weak contexts are defined by the following grammar:
\begin{center}
$\begin{array}{r\colspace rllllll}
\textsc{Weak contexts} & \wsctx & \grameq & \ctxhole  \mid \cuta\stm\svar\wsctx \mid \cuta\wsctx\svar\stm \mid \nsuba\vartwo\stm\svar\wsctx \mid \nsuba\svartwo\wsctx\svar\stm
\end{array}$
\end{center}
\end{definition}

\begin{definition}[Structural equivalence]
Let $\dom\wsctx$ be the set of variables captured by $\wsctx$. Root structural equivalence $\rcuteq$ is defined as follows:
\begin{center}
$\begin{array}{r\hcolspace c\hcolspace l\colspace l}
\cuta\stmtwo\svar \wsctxp\stm
& \rcuteq & 
\wsctxp{\cuta\stmtwo\svar \stm} &\svar\notin\fv\wsctx,\, \fv\stmtwo\cap\dom\wsctx = \emptyset
\\
\suba\svartwo\stmtwo\svar \wsctxp\stm
& \rcuteq & 
\wsctxp{\suba\svartwo\stmtwo\svar \stm} &\svar\notin\fv\wsctx,\, (\fv\stmtwo\cup\set\svartwo)\cap\dom\wsctx = \emptyset
\end{array}$
\end{center}
\emph{Structural equivalence} $\cuteq$ is the closure of $\rcuteq$ under reflexivity, symmetry, transitivity, and general contexts $\gsctx$.
\end{definition}

The key property of structural equivalences in $\l$-calculi at a distance is that they strongly commute with the rewriting rules at a distance, which is formalized below via the notion of \emph{strong bisimulation}. The adjective \emph{strong} here means that they preserve the number of steps. Consequently, structural equivalence preserves strong normalization.

For proving the strong bisimulation property, we need the following lemma.

\begin{lemma}[Stability by substitution of $\cuteq$]
\label{l:cuteq-subs} 
\hfill
\begin{enumerate}
\item \label{p:cuteq-subs-one}
If $\stm \cuteq \stm'$ then $\cutsub\stmtwo\svar\stm \cuteq \cutsub\stmtwo\svar\stm'$.
\item \label{p:cuteq-subs-two}
If $\stmtwo \cuteq \stmtwo'$ then $\cutsub\stmtwo\svar\stm \cuteq \cutsub{\stmtwo'}\svar\stm$.
\end{enumerate}
\end{lemma}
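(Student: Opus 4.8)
The plan is to prove both points by induction on the derivation of the structural equivalence, separating the closure part of $\cuteq$ from its two root rules. The reflexivity, symmetry, and transitivity cases are immediate (for symmetry one reverses the equivalence obtained from the i.h., for transitivity one composes the two i.h.\ equivalences). For the remaining cases it is cleanest to first establish both statements for the \emph{value} substitution $\cutsub\sval\svar{-}$, and only then lift them to the general-term substitution $\cutsub\stmtwo\svar{-}$ by means of the defining equation $\cutsub{\lsctxp\sval}\svar\stm = \lsctxp{\cutsub\sval\svar\stm}$ (already used in the proof of the substitutivity lemma) together with the closure of $\cuteq$ under general contexts, since every left context $\lsctx$ is in particular a general context $\gsctx$.

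For \textbf{Point 1} with a value substitution I would induct on $\stm \cuteq \stm'$. In the homomorphic cases the substitution distributes over the top constructor and the i.h.\ applies directly; the only non-routine constructor is the subtraction clause that restructures when the substituted value is an abstraction (see the obstacle below). The two root cases rest on a single commutation fact: value substitution \emph{commutes with weak contexts and preserves them}, i.e.\ $\cutsub\sval\svar{\wsctxp\stmthree}$ is again a weak context with its hole filled by $\cutsub\sval\svar\stmthree$. Granting this, the cut root rule is handled by unfolding the substitution on the outer cut, $\cutsub\sval\svar{\cuta\stmtwo\svartwo{\wsctxp\stmthree}} = \cuta{\cutsub\sval\svar\stmtwo}\svartwo{\cutsub\sval\svar{\wsctxp\stmthree}}$, rewriting $\cutsub\sval\svar{\wsctxp\stmthree}$ through the weak context, and re-applying the same root rule to the substituted data; the subtraction root rule is symmetric. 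The side conditions $\svartwo\notin\fv\wsctx$ and $\fv\stmtwo\cap\dom\wsctx=\emptyset$ are preserved because the bound variables of the weak context can be $\alpha$-renamed away from $\fv\sval$, so the substitution neither captures them nor introduces new occurrences in the relevant positions. The general-term version then follows from $\cutsub{\lsctxp\sval}\svar\stm = \lsctxp{\cutsub\sval\svar\stm}$ by closure of $\cuteq$ under the context $\lsctx$.

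For \textbf{Point 2} the substituted term varies while $\stm$ is fixed. Writing $\stmtwo=\lsctxp\sval$ and $\stmtwo'=\lsctxtwop\svaltwo$ via the unique value/left-context splitting, and observing that $\cuteq$ is compatible with this splitting—so that $\lsctxp\sval\cuteq\lsctxtwop\svaltwo$ yields both $\sval\cuteq\svaltwo$ and $\lsctxp\stmfour\cuteq\lsctxtwop\stmfour$ for a fresh placeholder $\stmfour$—the goal $\lsctxp{\cutsub\sval\svar\stm}\cuteq\lsctxtwop{\cutsub\svaltwo\svar\stm}$ follows by transitivity from two facts: the spine equivalence applied with $\stmfour := \cutsub\sval\svar\stm$, and the \emph{monotonicity of value substitution in its value argument}, i.e.\ $\sval\cuteq\svaltwo$ implies $\cutsub\sval\svar\stm\cuteq\cutsub\svaltwo\svar\stm$, after which we wrap with $\lsctxtwo$ using closure of $\cuteq$ under contexts. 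This monotonicity is proved by induction on $\stm$: right occurrences of $\svar$ become $\sval$ resp.\ $\svaltwo$ and are propagated by the congruence of $\cuteq$, while the left-occurrence clause opens the values and is handled using that $\cuteq$ respects top constructors, so $\la\svartwo\stmthree\cuteq\la\svartwo{\stmthree'}$ forces $\stmthree\cuteq\stmthree'$.

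\textbf{Main obstacle.} The genuinely delicate point throughout is the restructuring clause $\cutsub{\la\svartwo\stmthree}\svar \suba\svar\stmtwo\svarthree\stm$, which inserts two fresh cuts around the data of the subtraction. One must verify that this reorganization keeps the hole in a weak position (so the commutation-with-weak-contexts fact still applies), that the side conditions of the root rules survive the creation of the new binders, and—for the splitting-compatibility used in Point 2—that permuting and reorganizing cuts leaves the innermost value and the left-context spine $\cuteq$-related. Keeping these binders fresh with respect to $\fv\sval$ via $\alpha$-renaming is exactly what makes the bookkeeping go through; every other case is routine distribution of the substitution over the constructors.
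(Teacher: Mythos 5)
Your proof is correct and, for Point 1, follows exactly the paper's route: the paper's proof is the one-line ``easy inductions on $\stm \cuteq \stm'$ and $\stm$'', and your induction on the derivation of $\stm\cuteq\stm'$, with the root cases resolved by the observation that value substitution maps weak contexts to weak contexts and preserves the side conditions up to $\alpha$-renaming, is precisely the content hidden behind that sentence; the lift from $\cutsub\sval\svar{-}$ to $\cutsub\stmtwo\svar{-}$ via $\cutsub{\lsctxp\sval}\svar\stm=\lsctxp{\cutsub\sval\svar\stm}$ and closure under contexts is also the intended mechanism. For Point 2 you reorganize the argument: instead of a direct induction on $\stm$, you factor it through an auxiliary claim that $\cuteq$ is compatible with the unique value/left-context splitting (equivalent values, $\cuteq$-related spines) plus monotonicity of value substitution in its value argument. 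This is a valid and arguably more informative decomposition, but be aware that the compatibility claim is not free: it needs its own induction on the derivation of $\stmtwo\cuteq\stmtwo'$, checking in particular root steps whose weak-context hole lies off the left spine, and it is exactly the kind of fact the paper's terse ``induction on $\stm$'' silently relies on as well (since the subtraction clause of the substitution branches on whether the value of $\stmtwo$ is a variable or an abstraction, which must be invariant under $\cuteq$). So the two routes buy the same thing at essentially the same cost; yours just names the invariant explicitly. Your identification of the restructuring clause $\cutsub{\la\svartwo\stmthree}\svar\,\suba\svar\stmtwo\svarthree\stm$ as the only delicate case is also on point.
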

\begin{proof}
Easy inductions on $\stm \cuteq \stm'$ and $\stm$. \qed
\end{proof}

\begin{toappendix}
\begin{proposition}[Strong bisimulation]
Structural equivalence $\cuteq$ is a strong bisimulation with respect to cut elimination $\stocut$, that is, if $\stm\cuteq\stmtwo$ and $\stm \stocut \stm'$ then there exists $\stmtwo'$ such that $\stmtwo \stocut \stmtwo'$ and $\stm' \cuteq\stmtwo'$. Diagrammatically:\label{prop:strong-bisim}
\begin{center}
\begin{tabular}{c\colspace c \colspace c}
\begin{tikzpicture}[ocenter]
		\node at (0,0)[align = center](source){\normalsize$\stm$};
		\node at (source.east)[right = 35pt](source-right){\normalsize $\stm'$};
		\node at (source.center)[below = 20pt](source-down){\normalsize $\stmtwo$};

		\node at \med{source.center}{source-down.center}(cuteq-left){\normalsize $\cuteq$};
		\draw[->](source) to node[above] {\scriptsize $\cutsym $} (source-right);
	\end{tikzpicture}
& implies $\exists\stmtwo'$ such that&
\begin{tikzpicture}[ocenter]
		\node at (0,0)[align = center](source){\normalsize$\stm$};
		\node at (source.east)[right = 35pt](source-right){\normalsize $\stm'$};
		\node at (source.center)[below = 20pt](source-down){\normalsize $\stmtwo$};
		
		\node at (source-right|-source-down)(target){\normalsize $\stmtwo'$};

		\node at \med{source.center}{source-down.center}(cuteq-left){\normalsize $\cuteq$};
		\node at \med{source-right.center}{target.center}(cuteq-right){\normalsize $\cuteq$};
		\draw[->](source) to node[above] {\scriptsize $\cutsym $} (source-right);
		\draw[->, dotted](source-down) to node[above] {\scriptsize $\cutsym $} (target);
	\end{tikzpicture}
\end{tabular}
\end{center}
 \end{proposition}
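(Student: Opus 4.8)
The plan is to reduce the statement to a local diagram for the generators of $\cuteq$ and then lift it along the closure operations. Since $\cuteq$ is the closure of root structural equivalence $\rcuteq$ under reflexivity, symmetry, transitivity, and general contexts $\gsctx$, and since the property of being a strong bisimulation is preserved by all of these (reflexivity is immediate, transitivity is a two-square diagram chase, symmetry is automatic because the relation on the bottom of the square is $\cuteq$ itself, and closure under $\gsctx$ lets any step taking place in the surrounding context be replayed verbatim on both sides), it suffices to treat the case in which $\stm$ and $\stmtwo$ are related by a single application of one of the two axioms of $\rcuteq$ sitting inside a general context. I would write the reduction $\stm \stocut \stm'$ as a root step $\cuta{\lsctxp\sval}\svar{\stmthree} \srtocut \lsctxp{\cutsub\sval\svar \stmthree}$ fired inside some general context, and then classify the position of this redex relative to the $\rcuteq$ pattern.

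The routine cases are those in which the redex and the displaced cut (or subtraction) are \emph{disjoint} or \emph{nested without interaction}: if the redex lies entirely inside the weak context $\wsctx$, inside the body $\stmthree$ of the displaced cut, or inside its left subterm, then the very same redex fires on the other side and the two reducts remain related by the same $\rcuteq$ axiom, hence are $\cuteq$. Two variants require a little more care, namely when the equivalence pattern ends up underneath a substitution created by the reduced cut. If it lies in the body of that cut, it survives in the unique reduct and the square is closed by the first point of \reflemma{cuteq-subs}; if it lies inside the substituted value $\sval$ (possibly under an abstraction, which $\gsctx$-closure allows), it may be duplicated or erased, and the square is closed by the second point of \reflemma{cuteq-subs}. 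In all these cases the two reducts are $\cuteq$ rather than literally $\rcuteq$-related.

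The genuinely interesting case, which I expect to be the crux, is the \emph{overlap}: the cut that is reduced is exactly the one being displaced by the equivalence. For the cut axiom the two sides are $\stm = \cuta{\lsctxp\sval}\svar{\wsctxp\stmthree}$ and $\stmtwo = \wsctxp{\cuta{\lsctxp\sval}\svar{\stmthree}}$, reducing respectively to $\lsctxp{\cutsub\sval\svar{\wsctxp\stmthree}}$ and $\wsctxp{\lsctxp{\cutsub\sval\svar \stmthree}}$. I would close this square in two moves. First, the side conditions of the axiom, namely $\svar\notin\fv\wsctx$ and $\fv\sval\cap\dom\wsctx = \emptyset$ (the latter inherited from $\fv{\lsctxp\sval}\cap\dom\wsctx=\emptyset$), let the substitution commute with the weak context, so that $\cutsub\sval\svar{\wsctxp\stmthree} = \wsctxp{\cutsub\sval\svar \stmthree}$ and the left reduct becomes $\lsctxp{\wsctxp{\cutsub\sval\svar \stmthree}}$. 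Second, a left context $\lsctx$ is a stack of cuts and subtractions and is in particular a weak context, so it can be transported past $\wsctx$ by iterating the root axiom one layer at a time, the non-capture conditions at each layer being granted by the variable convention and the original side conditions; this gives $\lsctxp{\wsctxp{\cutsub\sval\svar \stmthree}} \cuteq \wsctxp{\lsctxp{\cutsub\sval\svar \stmthree}}$, which is exactly the right reduct. The subtraction axiom is symmetric, the displaced subtraction being absorbed as an extra outermost layer of $\lsctx$.

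The main obstacle is precisely this overlap case: one must check that all non-capture side conditions survive both the commutation of the substitution through $\wsctx$ and the transport of $\lsctx$ past $\wsctx$, and one must make sure that exactly one $\stocut$ step is fired on each side, so that the bisimulation is genuinely \emph{strong} (step-count preserving). These verifications are bookkeeping-heavy but entirely standard for calculi with rewriting at a distance. Once the overlap case and the substitution-stability of $\cuteq$ are in place, all remaining diagrams are immediate, and the reduction on the $\stmtwo$ side is obtained symmetrically from the same case analysis.
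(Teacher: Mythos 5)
Your proposal is correct and matches the paper's proof, which is likewise a (long but routine) case analysis of all diagrams between a root $\rcuteq$ displacement and a $\stocut$ redex, invoking \reflemma{cuteq-subs}.1 when the equivalence sits in the body of the reduced cut and \reflemma{cuteq-subs}.2 when it sits inside the substituted value, exactly as you do. Your resolution of the overlap case---commuting the substitution through $\wsctx$ via the non-capture side conditions and then transporting $\lsctx$ past $\wsctx$ one layer at a time within $\cuteq$---is the right closing argument for that diagram and preserves the one-step count as required.
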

 \end{toappendix}
 
 \begin{proof}
 It is a very long but otherwise straightforward check of all possible diagrams (which is detailed in Appendix \ref{app:bisim}). The following two diagrams for $\stocut$ are where \reflemma{cuteq-subs} is used.
\begin{center}
\begin{tabular}{c\colspace\colspace c}
\begin{tikzpicture}[ocenter]
		\node at (0,0)[align = center](source){\normalsize$\cuta{\lsctxp\sval}\svar\stm$};
		\node at (source.east)[right = 35pt](source-right){\normalsize $\lsctxp{\cutsub\sval\svar\stm}$};	
		\node at (source.center)[below = 20pt](source-down){\normalsize $\cuta{\lsctxp\sval}\svar\stm'$};
		
		\node at (source-right|-source-down)(target){\normalsize $\lsctxp{\cutsub\sval\svar\stm'}$};

		\node at \med{source.center}{source-down.center}(cuteq-left){\normalsize $\cuteq$};
		\node at \med{source-right.center}{target.center}(cuteq-right){\normalsize $\cuteq$};
		\draw[|->](source) to node[above] {\scriptsize $\cutsym $} (source-right);
		\draw[|->, dotted](source-down) to node[above] {\scriptsize $\cutsym $} (target);
	\end{tikzpicture}
	&
	\begin{tikzpicture}[ocenter]
		\node at (0,0)[align = center](source){\normalsize$\cuta{\lsctxp\sval}\svar\stm$};
		\node at (source.east)[right = 35pt](source-right){\normalsize $\lsctxp{\cutsub\sval\svar\stm}$};
		\node at (source.center)[below = 20pt](source-down){\normalsize $\cuta{\lsctxp\svaltwo}\svar\stm$};
		
		\node at (source-right|-source-down)(target){\normalsize $\lsctxp{\cutsub\svaltwo\svar\stm}$};

		\node at \med{source.center}{source-down.center}(cuteq-left){\normalsize $\cuteq$};
		\node at \med{source-right.center}{target.center}(cuteq-right){\normalsize $\cuteq$};
		\draw[|->](source) to node[above] {\scriptsize $\cutsym $} (source-right);
		\draw[|->, dotted](source-down) to node[above] {\scriptsize $\cutsym $} (target);
	\end{tikzpicture}
\end{tabular}

\qed
\end{center}
 \end{proof}
 
From the strong bisimulation property, it immediately follows the preservation of SN, as well as a strong postponement property, showing that structural equivalence is harmless, in the sense that it is never needed to unblock a redex, and can be delayed to end of the computation.

\begin{corollary}
\label{coro:cuteq-bisim} 
\hfill
\begin{enumerate}
\item \label{p:cuteq-bisim-sn}
\emph{Structural stability of SN}: if $\stm \cuteq \stm'$ and $\stm\in\snsubst$ then $\stm'\in\snsubst$.
\item \emph{Postponement of $\cuteq$}: if $\stm \,(\cuteq\stocut\cuteq)^k \, \stmtwo$ then $\stm \stocut^k \cuteq \stmtwo$.
\end{enumerate}
\end{corollary}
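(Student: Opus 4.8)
The plan is to derive both items directly from the strong bisimulation property (\refprop{strong-bisim}), which is the only non-trivial ingredient: once $\cuteq$ is a strong bisimulation for $\stocut$, the two statements become routine rewriting facts. The single reusable consequence I would extract first is the one-step commutation $\cuteq\stocut \,\subseteq\, \stocut\cuteq$, namely: if $\stm\cuteq\stm'$ and $\stm'\stocut\stm''$ then there is $\stm'''$ with $\stm\stocut\stm'''$ and $\stm'''\cuteq\stm''$. This follows by instantiating \refprop{strong-bisim} on $\stm'\cuteq\stm$ (using symmetry of $\cuteq$) and the step $\stm'\stocut\stm''$, which yields some $\stm'''$ with $\stm\stocut\stm'''$ and $\stm''\cuteq\stm'''$, and then applying symmetry once more. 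Intuitively, this says that a $\cuteq$-step sitting to the left of a reduction can always be pushed to its right, without changing the number of $\stocut$-steps.

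For item 1 (\emph{structural stability of SN}), I would argue by well-founded induction on the inductive predicate $\stm\in\snsubst$, as given by the Definition of strong normalization. Assume $\stm\cuteq\stm'$ and $\stm\in\snsubst$; to conclude $\stm'\in\snsubst$ it suffices to show that every one-step reduct of $\stm'$ lies in $\snsubst$. So take any $\stm'\stocut\stm''$. Since $\stm'\cuteq\stm$ by symmetry, \refprop{strong-bisim} provides $\stm'''$ with $\stm\stocut\stm'''$ and $\stm''\cuteq\stm'''$. Now $\stm'''$ is a one-step reduct of the SN term $\stm$, hence the induction hypothesis applies to it, and applied to the pair $\stm'''\cuteq\stm''$ (the symmetric of $\stm''\cuteq\stm'''$) it gives $\stm''\in\snsubst$. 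As $\stm''$ was an arbitrary reduct of $\stm'$, we obtain $\stm'\in\snsubst$. The only point to get right is that the induction is on $\stm$ being SN and is applied to its strictly smaller reduct $\stm'''$, while the $\cuteq$-partner simply travels along.

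For item 2 (\emph{postponement of $\cuteq$}), I would proceed by induction on $k$, using the commutation $\cuteq\stocut\subseteq\stocut\cuteq$ together with transitivity of $\cuteq$. The base case $k=0$ is reflexivity of $\cuteq$. For the step, split $\stm\,(\cuteq\stocut\cuteq)^{k+1}\,\stmtwo$ as $\stm\,(\cuteq\stocut\cuteq)^{k}\,\stmthree\,(\cuteq\stocut\cuteq)\,\stmtwo$. The induction hypothesis rewrites the prefix as $\stm\stocut^{k}\stm_1\cuteq\stmthree$ for some $\stm_1$, while the last block reads $\stmthree\cuteq \stm_2\stocut \stm_3\cuteq\stmtwo$. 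Transitivity gives $\stm_1\cuteq \stm_2$; applying the commutation to $\stm_1\cuteq \stm_2\stocut \stm_3$ yields $\stm_1\stocut \stm_4\cuteq \stm_3$ for some $\stm_4$; and transitivity again gives $\stm_4\cuteq\stmtwo$. Concatenating, $\stm\stocut^{k}\stm_1\stocut \stm_4\cuteq\stmtwo$, i.e. $\stm\stocut^{k+1}\cuteq\stmtwo$, as required.

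Finally, I expect the genuine difficulty to lie entirely upstream, in \refprop{strong-bisim} itself, whose proof is the long case analysis over all overlaps between a $\stocut$-redex and a $\cuteq$-equation (and which relies on \reflemma{cuteq-subs}). Granting that proposition, the corollary carries no real obstacle: its two items are the textbook \emph{bisimulation preserves SN} and \emph{postponement of an equivalence across a reduction} arguments, the only delicate points being the correct use of symmetry of $\cuteq$ to orient the bisimulation and the choice of the right induction principle (on the SN predicate) in item 1.
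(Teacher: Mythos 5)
Your proposal is correct and follows essentially the same route as the paper: both items are derived as routine consequences of the strong bisimulation property (\refprop{strong-bisim}), with item 1 by well-founded induction on the predicate $\stm\in\snsubst$ and item 2 by induction on $k$ using the one-step commutation $\cuteq\stocut\,\subseteq\,\stocut\cuteq$ and transitivity of $\cuteq$. The paper itself presents the corollary as an immediate consequence of the bisimulation, so there is nothing to add.
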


\withproofs{
\begin{proof}
\hfill
\begin{enumerate}
	\item By induction on $\tm\in\snsubst$. One simply uses the fact that $\cuteq$ is a strong bisimulation (\refprop{strong-bisim}) and the \ih
	\item By induction on $k$, using the strong bisimulation property (\refprop{strong-bisim}).\qed
\end{enumerate}
\end{proof}}

The reader might wonder why strong equivalence is not allowed to move left rules inside values. The reason is that it would break the strong bisimulation property. Let $\equiv_{\symfont{ex}}$ be the extended equivalence which can move left rules inside values. Now, consider the following diagram, where in general the two terms on the right are different and not structural equivalent (typically if there is more than one occurrence of $\svarthree$ in $\stm$):
\begin{center}
\begin{tikzpicture}[ocenter]
		\node at (0,0)[align = center](source){\normalsize$\cuta{\cuta\stmtwo\svar\la\svartwo\stmthree}\svarthree\stm$};
		\node at (source.east)[right = 35pt](source-right){\normalsize $\cuta\stmtwo\svar\cutsub{\la\svartwo\stmthree}\svarthree\stm$};
		\node at (source.center)[below = 20pt](source-down){\normalsize $\cuta{\la\svartwo\cuta\stmtwo\svar\stmthree}\svarthree\stm$};
		
		\node at (source-right|-source-down)(target){\normalsize $\cutsub{\la\svartwo\cuta\stmtwo\svar\stmthree}\svarthree\stm$};

		\node at \med{source.center}{source-down.center}(cuteq-left){\normalsize $\cuteq$};
		\node at \med{source-right.center}{target.center}(cuteq-right){\normalsize $\not\cuteq$};
		\draw[|->](source) to node[above] {\scriptsize $\cutsym $} (source-right);
		\draw[|->, dotted](source-down) to node[above] {\scriptsize $\cutsym $} (target);
	\end{tikzpicture}
\end{center}

\section{Typed Strong Normalization}
\label{sect:SN}
Here we prove strong normalization (SN) of MIL-typed vanilla $\l$-terms using the reducibility method. We adopt Girard's \emph{bi-orthogonal} technique for the method 
\cite{DBLP:journals/tcs/Girard87}, following Accattoli's presentation tailored for the intuitionistic sequent calculus (of linear logic) and reduction at a distance \cite{DBLP:journals/lmcs/Accattoli23}. 

The whole development is more complex than for the ordinary $\l$-calculus, roughly because, having cuts and subtractions, the vanilla $\l$-calculus is more akin to $\l$-calculi with explicit substitutions, which are known to require extra technicalities.
 
 \paragraph{Key Properties.} The reducibility method requires a number of definitions, detailed in the next paragraphs. Because of the many technicalities, it is easy to loose sight of what are the crucial concepts at work in the proof. From a high-level perspective, the proof is based only on three key properties of $\stocut$:
\begin{enumerate}
	\item \emph{Extension}, in the neutral case (\reflemma{extension}) and the disjoint left context case (\reflemma{disj-left-ctx-extension});
	\item
	\emph{Root cut expansion} (\refprop{root-sn-expansion});
		\item \emph{Structural stability of SN}: if $\stm\in\sn\cutsym$ and $\stm\cuteq\stmtwo$ then $\stmtwo\in\sn\cutsym$ 
(\refcorop{cuteq-bisim}{sn}).
\end{enumerate}
This section is also written as to explain some aspects of the reducibility method. Therefore, it is somewhat longer than strictly needed, but hopefully considerably more readable.

\paragraph{Elimination Contexts and Duality.} 
The bi-orthogonal technique we follow is based on a notion of duality defined via \emph{elimination contexts}, that are contexts of the form $\lctxp{\cuta{\ctxhole}\svar\stm}$, noted $\elctx$. Types can be extended to contexts by considering $\ctxhole$ as a free variable and typing it via an axiom, as follows:
\begin{center}
\AxiomC{}
	\RightLabel{$\ax_{\text{ctx}}$}
	\UnaryInfC{$  \typctx, \ctxhole\hastype\form \vdash \ctxhole\hastype \form$}	
	\DisplayProof
\end{center}
Let us set some notations:
\begin{itemize}
\item $\lltermsform$ for the set of terms $\stm$ of type $\form$, that is, such that $\multiForm\vdash \stm\hastype\form$. Note that $\vars\subseteq  \lltermsform$.  
 
\item $\elctxsform \defeq \set{\elctx \ 
|\ \multiForm, \ctxhole\hastype \form \vdash \elctx\hastype \formtwo}$ the set of typed elimination contexts with hole 
of type $\form$, and say that $\elctx$ has \emph{co-type} $\form$. 
\item \emph{Variable contexts} are defined as the elements of $\elvars \defeq \set{\elctx \ |\  \elctx = \cuta\ctxhole\svar\svar}$. Note that $\elvars\subseteq\elctxsform$.
\item $\elctx \in \sn\cutsym$ if $\elctx = \lctxp{\cuta\ctxhole\svar\stm}$ and $\lctxp{\cutsub\svartwo\svar\stm}\in\sn\cutsym$ 
for every variable $\svartwo$ which is not captured by $\lctx$.
\end{itemize}
\begin{remark}
\label{rem:sn-renaming}
Checking that $\elctx=\lctxp{\cuta\ctxhole\svar\stm}$ is in $\sn\cutsym$ amounts to prove that \\
$\lctxp{\cutsub\svartwo\svar\stm} = \cutsub\svartwo\svar\lctxp{\stm}\in\sn\cutsym$ for every appropriate $\svartwo$. By the stability of SN by renamings (\reflemma{deformations}), it is enough to prove that $\lctxp{\stm}\in\sn\cutsym$.
\end{remark}

\begin{definition}[Duality]
Given a set $\settone \subseteq \lltermsform$ of terms of type $\form$, the \emph{dual} set 
$\settone^\bot\subseteq\elctxsform$ contains the elimination contexts $\elctx$ of co-type 
$\form$ such that  $\elctxfp{\stm}$ is proper and in $\sn\cutsym$ for every 
$\stm\in\settone$. The dual of a set of elimination contexts $\settone \subseteq\elctxsform$ of co-type $\form$ is a set of terms $\settone^\bot$ defined symmetrically.\label{def:duality}
\end{definition}
Note the 
use of the capture-avoiding plugging symbol $\ctxholef$ in the definition of duality: the plugging in contexts at work in duality does \emph{not} capture the variables of the plugged 
term. Te use of capture-avoiding plugging is both crucial and standard for the reducibility method. 

The following properties of duality are standard; for a proof see, for instance, Riba's dissection of reducibility \cite{riba:hal-00779623}.

\begin{lemma}[Basic properties of duality]
Let $\settone \subseteq \lltermsform$ or $\settone\subseteq\elctxsform$. \label{l:basic-prop-duality}
 \begin{enumerate}
 \item \emph{Closure}: $\settone\subseteq\settone^{\bot\bot}$;
 \item \emph{Bi-orthogonal}: $\settone^{\bot\bot\bot}=\settone^{\bot}$.
 \end{enumerate}
 \end{lemma}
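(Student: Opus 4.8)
The plan is to reduce both statements to two elementary, purely order-theoretic facts about the orthogonality operator $(\cdot)^\bot$, and to treat the two cases of the hypothesis ($\settone\subseteq\lltermsform$ and $\settone\subseteq\elctxsform$) uniformly. First I would make explicit the single binary relation underlying \refdef{duality}: write $\stm\perp\elctx$ when $\elctxfp{\stm}$ is proper and in $\sn\cutsym$. With this notation, for a set of terms $\settone$ one has $\settone^\bot=\set{\elctx \mid \stm\perp\elctx \text{ for all }\stm\in\settone}$, and symmetrically for a set of elimination contexts $\settone^\bot=\set{\stm \mid \stm\perp\elctx \text{ for all }\elctx\in\settone}$. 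Thus \emph{both} forms of $(\cdot)^\bot$ are instances of the same construction relative to the \emph{one} relation $\perp$, and the (co-)type bookkeeping of \refdef{duality} is automatic, so I can prove everything once.

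The two facts I would establish are: (i) \emph{antitonicity}: if $A\subseteq B$ then $B^\bot\subseteq A^\bot$; and (ii) the closure statement itself, $\settone\subseteq\settone^{\bot\bot}$ (Point~1). Antitonicity is immediate: an object orthogonal to \emph{every} element of the larger set $B$ is in particular orthogonal to every element of the smaller set $A$, hence lies in $A^\bot$. Point~1 is equally direct: given $\stm\in\settone$, to show $\stm\in\settone^{\bot\bot}$ I must check $\stm\perp\elctx$ for every $\elctx\in\settone^\bot$; but $\elctx\in\settone^\bot$ says precisely that $\elctx$ is orthogonal to all elements of $\settone$, and $\stm$ is one of them. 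No reduction-theoretic content is needed — only the shape of the definitions of $\perp$ and $(\cdot)^\bot$.

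For Point~2, the bi-orthogonal identity $\settone^{\bot\bot\bot}=\settone^\bot$, I would derive both inclusions from (i) and (ii) alone, with no further computation. The inclusion $\settone^\bot\subseteq\settone^{\bot\bot\bot}$ is just Point~1 applied to the set $\settone^\bot$ in place of $\settone$. For the reverse inclusion $\settone^{\bot\bot\bot}\subseteq\settone^\bot$, I apply Point~1 to obtain $\settone\subseteq\settone^{\bot\bot}$ and then feed this into antitonicity (i), yielding $(\settone^{\bot\bot})^\bot\subseteq\settone^\bot$, which is exactly $\settone^{\bot\bot\bot}\subseteq\settone^\bot$. Combining the two inclusions gives the equality.

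I do not expect a genuine obstacle: the whole argument is the standard Galois-connection reasoning for an orthogonality relation, and all the delicate rewriting content — that plugging $\elctxfp{\stm}$ is capture-avoiding, that properness holds, and that $\sn\cutsym$ is the strong-normalization predicate — is already sealed inside the relation $\perp$ and never reopened. The only point deserving a little care is the uniform handling of the two typings of $\settone$: I must check that $\perp$ is genuinely symmetric as a relation between terms and elimination contexts, so that the very same pair $(\stm,\elctx)$ is used whether one starts from a set of terms or from a set of contexts, ensuring the single argument covers both clauses of the statement.
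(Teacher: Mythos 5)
Your argument is correct and is exactly the standard Galois-connection proof that the paper itself does not spell out, deferring instead to the literature (Riba's dissection of reducibility): closure is immediate from the definition of $\perp$, and bi-orthogonality follows from closure plus antitonicity of $(\cdot)^\bot$. Your uniform treatment of the two cases via the single symmetric relation $\stm\perp\elctx$ is sound, since both duals in \refdef{duality} are defined from that one relation.
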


Next, we show some basic properties of duality, in particular with respect to variables and context variables. We recall that $\vars$ denotes the set of variables of vanilla $\l$-terms.
 \begin{lemma}[Duality and SN]
Let $\stm \in \lltermsform$ and $\elctx \in \elctxsform$. \label{l:sn-to-subterms}
\begin{enumerate}
\item \label{p:sn-to-subterms-one} 
If $\elctxfp\stm\in\sn\cutsym$ then $\stm \in \sn\cutsym$ and $\elctx\in\sn\cutsym$.

\item \label{p:sn-to-subterms-two} 
If $\stm \in \sn\cutsym$ and $\elctx \in \elvars$ then $\elctxfp\stm\in \sn\cutsym$.

\item \label{p:sn-to-subterms-three} 
If $\elctx \in \sn\cutsym$ and $\svartwo \in \vars$ then $\elctxfp\svartwo\in \sn\cutsym$.
\end{enumerate}
\end{lemma}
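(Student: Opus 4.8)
The plan is to dispatch the three items independently, each from one of the three key rewriting properties recalled at the start of the section, glued together with elementary subterm extraction and, in two cases, with structural equivalence. Throughout, write an elimination context as $\elctx = \lctxp{\cuta\ctxhole\svar\stmtwo}$, so that $\elctxfp{\stm} = \lctxp{\cuta\stm\svar\stmtwo}$ and $\elctxfp{\svartwo} = \lctxp{\cuta\svartwo\svar\stmtwo}$; for Point~\ref{p:sn-to-subterms-two} the context specialises to the variable context $\cuta\ctxhole\svar\svar$. Capture-avoiding plugging guarantees $\fv\stm\cap\dom\lctx=\emptyset$ and $\svartwo\notin\dom\lctx$, and the Barendregt convention gives $\svar\notin\fv\lctx$; these are exactly the side conditions needed below, and I also use that every left context is in particular a weak context, so the root rule $\rcuteq$ of structural equivalence applies with $\lctx$ as the weak context. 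Point~\ref{p:sn-to-subterms-two} is then immediate from root cut expansion (\refprop{root-sn-expansion}): taking $\stm$ as the cut argument and $\svar$ as the body, the first hypothesis $\stm\in\sn\cutsym$ is assumed, while the second, $\cutsub{\stm}\svar\svar\in\sn\cutsym$, holds because $\cutsub{\stm}\svar\svar=\stm$, as splitting $\stm=\lsctxp\sval$ gives $\cutsub{\lsctxp\sval}\svar\svar=\lsctxp{\cutsub\sval\svar\svar}=\lsctxp\sval=\stm$.

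For Point~\ref{p:sn-to-subterms-one} I would prove the two conclusions separately. That $\stm\in\sn\cutsym$ follows by subterm extraction: $\stm$ is the left immediate subterm of the cut inside $\elctxfp{\stm}=\lctxp{\cuta\stm\svar\stmtwo}$, a general-context position, so any infinite $\stocut$-sequence out of $\stm$ lifts to one out of $\elctxfp{\stm}$, contradicting $\elctxfp{\stm}\in\sn\cutsym$. For $\elctx\in\sn\cutsym$ I would first commute the cut past $\lctx$ via structural equivalence, obtaining $\lctxp{\cuta\stm\svar\stmtwo}\cuteq\cuta{\stm}\svar{\lctxp\stmtwo}$, and then apply structural stability of SN (\refcorop{cuteq-bisim}{sn}) to get $\cuta{\stm}\svar{\lctxp\stmtwo}\in\sn\cutsym$. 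Now $\lctxp\stmtwo$ is the right immediate subterm of this cut, hence $\lctxp\stmtwo\in\sn\cutsym$ by the same subterm argument; by Remark~\ref{rem:sn-renaming}, which reduces $\elctx\in\sn\cutsym$ (through the renaming stability of \reflemma{deformations}) to exactly $\lctxp\stmtwo\in\sn\cutsym$, we conclude.

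Point~\ref{p:sn-to-subterms-three} runs the same displacement followed by root cut expansion rather than subterm extraction. Structural equivalence gives $\elctxfp{\svartwo}=\lctxp{\cuta\svartwo\svar\stmtwo}\cuteq\cuta{\svartwo}\svar{\lctxp\stmtwo}$, and to this cut I apply root cut expansion (\refprop{root-sn-expansion}): its argument $\svartwo$ is a variable, hence $\sn\cutsym$, and its root reduct is $\cutsub{\svartwo}\svar{\lctxp\stmtwo}=\lctxp{\cutsub\svartwo\svar\stmtwo}$, which lies in $\sn\cutsym$ precisely by the defining condition of $\elctx\in\sn\cutsym$ instantiated at the uncaptured variable $\svartwo$. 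Thus $\cuta{\svartwo}\svar{\lctxp\stmtwo}\in\sn\cutsym$, and structural stability of SN transports this back across $\cuteq$ to $\elctxfp{\svartwo}\in\sn\cutsym$.

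The only delicate point, and the reason structural equivalence enters at all, is the mismatch between the definition of $\elctx\in\sn\cutsym$ — which is phrased on the already-contracted term $\lctxp{\cutsub\svartwo\svar\stmtwo}$ — and the terms $\elctxfp{\stm}$, $\elctxfp{\svartwo}$, in which the cut is still present and, worse, buried under $\lctx$. Plain subterm extraction cannot reach $\lctxp\stmtwo$ because it is not literally a subterm of $\elctxfp{\stm}$; commuting the cut out of $\lctx$ with $\cuteq$ is exactly what turns $\lctxp\stmtwo$ into a subterm and simultaneously exposes the root cut to root cut expansion. I expect all remaining variable-side conditions to be discharged uniformly by capture-avoidance together with Remark~\ref{rem:sn-renaming}, so no additional renaming bookkeeping should be needed.
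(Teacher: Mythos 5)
Your proof is correct, but it reaches the three points by a genuinely different route than the paper. The paper proves all three items by direct induction on the strong normalization hypothesis, using only the elementary observation that every step of $\stm$ (resp. of $\lctxp{\stmtwo}$, via \refrem{sn-renaming}) can be mimicked on the composite term, and for Points 2 and 3 noting that the one extra redex---the root cut---reduces to something SN by hypothesis. You instead discharge Points 2 and 3 (and the $\elctx\in\sn\cutsym$ half of Point 1) by invoking the heavier machinery established in the preliminaries: root cut expansion (\refprop{root-sn-expansion}) for the cut at the root, and structural equivalence $\cuteq$ plus structural stability of SN (\refcorop{cuteq-bisim}{sn}) to hoist the cut out of $\lctx$ so that $\lctxp{\stmtwo}$ becomes a literal sub-term and the root cut becomes exposed. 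There is no circularity, since both tools are proved before and independently of this lemma, and your side conditions (capture-avoiding plugging giving $\fv\stm\cap\dom\lctx=\emptyset$, $\alpha$-convention giving $\svar\notin\fv\lctx$, and the identity $\cutsub\stm\svar\svar=\stm$ via the splitting $\stm=\lsctxp\sval$) all check out. What your version buys is modularity: the lemma itself becomes induction-free, each item a one-line application of a named property. What the paper's version buys is a lighter dependency graph: this lemma sits very early in the reducibility development and the direct mimicking argument keeps it self-contained, not relying on root cut expansion or on the (long) strong bisimulation proof. Either proof is acceptable here.
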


\begin{proof}
\hfill
\begin{enumerate}
\item By induction on $\elctxfp\stm\in\sn\cutsym$. Let $\stm = \lctxp\sval$ and $\elctx = 
\lctxtwop{\cuta\ctxhole\svar\stmtwo}$, so that $\elctxfp\stm = \lctxtwop{\cuta\stm\svar\stmtwo}$. The proof of the 
statement is based on the obvious fact that every step from $\stm$ or $\lctxtwop\stmtwo$ (we rely on \refrem{sn-renaming}) 
can be mimicked on $\lctxtwop{\cuta\stm\svar\stmtwo}$, so that one can then apply the \ih
\item If $\elctx \in \elvars$ then $\elctx = \cuta\ctxhole\svar\svar$. Let $\stm = \lctxp\sval$. We  have to 
prove that $\elctxfp\stm= \cuta\stm\svar\svar \in \sn\cutsym$. By induction on $\stm \in \sn\cutsym$ we show that all the reducts of $\elctxfp\stm$ are in $\sn\cutsym$. If $\elctxfp\stm = \cuta{\lctxp\sval}\svar\svar \stocut \lctxp\sval = \stm$ by reducing the cut on $\svar$, 
then the reduct is $\stm$, which is in $\sn\cutsym$ by hypothesis. Otherwise, $\elctxfp\stm$ makes a step in $\stm$. The same step can be done on $\stm$, and thus by \ih the reduct is in 
$\sn\cutsym$.
\item Let $\elctx=\lctxp{\cuta\ctxhole\svar\stm}$, so that $\elctxfp\svartwo = \lctxp{\cuta\svartwo\svar\stm}$. If 
$\elctxfp\svartwo \stocut \lctxp{\cutsub\svartwo\svar\stm}$ then the reduct is in $\sn\cutsym$ by the hypothesis on $\elctx$. 
Otherwise, $\elctxfp\svartwo$ makes a step in $\stm$, or in $\lctx$, or involving both $\lctx$ and $\stm$. The same step 
can be done on $\lctxp\stm$ (we rely on \refrem{sn-renaming}), and thus by \ih the reduct is in $\sn\cutsym$.\qed
\end{enumerate}
\end{proof}

\paragraph{Generators, Candidates, and Formulas.} The definition of reducibility candidates comes together with a notion 
of generator, justified by \refprop{generators-give-red-cands} below (whose proof requires the previous lemma).

 \begin{definition}[Generators and candidates]
A \emph{generator} of type $\form$ (resp. co-type $\form$) is a sub-set $\settone\subseteq\lltermsform$ (resp. 
$\settone\subseteq\elctxsform$) such that: 
\begin{enumerate}
\item \emph{Non-emptiness}: $\settone \neq \emptyset$, and 
\item \emph{Strong normalization}: $\settone\subseteq \sn\cutsym$.
\end{enumerate}
A generator $\settone$ of type $\form$ (resp. co-type $\form$) is a \emph{candidate} if: 
\begin{enumerate}
\item \emph{Bi-orthogonal}: $\settone=\settone^{\bot\bot}$.
\end{enumerate}
\end{definition}

The rationale behind the requirements for generators and candidates are captured by the following lemma.
\begin{lemma}
\label{l:candidates-rationale} 
Let $\settone \subseteq \lltermsform$ or $\settone\subseteq\elctxsform$.
\begin{enumerate}
\item \label{p:candidates-rationale-one}
\emph{Non-emptiness dualizes as SN}: if $\settone \neq\emptyset$ then $\settone^\bot\subseteq\sn\cutsym$.
\item \label{p:candidates-rationale-two}
\emph{SN dualizes as non-emptiness}: if $\settone \subseteq\sn\cutsym$ then $\settone^\bot \neq \emptyset$.
\item \label{p:candidates-rationale-three}
\emph{Bi-orthogonal implies closure under cut elimination}: let $\settone=\settone^{\bot\bot} \subseteq \lltermsform$ and $\stm\in \settone$. If $\stm \stocut\stmtwo$ then $\stmtwo\in \settone$.
\end{enumerate}
\end{lemma}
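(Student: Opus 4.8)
The plan is to read all three statements off the \emph{Duality and SN} lemma (\reflemma{sn-to-subterms}), using the symmetry of duality between terms and elimination contexts so that the two cases $\settone\subseteq\lltermsform$ and $\settone\subseteq\elctxsform$ are handled by the same argument with the roles of terms and contexts swapped. For Point~1, assume $\settone\neq\emptyset$. If $\settone\subseteq\lltermsform$, fix a witness $\stmthree\in\settone$; then for every $\elctx\in\settone^\bot$ the definition of duality gives $\elctxfp{\stmthree}\in\sn\cutsym$, and \reflemma{sn-to-subterms}.1 yields $\elctx\in\sn\cutsym$, so $\settone^\bot\subseteq\sn\cutsym$. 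If instead $\settone\subseteq\elctxsform$, fix a witness $\elctx\in\settone$; then for every $\stm\in\settone^\bot$ we have $\elctxfp{\stm}\in\sn\cutsym$, and the same point of \reflemma{sn-to-subterms} gives $\stm\in\sn\cutsym$.

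For Point~2, the idea is to exhibit an explicit inhabitant of $\settone^\bot$. If $\settone\subseteq\lltermsform$ is contained in $\sn\cutsym$, take the variable context $\elctx\defeq\cuta\ctxhole\svar\svar\in\elvars\subseteq\elctxsform$ with hole of type $\form$: for each $\stm\in\settone$ we have $\stm\in\sn\cutsym$ by hypothesis, hence $\elctxfp{\stm}\in\sn\cutsym$ by \reflemma{sn-to-subterms}.2 (and the plugging is proper), so $\elctx\in\settone^\bot$ and $\settone^\bot\neq\emptyset$. Dually, if $\settone\subseteq\elctxsform$ is contained in $\sn\cutsym$, take any variable $\svartwo\in\vars\subseteq\lltermsform$: for every $\elctx\in\settone$ we have $\elctx\in\sn\cutsym$, hence $\elctxfp{\svartwo}\in\sn\cutsym$ by \reflemma{sn-to-subterms}.3, so $\svartwo\in\settone^\bot$.

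For Point~3, recall the hypothesis $\settone=\settone^{\bot\bot}$, so it suffices to show $\stmtwo\in\settone^{\bot\bot}$, \ie that $\elctxfp{\stmtwo}$ is proper and in $\sn\cutsym$ for every $\elctx\in\settone^\bot$. Fix such an $\elctx$. Since $\stm\in\settone=\settone^{\bot\bot}$, we have $\elctxfp{\stm}\in\sn\cutsym$. Every elimination context has the shape $\lctxp{\cuta\ctxhole\svar\stmthree}$ and is therefore a general context $\gsctx$; working up to $\alpha$-renaming of the binders of $\elctx$ so that they avoid $\fv\stm$ (and hence $\fv\stmtwo$, as reduction introduces no free variables), capture-avoiding plugging coincides with ordinary plugging and the contextual closure of $\stocut$ lifts the step $\stm\stocut\stmtwo$ to $\elctxfp{\stm}\stocut\elctxfp{\stmtwo}$. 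Since $\sn\cutsym$ is closed under $\stocut$ we get $\elctxfp{\stmtwo}\in\sn\cutsym$, and it is proper by subject reduction (\refprop{subject-reduction}). As this holds for all $\elctx\in\settone^\bot$, we conclude $\stmtwo\in\settone^{\bot\bot}=\settone$.

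The genuine work is already packaged inside \reflemma{sn-to-subterms}, so no real obstacle remains: the proof is a matter of careful bookkeeping. The two delicate spots are (i) making sure the test objects land in the right dual set, namely that the variable context in Point~2 has the correct co-type $\form$ and the variable in the context case has type $\form$ (both immediate since $\elvars\subseteq\elctxsform$ and $\vars\subseteq\lltermsform$), and (ii) justifying in Point~3 that the reduction $\stm\stocut\stmtwo$ survives the capture-avoiding plugging of duality, which is harmless precisely because reduction creates no new free variables, so a single $\alpha$-representative of $\elctx$ works for both $\stm$ and $\stmtwo$.
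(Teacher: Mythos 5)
Your proposal is correct and follows essentially the same route as the paper: Point~1 and Point~2 are read off Lemma~\ref{l:sn-to-subterms} exactly as in the paper's proof (fixing a witness for Point~1, exhibiting a variable context resp.\ a variable for Point~2), and Point~3 shows $\stmtwo\in\settone^{\bot\bot}$ by lifting the step to $\elctxfp{\stm}\stocut\elctxfp{\stmtwo}$. The only difference is that you spell out the $\alpha$-renaming justification for why the step survives capture-avoiding plugging, which the paper leaves implicit.
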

\begin{proof}
\hfill
\begin{enumerate}
\item Let $\settone \subseteq \lltermsform$, $\elctx = \lctxtwop{\cuta\ctxhole\svar\stmtwo} \in 
\settone^\bot$ and $\stm\in\settone \neq\emptyset$. By duality, $\elctxfp\stm \in\sn\cutsym$. By 
\reflemmap{sn-to-subterms}{one}, $\elctx\in\sn\cutsym$. 

Let $\settone\subseteq\elctxsform$, $\stm \in 
\settone^\bot$, and $\elctx = \lctxp{\cuta\ctxhole\svar\stmtwo} \in\settone\neq\emptyset$. By duality, $\elctxfp\stm 
\in\sn\cutsym$. By \reflemmap{sn-to-subterms}{one}, $\stm\in\sn\cutsym$. 

\item Let $\settone \subseteq \lltermsform$, $\stm\in \settone$, and $\cuta\ctxhole\svar\svar\in 
\elvars$. By $\settone \subseteq \sn\cutsym$ and \reflemmap{sn-to-subterms}{two} we obtain 
$\cuta\stm\svar\svar\in\sn\cutsym$, that is, $\cuta\ctxhole\svar\svar \in \settone^{\bot}$.

Let $\settone\subseteq\elctxsform$, $\elctx \in\settone$, and $\svar\in \vars$. By 
$\settone \subseteq \sn\cutsym$ and \reflemmap{sn-to-subterms}{three} we obtain $\elctxfp\svar\in\sn\cutsym$, that is, $\svar 
\in \settone^{\bot}$.

\item
Let $\elctx\in\settone^\bot$. By duality, $\elctxfp\stm\in\sn\cutsym$. Since $\elctxfp\stm \stocut \elctxfp\stmtwo$, we obtain $\elctxfp\stmtwo\in\sn\cutsym$, that is, $\stmtwo\in \settone^{\bot\bot}$. Since $\settone$ is a candidate, $\stmtwo\in \settone$.
\qed
\end{enumerate}
\end{proof}

Note that the proof of \reflemmap{candidates-rationale}{two} proves slightly more than non-emptiness, as it tells us in particular that $\settone^\bot$ contains the variables. It immediately follows that any candidate contains the variables.

\begin{lemma}[Candidates contains variables]
Let $\settone \subseteq \lltermsform$ (resp. $\settone\subseteq\elctxsform$) be a candidate. Then $\vars\subseteq \settone$ (resp. $\elvars\subseteq \settone$).\label{l:candidates-contain-variables}
\end{lemma}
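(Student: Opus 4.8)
The plan is to exploit the strengthened form of \reflemmap{candidates-rationale}{two} highlighted in the remark just above the statement: that lemma's proof shows not merely that a dual is non-empty, but that the dual of an SN \emph{context}-set contains \emph{all} the variables $\vars$, and, symmetrically, the dual of an SN \emph{term}-set contains all the variable contexts $\elvars$. Combined with the defining bi-orthogonality $\settone=\settone^{\bot\bot}$ of a candidate, this gives the result in one step, by applying the strengthened point to $\settone^\bot$ rather than to $\settone$.

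Concretely, suppose first that $\settone \subseteq \lltermsform$ is a candidate, so that $\settone = \settone^{\bot\bot} = (\settone^\bot)^\bot$, and note that $\settone^\bot \subseteq \elctxsform$ is a set of elimination contexts. First I would record that $\settone^\bot \subseteq \sn\cutsym$: since $\settone$ is a generator it is non-empty, so this is exactly \reflemmap{candidates-rationale}{one}. Then I would apply the strengthened \reflemmap{candidates-rationale}{two} to the context-set $\settone^\bot$, which yields that its dual contains every variable, i.e. $\vars \subseteq (\settone^\bot)^\bot = \settone$, as desired.

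The symmetric case $\settone \subseteq \elctxsform$ follows the same template with terms and contexts interchanged: here $\settone^\bot \subseteq \lltermsform$ is a term-set, it lies in $\sn\cutsym$ by non-emptiness of $\settone$ and \reflemmap{candidates-rationale}{one}, and applying the strengthened \reflemmap{candidates-rationale}{two} to it gives $\elvars \subseteq (\settone^\bot)^\bot = \settone$. In both cases the only bookkeeping point is that the double dual $(\settone^\bot)^\bot$ lands back in the same universe (terms, resp.\ contexts) as $\settone$, which is immediate from the definition of duality (\ref{def:duality}).

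I do not expect any real obstacle: the substance has already been isolated in the proof of \reflemma{candidates-rationale}, which in turn rests on \reflemmap{sn-to-subterms}{two} and \reflemmap{sn-to-subterms}{three}, namely that closing an SN term with a variable context (resp.\ plugging a variable into an SN elimination context) preserves $\sn\cutsym$. The whole argument is therefore a short dualization, and the proof amounts to spelling out the two symmetric instantiations above.
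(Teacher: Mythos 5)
Your proof is correct and is exactly the argument the paper intends: the paper gives no separate proof but derives the lemma from the remark that the proof of \reflemmap{candidates-rationale}{two} actually shows the dual of an SN set contains all of $\vars$ (resp.\ $\elvars$), applied to $\settone^\bot$ (which is SN by \reflemmap{candidates-rationale}{one} and non-emptiness) together with bi-orthogonality $\settone=\settone^{\bot\bot}$. You have simply spelled out the two symmetric instantiations, so there is nothing to add.
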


The provided rationale for the properties of generators allows us to prove the property that justifies their name.
\begin{toappendix}
 \begin{proposition}
 \label{prop:generators-give-red-cands}
If $\settone$ is a generator then $\settone^\bot$ is a candidate. 
 \end{proposition}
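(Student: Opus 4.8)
The plan is to unfold the definition of \emph{candidate} applied to $\settone^\bot$ and discharge its three defining conditions one at a time, each by a direct appeal to a lemma already established in the excerpt. Recall that, by definition, $\settone^\bot$ is a candidate exactly when it is a generator that is additionally bi-orthogonal, i.e.\ when (i) $\settone^\bot\neq\emptyset$, (ii) $\settone^\bot\subseteq\sn\cutsym$, and (iii) $\settone^{\bot\bot\bot}=\settone^\bot$. The hypothesis that $\settone$ is a \emph{generator} gives us precisely the two facts $\settone\neq\emptyset$ and $\settone\subseteq\sn\cutsym$, and these two facts are enough to feed the duality lemmas.

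For condition (ii), I would first observe that $\settone\neq\emptyset$, and then apply \reflemmap{candidates-rationale}{one} (\emph{non-emptiness dualizes as SN}), which yields exactly $\settone^\bot\subseteq\sn\cutsym$. For condition (i), I would use the other hypothesis $\settone\subseteq\sn\cutsym$ and apply \reflemmap{candidates-rationale}{two} (\emph{SN dualizes as non-emptiness}), which gives $\settone^\bot\neq\emptyset$. Note the pleasant symmetry here: the two generator requirements for $\settone$ dualize into each other, so non-emptiness of $\settone$ certifies strong normalization of $\settone^\bot$ and strong normalization of $\settone$ certifies non-emptiness of $\settone^\bot$. These two steps together establish that $\settone^\bot$ is a generator.

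For condition (iii), the bi-orthogonality, no hypothesis on $\settone$ is even needed: it is the purely formal identity $\settone^{\bot\bot\bot}=\settone^\bot$ furnished by \reflemmap{basic-prop-duality}{two}, read with $\settone^\bot$ in place of the abstract set, so that $(\settone^\bot)^{\bot\bot}=\settone^\bot$. This completes the verification that $\settone^\bot$ satisfies all three conditions, hence is a candidate.

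I do not anticipate a genuine obstacle in this argument: the proof is essentially a bookkeeping assembly of \reflemma{candidates-rationale} and \reflemma{basic-prop-duality}, and all the real work has already been absorbed into those lemmas (in particular into \reflemma{sn-to-subterms}, on which \reflemma{candidates-rationale} depends). The only point requiring minor care is to keep straight the two symmetric instances of duality—terms-to-contexts and contexts-to-terms—so that whichever side $\settone$ lives on, the cited lemmas are applied in the correct direction; but since each cited statement is already phrased to cover both sides uniformly, this causes no difficulty.
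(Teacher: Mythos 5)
Your proof is correct and matches the paper's own argument exactly: both discharge strong normalization via \reflemmap{candidates-rationale}{one}, non-emptiness via \reflemmap{candidates-rationale}{two}, and bi-orthogonality via \reflemma{basic-prop-duality}. Nothing further is needed.
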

\end{toappendix}

 \begin{proof}
Properties:
\begin{itemize}
\item \emph{Strong normalization}: by \reflemmap{candidates-rationale}{one}.

\item \emph{Non-emptiness}:  \reflemmap{candidates-rationale}{two}.

\item \emph{Bi-orthogonal}: by the by-orthogonal property of duality (\reflemma{basic-prop-duality}).\qed
\end{itemize}
\end{proof}

We now associate to every formula $\form$ a \emph{seed} set $\settone_{\redc\form}$ that we then prove to be a generator, so that 
its bi-orthogonal is a candidate (by \refprop{generators-give-red-cands}). A minor unusual point is that we define the seed 
$\settone_{\redc\aform}$ for the atomic formula as the set of variables. The literature rather defines it 
as $\lltermsp\aform\cap\sn\cutsym$, which is somewhat wacky, as this is actually what the method is meant to prove!

\begin{definition}[Formula sets and seeds]
Let $\form$ be a MIL formula. The formula set $\redc{\form}$ and formula seed $\settone_{\redc\form}$ are mutually defined as follows: $\redc{\form} \defeq \settone_{\redc\form}^{\bot\bot}$, and
$\settone_{\redc\form}$ is defined by induction on $\form$ as follows:
 \begin{itemize}
  \item  $\settone_{\redc\aform} \defeq \vars$;
  \item  $\settone_{\redc{\form\imply\formtwo}} \defeq \left\{\la\svar\stm\ |\ \cuta\stmtwo\svar\stm \in\redc{\formtwo}\, \forall\stmtwo \in\redc\form\right\}$.
 \end{itemize}
\end{definition}

The proof that $\redc{\form}$ is a candidate for every $\form$ requires a lemma about non-emptiness of the seed $\settone_{\redc{\form\imply\formtwo}}$ of implicative formulas. Please note that non-emptiness of $\settone_{\redc{\form\imply\formtwo}}$ is indeed not obvious: $\settone_{\redc{\form\imply\formtwo}}$ does not include the variables, nor it is defined by plainly  abstracting terms in $\redc\formtwo$, given the additional requirement concerning terms in $\redc\form$. 

\begin{lemma}[Implicative seeds are non-empty]
If $\redc\form$ and $\redc\formtwo$ are candidates then $\settone_{\redc{\form\imply\formtwo}} \neq \emptyset$.\label{l:formulas-give-red-aux}
\end{lemma}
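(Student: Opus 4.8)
The plan is to exhibit a single inhabitant of $\settone_{\redc{\form\imply\formtwo}}$. The natural candidate is $\la\svar\svartwo$, where $\svartwo$ is a variable of type $\formtwo$ and $\svar$ is a fresh variable of type $\form$ distinct from $\svartwo$; this term has type $\form\imply\formtwo$ by an axiom followed by weakening. By the definition of the seed, it suffices to prove that $\cuta\stmtwo\svar\svartwo\in\redc\formtwo$ for every $\stmtwo\in\redc\form$.

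To do so I would unfold $\redc\formtwo=\settone_{\redc\formtwo}^{\bot\bot}$ and, using bi-orthogonality (\reflemma{basic-prop-duality}), reduce the goal to showing $\elctxfp{\cuta\stmtwo\svar\svartwo}\in\sn\cutsym$ for an arbitrary $\elctx\in\settone_{\redc\formtwo}^{\bot}$. Two facts are available: first, since $\redc\formtwo$ is a candidate it contains the variables (\reflemma{candidates-contain-variables}), so $\svartwo\in\redc\formtwo$ and hence $\elctxfp\svartwo\in\sn\cutsym$ by duality; second, $\stmtwo\in\redc\form\subseteq\sn\cutsym$. Now, elimination contexts are weak contexts, and capture-avoiding plugging keeps the free variables of $\stmtwo$ disjoint from the binders of $\elctx$, so a single application of root structural equivalence (licensed also by the freshness of $\svar$) floats the cut on $\svar$ with argument $\stmtwo$ out of the whole elimination context, giving $\elctxfp{\cuta\stmtwo\svar\svartwo}\cuteq\cuta\stmtwo\svar{\elctxfp\svartwo}$. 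By structural stability of $\sn\cutsym$ (\refcorop{cuteq-bisim}{sn}) it then suffices to prove $\cuta\stmtwo\svar{\elctxfp\svartwo}\in\sn\cutsym$.

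Here I would observe that $\cuta\stmtwo\svar{\ctxhole}$ is a left context and apply disjoint left context extension (\reflemma{disj-left-ctx-extension}) to it and the strongly normalizing term $\elctxfp\svartwo$: its first hypothesis holds because $\svar$, the only variable captured by $\cuta\stmtwo\svar{\ctxhole}$, is not free in $\elctxfp\svartwo$. The remaining hypothesis is that $\cuta\stmtwo\svar\svarfour\in\sn\cutsym$ for a fresh variable $\svarfour$; by root cut expansion (\refprop{root-sn-expansion}), since $\stmtwo\in\sn\cutsym$, this follows once $\cutsub\stmtwo\svar\svarfour\in\sn\cutsym$, and $\cutsub\stmtwo\svar\svarfour$ is just $\stmtwo$ with the value of its unique splitting replaced by $\svarfour$ (using $\svar\notin\fv\svarfour$ and \reflemma{vacuous-subs}). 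I expect this last point to be the main obstacle: one must show that replacing, in a strongly normalizing term $\lsctxp\sval$, the value $\sval$ occupying the hole of its splitting by a fresh variable preserves strong normalization. This is not hard but it is the only genuinely new ingredient; it is proved by a short simulation argument, noting that a fresh variable is itself a value and is inert, so every $\stocut$-step of the variant is mirrored by a step of $\lsctxp\sval$, whence strong normalization transfers. Assembling these facts yields $\cuta\stmtwo\svar{\elctxfp\svartwo}\in\sn\cutsym$, hence $\cuta\stmtwo\svar\svartwo\in\redc\formtwo$, and therefore $\la\svar\svartwo\in\settone_{\redc{\form\imply\formtwo}}$, which is thus non-empty.
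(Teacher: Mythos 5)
Your witness $\la\svar\svartwo$ and the opening moves (candidates contain variables, hence $\elctxfp\svartwo\in\sn\cutsym$ by duality; $\stmtwo\in\redc\form\subseteq\sn\cutsym$; one application of structural equivalence giving $\elctxfp{\cuta\stmtwo\svar\svartwo}\cuteq\cuta\stmtwo\svar{\elctxfp\svartwo}$) match the paper's, and your argument can be completed, but it takes a genuinely different and costlier route at the decisive step. The paper never decomposes an arbitrary $\stmtwo\in\redc\form$ into $\lsctxp\sval$: it observes that $\cuta\ctxhole\svar{\elctxfp\svartwo}$ is itself an elimination context of co-type $\form$ and shows it lies in $\settone_{\redc\form}^{\bot}$, which by bi-orthogonality only requires testing it against the \emph{seed} $\settone_{\redc\form}$. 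Seed elements are values $\svaltwo$ (variables or abstractions), so their splitting context is empty and $\cutsub\svaltwo\svar{\elctxfp\svartwo}=\elctxfp\svartwo$ by \reflemma{vacuous-subs} (freshness of $\svar$); root cut expansion then gives $\cuta\svaltwo\svar{\elctxfp\svartwo}\in\sn\cutsym$ directly, and $\stmtwo\in\redc\form=\settone_{\redc\form}^{\bot\bot}$ yields $\cuta\stmtwo\svar{\elctxfp\svartwo}\in\sn\cutsym$ with no further analysis. Your route quantifies over all of $\redc\form$ and is therefore forced into the splitting $\stmtwo=\lsctxp\sval$, at which point \reflemma{disj-left-ctx-extension} demands $\lsctxp\svarfour\in\sn\cutsym$ for a fresh $\svarfour$ --- a fact that is true but appears nowhere in the paper's toolkit. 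You correctly flag this as the crux, and your simulation sketch is viable: the point to verify is that firing a cut node of $\lsctx$ rewrites the spine identically whether the hole contains $\sval$ or $\svarfour$, because meta-level substitution transforms cut and subtraction nodes independently of their bodies, so the invariant ``same left context, possibly different value in the hole'' is preserved along reduction. That check is exactly the kind of work the detour through $\settone_{\redc\form}^{\bot}$ is designed to avoid; in short, your proof is correct but pays for bypassing bi-orthogonality with one extra (unproved, though provable) lemma.
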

\begin{proof}
Let $\svar \neq \svartwo$. We show that $\la\svar\svartwo \in \settone_{\redc{\form\imply\formtwo}}$. By definition, $\la\svar\svartwo \in \settone_{\redc{\form\imply\formtwo}}$ holds if $\cuta\stmtwo\svar \svartwo \in \redc{\formtwo}$ for every $\stmtwo\in\redc\form$, that is,  $\elctxfp{\cuta\stmtwo\svar \svartwo} \in \sn\cutsym$ for every $\elctx\in\redc\formtwo^{\bot}$. 
Since candidates contain variables (\reflemma{candidates-contain-variables}), $\vars \subseteq \redc\formtwo$. By duality, $\elctxfp\svartwo \in \sn\cutsym$. By hypothesis, we also have $\stmtwo \in \sn\cutsym$. Since we can assume that $\svar\notin\fv{\elctxfp\svartwo}$, by disjoint left context extension (\reflemma{disj-left-ctx-extension}) we obtain $\cuta\stmtwo\svar\elctxfp\svartwo \in \sn\cutsym$. By structural stability, $\cuta\stmtwo\svar\elctxfp\svartwo \cuteq \elctxfp{\cuta\stmtwo\svar\svartwo} \in \sn\cutsym$.\qed
\end{proof}

Now, we are ready to prove that formula sets $\redc\form$ are candidates, given by the third point of the next proposition. The proof is simple and yet tricky: by induction on $\form$, it uses 
\refprop{generators-give-red-cands} to prove 2 from 1, and 3 from 2, but it also needs 3 (on sub-formulas) to prove 1. 

\begin{proposition}[Formula sets are candidates]
Let $\form$ be a MIL formula.\label{prop:formulas-give-redc}
\begin{enumerate}
\item \emph{Seeds are generators}: $\settone_{\redc\form}$ is a generator.
\item \emph{Dual candidates}: $\settone_{\redc\form}^{\bot}$ is a  candidate.
\item \emph{Candidates}: $\redc{\form}$ is a candidate.
\end{enumerate}
\end{proposition}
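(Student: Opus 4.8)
The plan is to prove the three points simultaneously by induction on the formula $\form$, establishing point 2 from point 1 and point 3 from point 2 via \refprop{generators-give-red-cands}, while using point 3 on the \emph{strict} sub-formulas---available from the induction hypothesis---to prove point 1. This interleaving is the whole point: point 1 for $\form$ cannot be obtained in isolation.

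Points 2 and 3 are the easy half and hold for an arbitrary $\form$ once point 1 is known. For point 2, if $\settone_{\redc\form}$ is a generator then \refprop{generators-give-red-cands} immediately gives that its dual $\settone_{\redc\form}^\bot$ is a candidate. For point 3, recall that a candidate is in particular a generator (it satisfies non-emptiness, strong normalization, and bi-orthogonality); hence $\settone_{\redc\form}^\bot$ is a generator, and applying \refprop{generators-give-red-cands} once more yields that $(\settone_{\redc\form}^\bot)^\bot = \settone_{\redc\form}^{\bot\bot} = \redc\form$ is a candidate.

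The substance is point 1, proved by cases on $\form$. In the atomic case $\form = \aform$ the seed is $\settone_{\redc\aform} = \vars$, which is non-empty and contained in $\sn\cutsym$ because variables are normal, so it is a generator with no appeal to the induction hypothesis. In the implicative case $\form = \formtwo \imply \formthree$, non-emptiness of $\settone_{\redc{\formtwo\imply\formthree}}$ is exactly \reflemma{formulas-give-red-aux}, whose hypothesis that $\redc\formtwo$ and $\redc\formthree$ be candidates is supplied by point 3 of the induction hypothesis on the sub-formulas. For strong normalization, I would take an arbitrary $\la\svar\stm \in \settone_{\redc{\formtwo\imply\formthree}}$ and force its body to be $\sn\cutsym$ using a single well-chosen element of $\redc\formtwo$: since $\redc\formtwo$ is a candidate it contains all variables (\reflemma{candidates-contain-variables}), so choosing a variable $\svartwo \in \redc\formtwo$ gives $\cuta\svartwo\svar\stm \in \redc\formthree$ by definition of the seed; as $\redc\formthree$ is a candidate, hence a generator contained in $\sn\cutsym$, the cut $\cuta\svartwo\svar\stm$ is strongly normalizing. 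Since $\stocut$ is closed under contexts, any infinite reduction of $\stm$ would lift to one of $\cuta\svartwo\svar\stm$, so $\stm \in \sn\cutsym$, and neutral extension (\reflemma{extension}) then gives $\la\svar\stm \in \sn\cutsym$. Hence $\settone_{\redc{\formtwo\imply\formthree}} \subseteq \sn\cutsym$ and it is a generator.

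I expect the main obstacle to be getting the mutual dependency of the induction right rather than any single computation: one must verify that point 1 for $\form$ really only invokes point 3 for the \emph{strictly} smaller formulas $\formtwo$ and $\formthree$, so that the induction is well-founded, and that non-emptiness of the implicative seed is not circular---which is precisely why it is isolated as \reflemma{formulas-give-red-aux}. Every individual inference is then short, using only the bi-orthogonality machinery of \refprop{generators-give-red-cands}, the fact that candidates contain variables, and the extension lemma.
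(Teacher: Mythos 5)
Your proposal is correct and follows exactly the structure the paper announces and uses: induction on $\form$, deriving point 2 from point 1 and point 3 from point 2 via \refprop{generators-give-red-cands}, invoking point 3 of the induction hypothesis on the sub-formulas to get non-emptiness of the implicative seed through \reflemma{formulas-give-red-aux}, and establishing strong normalization of the seed by cutting a variable into the body and concluding with neutral extension. No gaps; this matches the paper's own argument.
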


\begin{proof}
We prove the first point, the second follows from the first and \refprop{generators-give-red-cands}, the third one follows from the second and \refprop{generators-give-red-cands}. By induction on $\form$. Cases:
\begin{itemize}
\item \emph{Base}, \ie $\form = \aform$. Note that $\settone_{\redc\aform}$ contains the variables by definition and that all its elements are normal, thus in $\sn\cutsym$.

\item \emph{Implication}, \ie $\form = \formtwo\lolli\formthree$. 
 By \ih (point 3), both $\redc\formtwo$ and $\redc\formthree$ are candidates. Then by \reflemma{formulas-give-red-aux}, $\settone_{\redc{\formtwo\lolli\formthree}}\neq\emptyset$.

 Since variables are in $\redc\formtwo$, if $\la\svar\stmthree\in \settone_{\redc{\formtwo\lolli\formthree}}$ then $\cuta\svar\svar\stmthree \in \redc\formthree$. By \ih (Point 3), $\cuta\svar\svar\stmthree \in \sn\cutsym$ and so does $\stmthree$. By extension, $\la\svar\stmthree \in \sn\cutsym$. Therefore $\settone_{\redc{\formtwo\lolli\formthree}} \subseteq \sn\cutsym$.\qed
\end{itemize}
\end{proof}

\paragraph{Reducibility and Adequacy.} It is now time to introduce the last key notion, namely \emph{reducible derivations}, and---as it is standard---prove \emph{adequacy}, from which SN shall follow.  When applying the bi-orthogonal reducibility method to a calculus with explicit substitutions (here cuts), it is essential to have \emph{two} equivalent formulations of reducibility (one based on generators and one based on candidates), as to use the most convenient one in each case of the proof of adequacy. We define both, and prove their equivalence below. 

Another technical detail (induced by the presence of explicit substitution) is that the rigid structure of terms 
forces an order between the assignments in the typing context $\multiform$, that is, it treats $\multiform$ as a 
\emph{list} rather than as a multi-set.  The proof of adequacy then considers that the sequent calculus comes with an 
exchange rule, treated by one of the cases.

\begin{definition}[Reducible derivations]
Let $\tderiv\pof\multiForm \vdash \stm\hastype \form$ be a typing derivation, and  $\multiForm = 
\svar_{1}\hastype\formtwo_{1},\mydots, 
\svar_{k}\hastype\formtwo_{k} $. Then:
\begin{itemize}
\item $\tderiv$ is \emph{value reducible} if 
$\cuta{\sval_{1}}{\svar_{1}}\ldots \cuta{\sval_{k}}{\svar_{k}} \stm \in \redc\form$
 for every value 
$\sval_{i}\in\settone_{\redc{\formtwo_{i}}}$ such that the introduced cuts are independent, that is, 
$\fv{\sval_{i}} \cap \dom\multiForm = \emptyset$. 
\item $\tderiv$ is \emph{term reducible} if 
$\cuta{\stmtwo_{1}}{\svar_{1}}\ldots \cuta{\stmtwo_{k}}{\svar_{k}} \stm \in \redc\form$
 for every term
$\stmtwo_i\in \redc{\formtwo_{i}}$  such that the introduced cuts are independent, that is, 
$\fv{\stmtwo_{i}} \cap \dom\multiForm = \emptyset$. 
\end{itemize}
To ease notations, we shorten $\cuta{\sval_{1}}{\svar_{1}}\ldots \cuta{\sval_{k}}{\svar_{k}} \stm$ to 
$\cuta{\sval_{i}}{\svar_{i}}_{\multiForm}\stm$, and $\cuta{\stmtwo_{1}}{\svar_{1}}\ldots \cuta{\stmtwo_{k}}{\svar_{k}} \stm$ to $\cuta{\stmtwo_i}{\svar_{i}}_{\multiForm}\stm$.
\end{definition}

\begin{lemma}[Value and term reducibility coincide]
Let $\tderiv\pof \multiForm \vdash \stm\hastype \form$ be a type derivation. Then $\tderiv$ is value reducible if and 
only if $\tderiv$ is term reducible.\label{l:reduc-hyp-simpl}
\end{lemma}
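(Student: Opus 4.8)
The statement asserts the equivalence of two reducibility predicates differing only in whether the cut-in data range over the seeds $\settone_{\redc{\formtwo_i}}$ (values) or over the full candidates $\redc{\formtwo_i}$ (terms). The plan is to prove the two implications separately. \textbf{Term reducible $\Rightarrow$ value reducible} is immediate: by the closure property of duality (\reflemma{basic-prop-duality}) we have $\settone_{\redc{\formtwo_i}}\subseteq\settone_{\redc{\formtwo_i}}^{\bot\bot}=\redc{\formtwo_i}$, so any independent family of values $\sval_i\in\settone_{\redc{\formtwo_i}}$ is in particular an independent family of terms $\sval_i\in\redc{\formtwo_i}$; instantiating term reducibility on it yields $\cuta{\sval_i}{\svar_i}_{\multiForm}\stm\in\redc\form$, which is exactly value reducibility.

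The interesting direction is \textbf{value reducible $\Rightarrow$ term reducible}, which I would prove by upgrading the cut-in data from values to terms one variable at a time. Writing $\multiForm=\svar_1\hastype\formtwo_1,\dots,\svar_k\hastype\formtwo_k$, I consider for $0\le j\le k$ the statement $P(j)$: for all terms $\stmtwo_1\in\redc{\formtwo_1},\dots,\stmtwo_j\in\redc{\formtwo_j}$ and all values $\sval_{j+1}\in\settone_{\redc{\formtwo_{j+1}}},\dots,\sval_k\in\settone_{\redc{\formtwo_k}}$ realizing independent cuts, the term
\[
\cuta{\stmtwo_1}{\svar_1}\cdots\cuta{\stmtwo_j}{\svar_j}\cuta{\sval_{j+1}}{\svar_{j+1}}\cdots\cuta{\sval_k}{\svar_k}\stm
\]
belongs to $\redc\form$. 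Then $P(0)$ is value reducibility, $P(k)$ is term reducibility, and I would prove $P(j)\Rightarrow P(j+1)$ by induction on $j$.

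For the inductive step, fix $\stmtwo_1,\dots,\stmtwo_{j+1}$ and $\sval_{j+2},\dots,\sval_k$, set $\stmfour:=\cuta{\sval_{j+2}}{\svar_{j+2}}\cdots\cuta{\sval_k}{\svar_k}\stm$, and let $T:=\cuta{\stmtwo_1}{\svar_1}\cdots\cuta{\stmtwo_{j+1}}{\svar_{j+1}}\stmfour$, which has type $\form$. Since $\redc\form=\settone_{\redc\form}^{\bot\bot}$ and $\redc\form^\bot=\settone_{\redc\form}^\bot$ (bi-orthogonal, \reflemma{basic-prop-duality}), it suffices to prove $\elctxfp{T}\in\sn\cutsym$ for every $\elctx\in\redc\form^\bot$. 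Writing $\elctx=\lctxtwop{\cuta{\ctxhole}{\svarfour}\stmthree}$ and viewing $\wsctx_0:=\lctxtwop{\cuta{\ctxhole}{\svarfour}\stmthree}$ as a weak context, I would use that $\cuteq$ freely displaces cuts (left rules) through weak contexts to pull the outer cuts out, producing the elimination context
\[
\elctx_{j+1}:=\cuta{\stmtwo_1}{\svar_1}\cdots\cuta{\stmtwo_j}{\svar_j}\cuta{\ctxhole}{\svar_{j+1}}\,\wsctx_0\ctxholep{\stmfour}
\]
of co-type $\formtwo_{j+1}$, which has the elimination shape $\lctxp{\cuta{\ctxhole}{\svar_{j+1}}\cdot}$ and satisfies, for every $\stm^\star$ of type $\formtwo_{j+1}$,
\[
\elctx_{j+1}\ctxholep{\stm^\star}\ \cuteq\ \elctxfp{\cuta{\stmtwo_1}{\svar_1}\cdots\cuta{\stm^\star}{\svar_{j+1}}\stmfour}.
\]
The independence hypotheses, together with $\alpha$-renaming of the bound variables of $\elctx$, guarantee both the freshness side-conditions of $\cuteq$ and that the plug into $\elctx_{j+1}$ may be read capture-avoidingly, as duality requires. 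Now I would first show $\elctx_{j+1}\in\redc{\formtwo_{j+1}}^\bot=\settone_{\redc{\formtwo_{j+1}}}^\bot$: for any value $\sval\in\settone_{\redc{\formtwo_{j+1}}}$, taking $\stm^\star:=\sval$ relates $\elctx_{j+1}\ctxholep{\sval}$ to $\elctxfp{T'}$ with $T':=\cuta{\stmtwo_1}{\svar_1}\cdots\cuta{\sval}{\svar_{j+1}}\stmfour$; since the $(j+1)$-th slot of $T'$ is now a value, $P(j)$ gives $T'\in\redc\form$, hence $\elctxfp{T'}\in\sn\cutsym$ (as $\elctx\in\redc\form^\bot$), and structural stability of SN (\refcorop{cuteq-bisim}{sn}) transfers this to $\elctx_{j+1}\ctxholep{\sval}$. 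By the bi-orthogonal identity (\reflemma{basic-prop-duality}) this means $\elctx_{j+1}\in\redc{\formtwo_{j+1}}^\bot$. Finally, since $\stmtwo_{j+1}\in\redc{\formtwo_{j+1}}$, duality gives $\elctx_{j+1}\ctxholep{\stmtwo_{j+1}}\in\sn\cutsym$, and the displayed equivalence with $\stm^\star:=\stmtwo_{j+1}$ plus structural stability give $\elctxfp{T}\in\sn\cutsym$. As $\elctx$ was arbitrary, $T\in\redc\form$, closing the induction.

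\textbf{Main obstacle.} The conceptual content is light, resting only on the closure and bi-orthogonal identities of duality and on $\cuteq$ being a strong bisimulation that preserves SN; the real effort is the bookkeeping. I expect the delicate points to be checking that $\elctx_{j+1}$ genuinely normalizes to an elimination context $\lctxp{\cuta{\ctxhole}{\svar_{j+1}}\stm'}$ and is well-typed of co-type $\formtwo_{j+1}$, that each displacement of a cut through $\wsctx_0$ meets the freshness conditions of $\cuteq$ (this is exactly where the independence hypotheses $\fv{\stmtwo_i}\cap\dom\multiForm=\emptyset$ and $\alpha$-renaming are consumed), and that the capture-avoiding versus capturing plugging stays consistent across all the structural-equivalence steps.
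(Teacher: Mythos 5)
Your proof is correct and follows essentially the same route as the paper: the easy inclusion $\settone_{\redc{\formtwo_i}}\subseteq\redc{\formtwo_i}$ for one direction, and for the converse a slot-by-slot upgrade from values to terms, realized by forming an elimination context of co-type $\formtwo_{j+1}$ around the remaining cuts, showing it dual to the seed via the value-reducibility hypothesis, and transferring SN back and forth with structural stability and the bi-orthogonal identity. The delicate points you flag (freshness/independence side conditions for $\cuteq$ and capture-avoiding plugging) are exactly the bookkeeping the paper's proof also has to discharge.
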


\begin{proof}
That term reducibility implies value reducibility is obvious because $\settone_{\redc{\formtwo_{i}}} \subseteq \redc{\formtwo_{i}}$. We prove that value reductibility implies term reducibility. Let $\multiform = \multiFormthree,\svar\hastype\formtwo,\multiFormtwo$.  The hypothesis is: 
\begin{equation}
\begin{array}{l\hcolspace l\hcolspace llllll}
\elctxfp{\cuta{\sval_{i}}{\svartwo_{i}}_{\multiFormthree}\cuta{\sval}\svar\cuta{\sval_{j}}{\svarthree_{j}}_{\multiFormtwo}\stm} & \in &\sn\cutsym,
\end{array}
\label{eq:red-hyp}
\end{equation} 
for every $\sval\in\settone_{\redc\formtwo}$ and appropriate $\sval_{i}$ and $\sval_{j}$. We show that we can replace $\sval$ with $\stmtwo \in \redc\formtwo$, that is, that the following holds:
\begin{equation}
\begin{array}{l\hcolspace l\hcolspace llllll}
\elctxfp{\cuta{\sval_{i}}{\svartwo_{i}}_{\multiFormthree}\cuta{\stmtwo}\svar\cuta{\sval_{j}}{\svarthree_{j}}_{\multiFormtwo}\stm} & \in &\sn\cutsym.
\end{array}
\label{eq:red-concl}
\end{equation}
By iterating the reasoning on all other values $\sval_{i}$ and $\sval_{j}$ one obtains the statement. Since \refequa{red-hyp} holds for all $\sval\in\settone_{\redc\formtwo}$ we have that 
$
\elctxtwo \defeq \elctxfp{\cuta{\sval_{i}}{\svartwo_{i}}_{\multiFormthree}\cuta{\ctxhole}\svar\cuta{\sval_{j}}{\svarthree_{j}}_{\multiFormtwo}\stm}\in \settone_{\redc\formtwo}^{\bot} = \redc\formtwo^{\bot}
$.
Note that the cuts $\cuta{\sval_{i}}{\svartwo_{i}}_{\multiFormthree}$ and $\cuta{\sval_{j}}{\svarthree_{j}}_{\multiFormtwo}$ can be added to the elimination context exactly because they are independent. Now, by duality we obtain $\elctxtwofp\stmtwo \in \sn\cutsym$ for every $\stmtwo \in \redc\formtwo$, which is exactly \refequa{red-concl}.\qed
\end{proof}

From now on, we do not distinguish between value and term reducibility, and simply refer to \emph{reducible derivations}. For the proof of adequacy, we need two auxiliary lemmas. The first one guarantees that the notion of reducible derivations does what it is supposed to do, despite its somewhat convoluted formulation (which is necessary for the proof to go through).

\begin{lemma}[Intended meaning of reducibility]
Let $\tderiv\pof \multiForm \vdash \stm\hastype \form$ be a reducible derivation. Then $\stm\in\redc\form\subseteq\sn\cutsym$.\label{l:reduc-implies-sn}
\end{lemma}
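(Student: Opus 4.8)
The statement splits into two independent parts, and only one of them uses reducibility of $\tderiv$. The inclusion $\redc\form\subseteq\sn\cutsym$ is immediate: by \refprop{formulas-give-redc}.3 the set $\redc\form$ is a candidate, hence in particular a generator, and every generator is contained in $\sn\cutsym$ by definition. So the real content is $\stm\in\redc\form$, and the plan is to instantiate reducibility with variables and then cut them away.

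First I would reduce to the term-reducibility formulation via \reflemma{reduc-hyp-simpl}, so that the witnesses I may plug in are arbitrary elements of the candidates $\redc{\formtwo_i}$ rather than only values from the seeds. Writing $\multiForm = \svar_1\hastype\formtwo_1,\ldots,\svar_k\hastype\formtwo_k$, I choose pairwise distinct fresh variables $\svartwo_1,\ldots,\svartwo_k\notin\dom\multiForm$. Each $\redc{\formtwo_i}$ is a candidate (\refprop{formulas-give-redc}.3), so it contains all variables by \reflemma{candidates-contain-variables}, giving $\svartwo_i\in\redc{\formtwo_i}$; moreover $\fv{\svartwo_i}=\{\svartwo_i\}$ is disjoint from $\dom\multiForm$, so the independence side-condition of term reducibility is satisfied. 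Term reducibility of $\tderiv$ then yields $\cuta{\svartwo_i}{\svar_i}_{\multiForm}\stm\in\redc\form$.

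Now every introduced cut has a variable as its left subterm, so each is a renaming cut, and reducing all $k$ of them gives $\cuta{\svartwo_i}{\svar_i}_{\multiForm}\stm \storencut^{k} \stm'$, where $\stm'$ is obtained from $\stm$ by renaming every free $\svar_i$ to $\svartwo_i$. (Here I use that $\fv\stm\subseteq\dom\multiForm=\{\svar_1,\ldots,\svar_k\}$, that the $\svartwo_i$ are fresh and distinct so the renaming is clean, and that $\cutsub{\svartwo}{\svar}{(\cdot)}$ acts exactly as the renaming $\svar\mapsto\svartwo$, including on the left occurrences created by subtractions.) Since $\storencut$ is an instance of $\stocut$ and $\redc\form$ is a candidate, it is closed under cut elimination by \reflemmap{candidates-rationale}{three}, whence $\stm'\in\redc\form$. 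Finally $\stm$ is recovered from $\stm'$ by the inverse renaming $\svartwo_i\mapsto\svar_i$.

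The main obstacle is precisely this last step: passing from $\stm'\in\redc\form$ back to $\stm\in\redc\form$ requires closure of candidates under renaming of free variables, which is not literally one of the stated lemmas. I would justify it by \emph{equivariance} of the whole construction: free-variable renaming is a typing- and reduction-preserving bijection under which $\sn\cutsym$ is invariant (\reflemma{deformations}), the seeds $\settone_{\redc\aform}=\vars$ and $\settone_{\redc{\form\imply\formtwo}}$ are visibly stable, and capture-avoiding plugging into elimination contexts commutes with renaming; hence duality and the bi-orthogonal closure defining $\redc\form=\settone_{\redc\form}^{\bot\bot}$ are preserved, so $\redc\form$ is renaming-invariant and $\stm\in\redc\form$ follows. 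I would record this equivariance as a short standalone remark. (Note that for the ultimate SN theorem the weaker conclusion $\stm\in\sn\cutsym$ already follows without invoking renaming-closure of candidates, directly from $\stm'\in\redc\form\subseteq\sn\cutsym$ and \reflemma{deformations}; the stronger membership $\stm\in\redc\form$ is what is needed to feed the inductive proof of adequacy.)
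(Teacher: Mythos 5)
Your proof is correct in substance and uses the same three ingredients as the paper (term reducibility via \reflemma{reduc-hyp-simpl}, candidates contain variables by \reflemma{candidates-contain-variables}, closure of candidates under cut elimination by \reflemmap{candidates-rationale}{three}), but it diverges at the key instantiation and this costs you an extra lemma. The paper instantiates the reducibility clause with the variables $\svar_i$ \emph{themselves}, obtaining $\cuta{\svar_i}{\svar_i}_{\multiForm}\stm\in\redc\form$; since $\cutsub{\svar_i}{\svar_i}{(\cdot)}$ is the identity, the $k$ renaming cuts reduce exactly to $\stm$, and closure under $\stocut$ gives $\stm\in\redc\form$ with no residual renaming to undo. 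By choosing fresh $\svartwo_i$ you land on $\stm'=\cutsub{\svartwo_k}{\svar_k}\cdots\cutsub{\svartwo_1}{\svar_1}\stm$ instead, and you then need equivariance of $\redc\form$ under free-variable renaming, which is indeed not among the stated lemmas; your sketch of why it holds (the seeds, duality, and $\sn\cutsym$ are all stable under bijective renaming, the last by \reflemma{deformations}) is plausible and standard, but it is genuine additional work that the paper's choice avoids entirely. The one thing your detour buys is that the independence side-condition $\fv{\stmtwo_i}\cap\dom\multiForm=\emptyset$ is literally satisfied by fresh $\svartwo_i$, whereas with $\stmtwo_i=\svar_i$ it fails as stated (even though no capture actually occurs, since the binder for $\svar_i$ scopes only over the cuts to its right); so your version is more pedantically faithful to the definition of reducibility, at the price of the equivariance remark. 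Either make the paper's choice and note why capture is not an issue, or keep your fresh variables and actually state and prove the equivariance lemma rather than leaving it as a remark.
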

\begin{proof}
Since $\redc\formtwo$ is a candidate (\refprop{formulas-give-redc}), it contains the variables (\reflemma{candidates-contain-variables}). By reducibility, $\cuta{\svar_{i}}{\svar_{i}}_{\multiform}\stmthree \in\redc\formtwo$. Note that $\cuta{\svar_{i}}{\svar_{i}}_{\multiform}\stmthree \stocut^*\stmthree$, and that $\stmthree\in\redc\formtwo$ because candidates are closed under cut elimination (\reflemmap{candidates-rationale}{three}). Finally, since $\redc\formtwo$ is a candidate, $\redc\formtwo\subseteq\sn\cutsym$.\qed
\end{proof}

The second auxiliary lemma slightly generalizes the root cut expansion property of the previous section to when the cut to expand appears in the hole of an elimination context $\elctx$, which is how we shall apply it when proving adequacy for reducible derivations.

\begin{lemma}[Generalized root cut expansion]
If $\elctxfp{\cutsub\stmtwo\svar\stm}\in \sn\cutsym$ and $\stmtwo\in\sn\cutsym$ then $\elctxfp{\cuta\stmtwo\svar\stm}\in \sn\cutsym$.
\label{l:struct-stability-adequacy}
\end{lemma}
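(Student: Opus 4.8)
The plan is to reduce the statement to the plain root cut expansion (\refprop{root-sn-expansion}) by first pulling the cut on $\svar$ out of the elimination context via structural equivalence, and then transporting strong normalization back through that equivalence using structural stability of SN (\refcorop{cuteq-bisim}{sn}). Write $\elctx = \lsctxp{\cuta\ctxhole\svarthree\stmthree}$, and recall that, thanks to the $\alpha$-renaming built into the capture-avoiding plugging $\elctxfp\cdot$, the bound variable $\svar$ of the cut $\cuta\stmtwo\svar\stm$ may be assumed fresh for $\elctx$, and the free variables of $\stmtwo$ may be assumed disjoint from the variables captured by $\elctx$.

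First I would observe that $\elctx$ is a \emph{weak} context: its hole sits in the left argument of the innermost cut and in the continuations of the surrounding left context $\lsctx$, never under an abstraction. Hence the structural equivalence rule for cuts applies and yields
\[
\elctxfp{\cuta\stmtwo\svar\stm} \;\cuteq\; \cuta\stmtwo\svar{\elctxfp\stm},
\]
the side conditions $\svar\notin\fv\elctx$ and $\fv\stmtwo\cap\dom\elctx=\emptyset$ being exactly what capture-avoiding plugging guarantees. This is the conceptual heart of the argument: it is precisely the use of $\ctxholef$ in the definition of duality that makes relocating the cut legitimate.

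Next I would check that the root-cut reduct of the right-hand side is strongly normalizing. Splitting $\stmtwo = \lsctxtwop\sval$, one computes $\cutsub\stmtwo\svar{\elctxfp\stm} = \lsctxtwop{\elctxfp{\cutsub\sval\svar\stm}}$, using that $\svar$ is fresh for $\elctx$ so that value substitution distributes structurally through it. This differs from the hypothesis term $\elctxfp{\cutsub\stmtwo\svar\stm} = \elctxfp{\lsctxtwop{\cutsub\sval\svar\stm}}$ only in the placement of the substitution context $\lsctxtwo$, so the two are structurally equivalent. Since $\elctxfp{\cutsub\stmtwo\svar\stm}\in\sn\cutsym$ by hypothesis, structural stability of SN (\refcorop{cuteq-bisim}{sn}) gives $\cutsub\stmtwo\svar{\elctxfp\stm}\in\sn\cutsym$.

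With $\stmtwo\in\sn\cutsym$ (hypothesis) and $\cutsub\stmtwo\svar{\elctxfp\stm}\in\sn\cutsym$ in hand, the plain root cut expansion (\refprop{root-sn-expansion}) yields $\cuta\stmtwo\svar{\elctxfp\stm}\in\sn\cutsym$. A final appeal to structural stability of SN (\refcorop{cuteq-bisim}{sn}) together with the equivalence of the first step then delivers $\elctxfp{\cuta\stmtwo\svar\stm}\in\sn\cutsym$, as required. The only genuinely delicate points are the two $\cuteq$-steps: verifying that $\elctx$ qualifies as a weak context and that capture-avoiding plugging discharges the freshness side conditions (first step), and reconciling the two placements of $\lsctxtwo$ (second step). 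Both are routine manipulations of structural equivalence, and neither requires any rewriting property beyond those already isolated as the key ingredients of the SN proof.
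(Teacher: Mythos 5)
Your proof is correct and follows essentially the same route as the paper's: split $\stmtwo$ as $\lsctxp\sval$, use structural equivalence to move the left context and the cut across the elimination context, transfer SN via structural stability, and conclude with root cut expansion plus one more appeal to structural stability. The only difference is presentational (you relocate the cut first and then check its reduct, whereas the paper computes on the substituted term first), so the two arguments coincide step for step.
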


\begin{proof}
Let $\stmtwo = \lsctxp\sval$. Note that: 
\[\begin{array}{l\hcolspace l\hcolspace l\hcolspace l\hcolspace lll}
\elctxfp{\cutsub\stmtwo\svar\stm} & = &\elctxfp{\lsctxp{\cutsub\sval\svar\stm}}
\\& \cuteq& \lsctxp{\elctxfp{\cutsub\sval\svar\stm}} 
\\ &= &
\lsctxp{\cutsub\sval\svar\elctxfp{\stm}} 
 &=  & \cutsub\stmtwo\svar\elctxfp{\stm}
\end{array}\]
 and that $\cutsub\stmtwo\svar\elctxfp{\stm}\in\sn\cutsym$ by structural stability of SN. By root cut expansion, $\cuta\stmtwo\svar\elctxfp{\stm}\in\sn\cutsym$. By structural stability, $\cuta\stmtwo\svar\elctxfp{\stm} \cuteq \elctxfp{\cuta\stmtwo\svar\stm} \in\sn\cutsym$.\qed
\end{proof}

\begin{toappendix}
\begin{theorem}[Adequacy]
Let $\tderiv \pof \multiForm \vdash \stm\hastype \form$ a type derivation. Then $\tderiv$ is reducible.$\label{tm:adequacy}$
\end{theorem}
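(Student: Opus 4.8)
\emph{Approach.} The plan is to prove adequacy by induction on the typing derivation $\tderiv$, freely switching between the value and term formulations of reducibility via their equivalence (\reflemma{reduc-hyp-simpl}), and exploiting that each seed is contained in its candidate, so that a seed value $\sval_i\in\settone_{\redc{\formtwo_i}}$ may also be supplied wherever a reducible term of $\redc{\formtwo_i}$ is required. The whole proof runs on a single recurring engine. Write $\sigma$ for the meta-level value substitution determined by the cut-prefix $\cuta{\sval_i}{\svar_i}_{\multiForm}\stm$ (the notation introduced above). Since the $\sval_i$ are values, each prefix cut fires as a root step, $\cuta{\sval_i}{\svar_i}{\stm}\srtocut\cutsub{\sval_i}{\svar_i}{\stm}$, and value substitution distributes over every constructor; hence the goal term reduces, $\stocut^*$, to its fully substituted form. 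I then show this form lies in the relevant candidate using the induction hypotheses together with closure of candidates under cut elimination (\reflemmap{candidates-rationale}{three}), and finally transport membership back to the original term by re-introducing the prefix cuts one at a time through the generalized root cut expansion (\reflemma{struct-stability-adequacy}), whose side conditions ($\sval_i\in\sn\cutsym$ and the reducts in $\sn\cutsym$) hold because seeds and candidates are strongly normalizing (\refprop{formulas-give-redc}).

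\emph{Structural cases.} The axiom and cut cases are discharged by this engine. For $\ax$, the goal reduces to a reducible $\stm\in\redc\form$ substituted for the axiom variable; the remaining prefix cuts are vacuous, the reduction $\cuta{\stm}{\svar}{\svar}\stocut\stm$ recovers the head cut, and candidates contain the variables (\reflemma{candidates-contain-variables}) to close the reverse direction via expansion. For $\cut$, with premises typing $\stmtwo\hastype\form$ and $\stm\hastype\formtwo$, I first apply the induction hypothesis to the left premise to obtain $\sigma\stmtwo\in\redc\form$, then feed $\sigma\stmtwo$ as the reducible term for $\svar$ into the induction hypothesis for the right premise, yielding $\cuta{\sigma\stmtwo}{\svar}{\sigma\stm}\in\redc\formtwo$; as the goal reduces exactly to this term, closure under cut elimination and expansion finish the case. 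The exchange rule only permutes the prefix $\cuta{\sval_i}{\svar_i}_{\multiForm}\stm$, which is harmless because the independence conditions let the prefix cuts commute up to structural equivalence, and $\cuteq$ preserves both candidate membership and $\sn\cutsym$ (\refcorop{cuteq-bisim}{sn}).

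\emph{Right rule.} The two rules for $\imply$ are where the seed definition of $\redc{\form\imply\formtwo}$ does the real work. For $\implyRightRule$ the goal reduces to $\la\svar(\sigma\stm)$, using that value substitution pushes through abstractions, and I must place this abstraction in the seed $\settone_{\redc{\form\imply\formtwo}}$: by definition this amounts to $\cuta{\stmthree}{\svar}{\sigma\stm}\in\redc\formtwo$ for every $\stmthree\in\redc\form$, which follows from the induction hypothesis on the premise (now carrying the extra hypothesis $\svar\hastype\form$) after reducing its prefix cuts and appealing once more to closure under cut elimination. Since the seed is contained in its bi-orthogonal $\redc{\form\imply\formtwo}$, the abstraction is reducible, and generalized root cut expansion lifts this back to the original term.

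\emph{Left rule: the crux.} The step I expect to be the main obstacle is $\implyLeftRule$, for a subtraction $\nsuba{\svartwo}{\stmtwo}{\svar}{\stm}$ with $\svartwo\hastype\form\imply\formtwo$. Because $\svartwo$ has an implicative type, the value-reducibility clause forces the value substituted for it to be an abstraction $\la\svarthree\stmfour\in\settone_{\redc{\form\imply\formtwo}}$ (implicative seeds contain no variables), which triggers precisely the application clause of meta-level substitution: $\cutsub{\la\svarthree\stmfour}{\svartwo}{\nsuba{\svartwo}{\stmtwo}{\svar}{\stm}}$ unfolds to $\cuta{\cuta{\cutsub{\la\svarthree\stmfour}{\svartwo}{\stmtwo}}{\svarthree}{\stmfour}}{\svar}{\cutsub{\la\svarthree\stmfour}{\svartwo}{\stm}}$. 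Writing $\stmtwo'$ for the fully context-substituted left premise (so $\stmtwo'\in\redc\form$ by its induction hypothesis), the inner cut $\cuta{\stmtwo'}{\svarthree}{\stmfour}$ lies in $\redc\formtwo$ exactly by the seed property of $\la\svarthree\stmfour$; plugging this reducible term for $\svar$ into the induction hypothesis on the right premise then lands the whole unfolded term in $\redc\formthree$, and expansion returns to the subtraction. The delicate points here are matching the two-layered cut created by the meta-substitution against the composition of the two induction hypotheses, tracking correctly the freshness and independence conditions hidden in $\tctxplus$ and in the on-the-fly contraction of $\svartwo$, and checking that the repeated generalized root cut expansion still applies through the elimination-context wrapping used by the bi-orthogonal membership test.
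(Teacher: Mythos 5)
Your proposal is correct and follows essentially the same route as the paper's proof: induction on the typing derivation, closing each case by reducing the prefix cuts to the fully substituted form, using the seed property of $\redc{\form\imply\formtwo}$ for the two implication rules (in particular, exploiting that value reducibility forces an abstraction for the subtracted variable, so the special clause of meta-level substitution fires), and transporting membership back through the elimination-context test via generalized root cut expansion, with structural equivalence handling exchange. The delicate points you flag (independence/freshness bookkeeping, the interplay of the two reducibility formulations, and iterating the expansion under $\elctx$) are exactly the ones the paper's proof spells out, and they go through as you anticipate.
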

\end{toappendix}

\begin{proof}
By induction on $\tderiv$. The proof rests on the three rewriting properties mentioned at the beginning of the section, namely extension, (generalized) root cut expansion, and  structural stability of SN. Let $\multiForm = \svar_{1}\hastype\formtwo_{1},\mydots, 
\svar_{k}\hastype\formtwo_{k} $. Cases of the last rule of $\tderiv$:
\begin{itemize}
\item \emph{Axiom}: 
\begin{center}
\AxiomC{}
	\RightLabel{$\ax$}
	\UnaryInfC{$\svar\hastype\form \vdash 
\svar\hastype\form$}
	\DisplayProof
\end{center}
We need to show that $\elctxfp{\cuta{ \stmtwo}\svar\svar} 
\in\sn\cutsym$ for every term $\stmtwo\in\redc{\form}\subseteq\sn\cutsym$ and every elimination context $\elctx\in\redc{ \form}^{\bot}$. Note that: 
\[\begin{array}{l\hcolspace l\hcolspace l\hcolspace l\hcolspace l}
\elctxfp{\cuta{ \stmtwo}\svar\svar} 
&\stocut&
 \elctxfp{\cutsub{ \stmtwo}\svar\svar} 
&=&
\elctxfp{ \stmtwo}
\end{array}\]
Which 
is in $\sn\cutsym$ by duality. Note also  that $\elctxfp{ \stmtwo} = \elctxfp{\cutsub{ \stmtwo}\svar\svar}$. By generalized root cut expansion (\reflemma{struct-stability-adequacy}), $\elctxfp{\cuta{ \stmtwo}\svar\svar}\in\sn\cutsym$. 

\item \emph{Exchange}:
\begin{center}
\AxiomC{$ \tderivtwo\pof \multiformtwo,\svar\hastype\formtwo,\svartwo\hastype\formthree,\multiformthree \vdash \stm\hastype \form$}
	\RightLabel{$\exch$}
	\UnaryInfC{$  \multiformtwo,\svartwo\hastype\formthree,\svar\hastype\formtwo,\multiformthree \vdash \stm\hastype \form$}	
	\DisplayProof
	\end{center}
	with $\multiform = \multiformtwo,\svartwo\hastype\formthree,\svar\hastype\formtwo,\multiformthree$. Let $\multiFormtwo = \svar_{1}\hastype\formtwo_{1},\mydots, 
\svar_{k}\hastype\formtwo_{k} $ and $\multiFormthree = \svartwo_{1}\hastype\formthree_{1},\mydots, 
\svartwo_{h}\hastype\formthree_{h} $. For value reducibility, we have to 
show that:
\[\begin{array}{l\hcolspace l\hcolspace l\hcolspace l\hcolspace l}
\stm' &\defeq& \elctxfp{\cuta{\sval_{i}}{\svar_{i}}_{\multiFormtwo}\cuta\sval\svartwo\cuta\svaltwo\svar \cuta{\svaltwo_{j}}{\svartwo_{j}}_{\multiFormthree} \stm} &\in& \sn\cutsym.
\end{array}\]
for 
every $\sval\in \settone_{\redc{\formthree}}$, $\svaltwo\in \settone_{\redc{\formtwo}}$, $\sval_{i} \in\settone_{\redc{\formtwo_{i}}}$ for $i\in\set{1,\ldots,k}$, every $\svaltwo_{j} \in\settone_{\redc{\formthree_{j}}}$ for $i\in\set{1,\ldots,h}$, and every 
$\elctx\in\redc\form^{\bot}$.   By \ih 
on 
$\tderivtwo$, we have: 
\[\begin{array}{l\hcolspace l\hcolspace l\hcolspace l\hcolspace l}
\stm'' &\defeq& \elctxfp{\cuta{\sval_{i}}{\svar_{i}}_{\multiFormtwo}\cuta\svaltwo\svar \cuta\sval\svartwo\cuta{\svaltwo_{j}}{\svartwo_{j}}_{\multiFormthree} \stm} &\in&\sn\cutsym.
\end{array}\]
By structural stability and the independence of cuts in the definition of reducibility, $\stm'' \cuteq \stm' \in \sn\cutsym$.

\item \emph{Cut}:
\begin{center}
\AxiomC{$\tderiv_{\stmthree}\pof  \multiForm \vdash \stmthree\hastype\formtwo$}
	\AxiomC{$\tderiv_{\stmtwo}\pof  \multiForm, \svar\hastype\formtwo \vdash \stmtwo\hastype\form$}
	\RightLabel{$ \cut $}
	\BinaryInfC{$  \multiForm \vdash \cuta\stmthree\svar \stmtwo \hastype\form$}
	\DisplayProof
	\end{center} 
		with $\stm =\cuta\stmthree\svar \stmtwo$.  We have to 
show that: 
\[\begin{array}{l\hcolspace l\hcolspace l}
\cuta{\sval_{i}}{\svar_{i}}_{\multiform} 
\cuta\stmthree\svar\stmtwo &\in& \redc\form.
\end{array}\]
for 
every $\sval_{i} \in\settone_{\redc{\formtwo_{i}}}$ for $i\in\set{1,\ldots,k}$.   By \ih, 
$\tderiv_{\stmthree}$ is reducible. By \reflemma{reduc-implies-sn}, $\stmthree\in\redc\formtwo$.
By \ih,  
$\tderiv_{\stmtwo}$ is reducible, thus $\cuta{\sval_{i}}{\svar_{i}}_{\multiform} \cuta\stmthree\svar\stmtwo\in \redc\form$.
 \item \emph{Implication right}:
\begin{center}
\AxiomC{$\tderivtwo\pof  \svar\hastype\formtwo, \multiForm \vdash \stmtwo\hastype\formthree$}
	\RightLabel{$ \implyRightRule $}
	\UnaryInfC{$  \multiForm \vdash \la\svar\stmtwo\hastype\formtwo \imply \formthree$}
	\DisplayProof
		\end{center}
	with $\stm =\la\svar\stmtwo$ and $\form = \formtwo \imply \formthree$.
	We have to show that $\stm' \defeq \elctxfp{\cuta{\sval_{i}}{\svar_{i}}_{\multiform}\la\svar\stmtwo} 
\in\sn\cutsym$ for every $\sval_{i} \in \settone_{\redc{\formtwo_{i}}}$ for $i\in\set{1,\ldots,k}$ and every 
$\elctx\in\redc{\formtwo \imply \formthree}^{\bot}$.  By \ih on $\tderivtwo$ and the equivalence between value and term reducibility (\reflemma{reduc-hyp-simpl}), we have  
$\cuta\stmthree\svar\cuta{\sval_{i}}{\svar_{i}}_{\multiform}\stmtwo \in\redc\formthree$ for every 
$\stmthree\in\redc\formtwo$ and with $\svar\notin \fv{\sval_i}$ for all $i$ because of the independence of cuts in the definition of reducibility. Then $\la\svar\cuta{\sval_{i}}{\svar_{i}}_{\multiform}\stmtwo 
\in\redc{\formtwo\imply\formthree}$. By duality, 
$\elctxfp{\la\svar\cuta{\sval_{i}}{\svar_{i}}_{\multiform}\stmtwo} \in \sn\cutsym$. Note that:
\[\begin{array}{l\hcolspace c\hcolspace l\hcolspace l\hcolspace l}
\elctxfp{\la\svar\cuta{\sval_{i}}{\svar_{i}}_{\multiform}\stmtwo} 
& \stocut^* &
\elctxfp{\la\svar\cutsub{\sval_{i}}{\svar_{i}}_{\multiform}\stmtwo} 
\\ & = &
\elctxfp{\cutsub{\sval_{i}}{\svar_{i}}_{\multiform}\la\svar\stmtwo} 
& \in & \sn\cutsym.
\end{array}\]
By generalized root cut expansion (\reflemma{struct-stability-adequacy}), we obtain:	
\[\begin{array}{l\hcolspace c\hcolspace l\hcolspace l\hcolspace l}
\elctxfp{\cuta{\sval_{i}}{\svar_{i}}_{\multiform}\la\svar\stmtwo} &= &\stm' &\in &\sn\cutsym.
\end{array}\]
 \item \emph{Implication left}: 
\begin{center}
\AxiomC{$\tderiv_{\stmthree}\pof  \multiForm \vdash \stmthree\hastype\formthree$}
	\AxiomC{$\tderiv_{\stmtwo}\pof  \multiForm, \svar\hastype\formtwo \vdash \stmtwo\hastype\form$}
	\RightLabel{$ \implyLeftRule $}
	\BinaryInfC{$   \multiForm \cup \svartwo\hastype \formthree \imply \formtwo \vdash 
\suba\svartwo\stmthree\svar  \stmtwo \hastype\form$}
	\DisplayProof
	\end{center} 
	with $\stm =\suba\svartwo\stmthree\svar  \stmtwo$. There are two cases, depending on whether $\svartwo\hastype \formthree \imply \formtwo$ appears in $\multiform$.
	\begin{itemize}
	\item \emph{$\svartwo\hastype \formthree \imply \formtwo$ does not appear in $\multiform$}, and the derivation is:
	\[
\AxiomC{$\tderiv_{\stmthree}\pof  \multiForm \vdash \stmthree\hastype\formthree$}
	\AxiomC{$\tderiv_{\stmtwo}\pof  \multiForm, \svar\hastype\formtwo \vdash \stmtwo\hastype\form$}
	\RightLabel{$ \implyLeftRule $}
	\BinaryInfC{$   \multiForm , \svartwo\hastype \formthree \imply \formtwo \vdash 
\suba\svartwo\stmthree\svar  \stmtwo \hastype\form$}
	\DisplayProof
	\]
	Note that in particular this means that $\svartwo$ does not occur in $\stmthree$ nor $\stmtwo$.
	
 We have to show that: 
\[\begin{array}{l\hcolspace l\hcolspace l\hcolspace l\hcolspace l}
\stm' &\defeq &
\elctxfp{\cuta{\sval_{i}}{\svar_{i}}_{\multiform}\cuta{\la\svarthree\stmfour}\svartwo\suba\svartwo\stmthree\svar  \stmtwo} 
&\in&\sn\cutsym.
\end{array}\]
 for every $\sval_{i}\in\settone_{\redc{\formtwo_{i}}}$ for $i\in\set{1,\ldots,k}$, every 
$\la\svarthree\stmfour\in\settone_{\redc{\formthree\lolli\formtwo}}$, and every $\elctx\in\redc\form^{\bot}$. 
By \ih, $\tderiv_{\stmthree}$ is reducible. By \reflemma{reduc-implies-sn}, $\stmthree\in\redc\formthree$.

 By definition of $\la\svarthree\stmfour\in\settone_{\redc{\formthree\imply\formtwo}}$, we obtain 
$\cuta\stmthree\svarthree \stmfour  \in\redc\formtwo$. By \ih, $\tderiv_{\stmtwo}$ is reducible, thus we have:
\[\begin{array}{l\hcolspace l\hcolspace l}
\cuta{\sval_{i}}{\svar_{i}}_{\multiForm} \cuta{\cuta\stmthree\svarthree\stmfour}\svar\stmtwo &\in& \redc\form,
\end{array}\]	
That is:
\[\begin{array}{l\hcolspace l\hcolspace l\hcolspace l\hcolspace l}
	\stm'' &\defeq& \elctxfp{\cuta{\sval_{i}}{\svar_{i}}_{\multiForm} \cuta{\cuta\stmthree\svarthree\stmfour}\svar\stmtwo} & \in&\sn\cutsym.
	\end{array}\]
	Note that, since by hypotheses $\svartwo\notin\fv\stmthree \cup\fv\stmtwo$, the last term can be written as follows: 
\[\begin{array}{l\hcolspace l\hcolspace l}
	\stm'' &=& \elctxfp{\cuta{\sval_{i}}{\svar_{i}}_{\multiform} \cutsub{\la\svarthree\stmfour}{\svartwo} \suba\svartwo\stmthree\svar \stmtwo} 
		\end{array}\]	
By applying generalized root cut expansion (\reflemma{struct-stability-adequacy}) to $\stm''$, we obtain exactly $\stm' \in \sn\cutsym$.
	
	\item \emph{$\svartwo\hastype \formthree \imply \formtwo$ does appear in $\multiform$}, that is, $\multiform= \multiformtwo, \svartwo\hastype \formthree \imply \formtwo$ for some $\multiformtwo$, and the derivation is:
	\[
\AxiomC{$\tderiv_{\stmthree}\pof  \multiformtwo, \svartwo\hastype \formthree \imply \formtwo \vdash \stmthree\hastype\formthree$}
	\AxiomC{$\tderiv_{\stmtwo}\pof  \multiformtwo, \svartwo\hastype \formthree \imply \formtwo, \svar\hastype\formtwo \vdash \stmtwo\hastype\form$}
	\RightLabel{$ \implyLeftRule $}
	\BinaryInfC{$   \multiformtwo, \svartwo\hastype \formthree \imply \formtwo \vdash 
\suba\svartwo\stmthree\svar  \stmtwo \hastype\form$}
	\DisplayProof
	\]
 Let $\multiFormtwo = \svar_{1}\hastype\formtwo_{1},\mydots, 
\svar_{k}\hastype\formtwo_{k} $. We have to show that: 
\[\begin{array}{l\hcolspace c\hcolspace l\hcolspace l\hcolspace l}
\stm' &\defeq &
\elctxfp{\cuta{\sval_{i}}{\svar_{i}}_{ \multiFormtwo}\cuta{\la\svarthree\stmfour}\svartwo\suba\svartwo\stmthree\svar  \stmtwo} 
&\in&\sn\cutsym.
\end{array}\]
 for every $\sval_{i}\in\settone_{\redc{\formtwo_{i}}}$ for $i\in\set{1,\ldots,k}$, every 
$\la\svarthree\stmfour\in\settone_{\redc{\formthree\lolli\formtwo}}$, and every $\elctx\in\redc\form^{\bot}$. 
By \ih, $\tderiv_{\stmthree}$ is reducible, thus we have:
\[\begin{array}{l\hcolspace c\hcolspace l}
\cuta{\sval_{i}}{\svar_{i}}_{ \multiFormtwo} \cuta{\la\svarthree\stmfour}\svartwo \stmthree &\in& \redc\formthree.
\end{array}\]
Note that that term reduces to $\cutsub{\la\svarthree\stmfour}\svartwo \stmthree$, which is in $\redc\formthree$ by the closure of candidates under cut elimination (\reflemmap{candidates-rationale}{three}).

 By definition of $\la\svarthree\stmfour\in\settone_{\redc{\formthree\imply\formtwo}}$, we obtain 
$\cuta{\cutsub{\la\svarthree\stmfour}\svartwo\stmthree}\svarthree \stmfour  \in\redc\formtwo$. By \ih, $\tderiv_{\stmtwo}$ is reducible, thus we have:
\[\begin{array}{l\hcolspace c\hcolspace l}
\cuta{\sval_{i}}{\svar_{i}}_{ \multiFormtwo} \cuta{\la\svarthree\stmfour}\svartwo \cuta{\cuta{\cutsub{\la\svarthree\stmfour}\svartwo\stmthree}\svarthree\stmfour}\svar\stmtwo &\in& \redc\form.
\end{array}\]
That is:
\[\begin{array}{l\hcolspace l\hcolspace l\hcolspace l\hcolspace l}
	\stm'' &\defeq& \elctxfp{\cuta{\sval_{i}}{\svar_{i}}_{ \multiFormtwo} \cuta{\la\svarthree\stmfour}\svartwo \cuta{\cuta{\cutsub{\la\svarthree\stmfour}\svartwo\stmthree}\svarthree\stmfour}\svar\stmtwo} & \in&\sn\cutsym.
	\end{array}\]
	Note that $\stm''$ reduces to:
\[\begin{array}{l\hcolspace l\hcolspace l\hcolspace l\hcolspace l}
	\stm''' &\defeq& \elctxfp{\cuta{\sval_{i}}{\svar_{i}}_{ \multiFormtwo}  \cuta{\cuta{\cutsub{\la\svarthree\stmfour}\svartwo\stmthree}\svarthree\stmfour}\svar\cutsub{\la\svarthree\stmfour}\svartwo\stmtwo} & \in&\sn\cutsym.
	\end{array}\]
	And that $\stm'''$ can be written as follows: 
	\[\begin{array}{l\hcolspace l\hcolspace l}
	\stm''' &=& \elctxfp{\cuta{\sval_{i}}{\svar_{i}}_{\multiform} \cutsub{\la\svarthree\stmfour}{\svartwo} \suba\svartwo\stmthree\svar \stmtwo} 
		\end{array}\]	
By applying generalized root cut expansion (\reflemma{struct-stability-adequacy}) to $\stm'''$, we obtain exactly $\stm' \in \sn\cutsym$.
\qed	
	\end{itemize}
		\end{itemize}
\end{proof}

\begin{toappendix}
\begin{corollary}[Typable terms are SN]
Let $\stm$ be a typable term. Then $\stm\in\sn\cutsym$.
\end{corollary}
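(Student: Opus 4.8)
The plan is to read this corollary as the immediate payoff of the two main results of the section, namely the Adequacy theorem and the \emph{Intended meaning of reducibility} lemma. Adequacy tells us that every typing derivation is reducible, and the intended-meaning lemma tells us that reducibility of a derivation forces its subject term into the reducibility candidate $\redc\form$, which is in turn contained in $\sn\cutsym$. Chaining these two facts leaves essentially nothing to prove.

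Concretely, first I would unfold the hypothesis: saying that $\stm$ is typable means there exist a type context $\multiForm$ and a formula $\form$ together with a derivation $\tderiv \pof \multiForm \vdash \stm\hastype\form$. I would then apply the Adequacy theorem to $\tderiv$, obtaining that $\tderiv$ is reducible. Applying the \emph{Intended meaning of reducibility} lemma to the reducible derivation $\tderiv$ yields $\stm\in\redc\form$, and since $\redc\form$ is a candidate (by the \emph{Formula sets are candidates} proposition) it satisfies $\redc\form\subseteq\sn\cutsym$; the lemma in fact states this inclusion directly. Hence $\stm\in\sn\cutsym$, which is exactly the claim.

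The point worth stressing is that the corollary carries no genuine difficulty of its own: it is a one-line consequence of the machinery already assembled. All the real work sits upstream, in the Adequacy theorem, whose proof runs the bi-orthogonal reducibility method by induction on the typing derivation and must treat each rule of $\sproves$ (axiom, $\implyRightRule$, $\implyLeftRule$, $\cut$, and exchange) separately. The delicate cases there are the ones involving cuts, which rely on \emph{Generalized root cut expansion} together with the \emph{structural stability of SN} (from the strong bisimulation corollary) to push strong normalization through an eliminated cut. Thus, if I were to identify a "main obstacle" for the overall strong normalization result, it is entirely located in establishing Adequacy and the candidate-theoretic lemmas that feed it; the corollary itself is simply the step that harvests $\sn\cutsym$ from reducibility, and it requires no additional argument beyond the two citations above.
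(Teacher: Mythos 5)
Your proof is correct and is exactly the paper's own argument: unfold typability into a derivation $\tderiv$, apply Adequacy to conclude $\tderiv$ is reducible, and then invoke the \emph{Intended meaning of reducibility} lemma to get $\stm\in\redc\form\subseteq\sn\cutsym$. Nothing further is needed.
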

\end{toappendix}

\begin{proof}
Since $\stm$ is typable, we have $\tderiv\pof\svar_{1}\hastype\formtwo_{1},\ldots, \svar_{k}\hastype\formtwo_{k} \vdash 
\stm\hastype \form$, for some derivation $\tderiv$.  By adequacy (\reftm{adequacy}), $\tderiv$ is reducible. By 
\reflemma{reduc-implies-sn}, $\stm\in\sn\cutsym$.\qed
\end{proof}

\section{Proof of Strong Bisimulation for Structural Equivalence}
For the sake of completeness, in this section we give most details of the easy but tedious proof that structural equivalence is a strong bisimulation. The property is first stated in Appendix \ref{app:SN-preliminaries} (and recalled below) but omitting the proof.
\label{app:bisim}
\gettoappendix{prop:strong-bisim}
\begin{proof}
We first treat $\rcuteq$ (and its symmetric case) and then its context closure $\crcuteq$, defined as $\gsctxp\stm \crcuteq\gsctxp{\stm'}$ if $\stm\rcuteq\stm'$. In order to obtain the statement for $\cuteq$, one needs to further add closure under reflexivity, which is obvious,  and under transitivity, which is straightforward. 

Now, we deal with root structural equivalence $\rcuteq$ for cut, that is, we deal with the case:
	\[\begin{array}{c\hcolspace c\hcolspace cccc}
	\cuta\stmtwo\svar \wsctxp\stm &\rcuteq&  \wsctxp{\cuta\stmtwo\svar \stm}
	\end{array}\]
	With $\svar\notin\fv\wsctx$  and $\wsctx$ not capturing variables in $\fv\stmtwo$. Root structural equivalence for subtractions is analogous but simpler. We first consider the cases of reduction in $\cuta\stmtwo\svar \wsctxp\stm$:
\begin{itemize}
	\item \emph{A redex entirely in $\wsctx$}. Then:
	\[\begin{array}{c\hcolspace c\hcolspace cccc}
\cuta\stmtwo\svar \wsctxp\stm
& \stocut &
\cuta\stmtwo\svar \wsctxtwop\stm
\\[3pt]
\rcuteq && \rcuteq
\\[4pt]
\wsctxp{\cuta\stmtwo\svar \stm}
&\stocut &
\wsctxtwop{\cuta\stmtwo\svar \stm}
\end{array}\]

	\item \emph{A redex entirely in $\tm$}. Then:
	\[\begin{array}{c\hcolspace c\hcolspace cccc}
\cuta\stmtwo\svar \wsctxp\stm
& \stocut &
\cuta\stmtwo\svar \wsctxp{\stm'}
\\[3pt]
\rcuteq && \rcuteq
\\[4pt]
\wsctxp{\cuta\stmtwo\svar \stm}
&\stocut &
\wsctxp{\cuta\stmtwo\svar \stm'}
\end{array}
\]

	\item \emph{A redex entirely in $\stmtwo$}. Then:
	\[\begin{array}{c\hcolspace c\hcolspace cccc}
\cuta\stmtwo\svar \wsctxp\stm
& \stocut &
\cuta{\stmtwo'}\svar \wsctxp{\stm'}
\\[3pt]
\rcuteq && \rcuteq
\\[4pt]
\wsctxp{\cuta\stmtwo\svar \stm}
&\stocut &
\wsctxp{\cuta{\stmtwo'}\svar \stm'}
\end{array}
\]

	\item \emph{A redex of which the acting cut is the moving one}:
	\[\begin{array}{c\hcolspace c\hcolspace cccc}
\cuta{\lsctxp\sval}\svar \wsctxp\stm
& \stocut &
\lsctxp{\wsctxp{\cutsub\sval\svar\stm}}
\\[3pt]
\rcuteq && \cuteq
\\[4pt]
\wsctxp{\cuta{\lsctxp\sval}\svar\stm}
&\stocut &
\wsctxp{\lsctxp{\cutsub\sval\svar\stm}}
\end{array}
\]
	
	\item \emph{A redex such that the hole of $\wsctx$ falls in the cut sub-term of the acting cut}, that is, such that (given $\cuta\stmtwo\svar \wsctxp\tm \rcuteq  \wsctxp{\cuta\stmtwo\svar \tm}$) we have $\wsctx = \wsctxtwop{\cuta\lsctxtwo\svartwo\tmthree}$ and $\stm=\lsctxp\sval$:
	\[\begin{array}{c\hcolspace c\hcolspace cccc}
\cuta\stmtwo\svar \wsctxtwop{\cuta{\lsctxtwop{\lsctxp\sval}}\svartwo\tmthree}
& \stocut &
\cuta\stmtwo\svar \wsctxtwop{\lsctxtwop{\lsctxp{\cutsub\sval\svartwo\tmthree}}}
\\[3pt]
\rcuteq && \rcuteq
\\[4pt]
 \wsctxtwop{\cuta{\lsctxtwop{\cuta\stmtwo\svar\lsctxp\sval}}\svartwo\tmthree}
&\stocut &
 \wsctxtwop{\lsctxtwop{\cuta\stmtwo\svar\lsctxp{\cutsub\sval\svartwo\tmthree}}}
\end{array}
\]
	
	\item \emph{A redex involving a cut in $\wsctx$ acting on $\stm$}. Then for some $\stm'$ and $\wsctxtwo$ we have the following diagram:
	\[\begin{array}{c\hcolspace c\hcolspace cccc}
\cuta\stmtwo\svar \wsctxp{\stm}
& \stocut &
\cuta\stmtwo\svar \wsctxtwop{\stm'}
\\[3pt]
\rcuteq && \rcuteq
\\[4pt]
\wsctxp{\cuta\stmtwo\svar \stm}
& \stocut &
\wsctxtwop{\cuta\stmtwo\svar \stm'}
\end{array}
\]
\end{itemize}
	Now, for the cases where the $\stocut$ step is in $\wsctxp{\cuta\stmtwo\svar \stm}$ (rather than in $\cuta\stmtwo\svar \wsctxp{\stm}$) note that there cannot be a redex involving a cut in $\wsctx$ on a variable $\svartwo$ occurring in $\stmtwo$, because by hypothesis $\wsctx$ does not capture free variables of $\stmtwo$. Then the possible cases are simply those treated above for $\cuta\stmtwo\svar \wsctxp\stm$, just read backwards.
	
	For the inductive cases, that is, for $\crcuteq$, the only non trivial cases are those for cut, but we sketch the other ones as well:
	\begin{itemize}
	\item \emph{Abstraction}: it follows from the \ih
	\item \emph{Subtraction}: if the rewriting step and the $\cuteq$ are in the same sub-term then it follows from the \ih, otherwise they simply swap.
	
	\item \emph{Right sub-term of cut}, that is, $\cuta{\stmthree}\svartwo\stmfour \cuteq \cuta{\stmthree}\svartwo\stmfour'$ with $\stmfour \cuteq \stmfour'$. If the rewriting step takes place in $\stmfour$ then the rewriting and the equivalence step simply swap. If it takes place in $\stmthree$ then the strong bisimulation is given by the \ih Last, the rewriting step might involve the root cut. Let $\stmthree= \lsctxp\sval$. Then the diagram goes as follows:
	\[\begin{array}{c\hcolspace c\hcolspace cccc}
\cuta{\lsctxp\sval}\svartwo\stmfour
& \stocut &
\lsctxp{\cutsub\sval\svartwo\stmfour}
\\[3pt]
\crcuteq && \crcuteq
\\[4pt]
\cuta{\lsctxp\sval}\svartwo\stmfour'
& \stocut &
\lsctxp{\cutsub\sval\svartwo\stmfour'}
\end{array}
\]
Where the structural equivalence on the right side is given by \reflemmap{cuteq-subs}{one}.

	\item \emph{Left sub-term of cut}, that is, $\cuta{\stmthree}\svartwo\stmfour \cuteq \cuta{\stmthree'}\svartwo\stmfour$ with $\stmthree \cuteq \stmthree'$. If the rewriting step takes place in $\stmfour$ then the rewriting and the equivalence step simply swap. If it takes place in $\stmthree$ then the strong bisimulation is given by the \ih Last, the rewriting step might involve the root cut. Then let $\stmthree=\lsctxp\sval$. Sub-cases:
	\begin{itemize}
	\item The equivalence takes place in $\lsctx$. Then the diagram is:
		\[\begin{array}{c\hcolspace c\hcolspace cccc}
\cuta{\lsctxp\sval}\svartwo\stmfour
& \stocut &
\lsctxp{\cutsub\sval\svartwo\stmfour}
\\[3pt]
\crcuteq && \crcuteq
\\[4pt]
\cuta{\lsctxtwop\sval}\svartwo\stmfour
& \stocut &
\lsctxtwop{\cutsub\sval\svartwo\stmfour}
\end{array}
\]

	\item The equivalence takes place in $\sval$. Then the diagram is:
		\[\begin{array}{c\hcolspace c\hcolspace cccc}
\cuta{\lsctxp\sval}\svartwo\stmfour
& \stocut &
\lsctxp{\cutsub\sval\svartwo\stmfour}
\\[3pt]
\crcuteq && \cuteq
\\[4pt]
\cuta{\lsctxp\svaltwo}\svartwo\stmfour
& \stocut &
\lsctxp{\cutsub\svaltwo\svartwo\stmfour}
\end{array}\]
Where the structural equivalence on the right side is given by \reflemmap{cuteq-subs}{two}.\qed
	\end{itemize}
	\end{itemize}
\end{proof}
}

\end{document}